\crefname{section}{Section}{Sections}
\crefname{subsection}{Subsection}{Subsections}
\crefname{appendix}{Appendix}{Appendices}
\crefname{theorem}{Theorem}{Theorems}
\crefname{proposition}{Proposition}{Propositions}
\crefname{corollary}{Corollary}{Corollaries}
\crefname{lemma}{Lemma}{Lemmas}
\crefname{definition}{Definition}{Definitions}
\crefname{table}{Table}{Tables}
\crefname{assumption}{Assumption}{Assumptions}
\theoremstyle{thmstyleone}%
\newtheorem{theorem}{Theorem}[section]
\newtheorem{proposition}[theorem]{Proposition}%
\newtheorem{lemma}[theorem]{Lemma}%
\theoremstyle{thmstyletwo}%
\newtheorem{remark}[theorem]{Remark}
\newtheorem{assumption}[theorem]{Assumption}
\theoremstyle{thmstylethree}%
\newtheorem{definition}[theorem]{Definition}%
\numberwithin{equation}{section}
\def\eps{\varepsilon}
\newcommand{\gf}{{\Gamma}}
\newcommand{\gmc}{\mu}
\newcommand{\ind}{\mathds{1}}
\newcommand{\mL}{\mathscr{L}}
\newcommand{\mA}{\mathcal{A}}
\newcommand{\mC}{\mathcal{C}}
\newcommand{\C}{\mathcal{C}}
\newcommand{\B}{\mathcal{B}}
\newcommand{\R}{\mathbb{R}}
\renewcommand{\P}{\mathbb{P}}
\newcommand{\mW}{\mathcal{W}}
\newcommand{\E}{{\mathbb E}}
\newcommand{\T}{{\mathcal T}}
\newcommand{\be}{\begin{equation}}
\newcommand{\ee}{\end{equation}}
\newcommand{\widesim}{\sim}
\newcommand{\cb}{\color{black}}
\begin{document}
\title
    {\begin{center}
        Transport in multifractal Kraichnan flows:\\
	from turbulence to Liouville quantum gravity
    \end{center}
    }

\author*[1,2]{\fnm{André} L. P. \sur{Considera}}\email{andre.luis@impa.br}
\author*[3]{\fnm{Simon} \sur{Thalabard}}\email{simon.thalabard@univ-cotedazur.fr}
\affil*[1]{\orgname{Instituto de Matemática Pura e Aplicada -- IMPA}, 
\orgaddress{\state{Rio de Janeiro}, \country{Brazil}}} 

\affil[2]{
    \orgname{SPEC, CEA, CNRS, Université Paris-Saclay}, 
    \orgaddress{\street{Gif-sur-Yvette}, 
    \city{Paris}, \postcode{91191}, \country{France}}}

\affil*[3]{\orgname{Institut de Physique de Nice, Université Côte d’Azur CNRS - UMR 7010}, \orgaddress{\street{17 rue Julien Lauprêtre}, \city{Nice}, \postcode{06200}, \country{France}}}

\abstract{
We investigate the behavior of fluid trajectories in a multifractal extension of the Kraichnan model of turbulent advection. The model  couples  a one-dimensional, Gaussian, white-in-time random flow to a frozen-in-time Gaussian multiplicative chaos (GMC). The resulting velocity field features an interplay between the roughness exponent $\xi\in(0,2]$, controlling the correlation decay for the Gaussian component, and the intermittency parameter $\gamma \in [0,\sqrt {2}/2)$,  prescribing the deviations from self-similarity.  Recent numerical work by the authors suggests that such coupling induce a smoothing-by-intermittency effect, and the purpose here is to address this phenomenon theoretically. Using the theory of 1D Feller Markov processes,  
we characterize the phases  of the two-particle separation process upon varying $\xi$ and $\gamma$,  
extending to a multifractal setting   the stochastic/deterministic and colliding/non-colliding  
transitions known in the monofractal Kraichnan case. Our analysis distinguishes between two settings:  quenched  or annealed.  In the quenched setting, the GMC realization is prescribed, and we show 
that the phases are governed by the most probable Hölder exponent of the multifractal velocity field. In the annealed setting, the GMC is averaged over, leading to an additional smoothing effect. Moreover, we show that the separation process exhibits structural analogies with multiplicative one-dimensional versions of  the Liouville Brownian motion --- a diffusion process evolving in a random GMC   landscape, originally  introduced in the context of Liouville quantum gravity. In particular,  both the quenched and annealed phase transitions are recovered by considering a multiplicative LBM characterized by  a roughness parameter $\xi + 4\gamma^2$ and an intermittency exponent $\gamma$.
}

\keywords{Turbulence, Spontaneous stochasticity, 
Kraichnan model, 
Liouville quantum gravity,
Gaussian multiplicative chaos, 
Liouville Brownian motion}

\graphicspath{{figs/}}
\maketitle
\newpage
\tableofcontents
\newpage
\section{Introduction}\label{sec:intro}

The probabilistic description of Lagrangian dispersion by 
turbulent flows has its roots in the  work of Taylor 
and Richardson \cite{ taylor1922diffusion, richardson1926atmospheric}, 
who promoted the idea that multiscale carrier flows give rise to 
non-trivial transport properties. 
For example, it is now commonly accepted that in  
homogeneous isotropic incompressible turbulence,  
fluid trajectories separate explosively, 
with their mean square distance undergoing  
a phase of anomalous diffusive growth $\simeq t^{3}$, 
independent of the initial separation: This result
is known as Richardson's law 
\cite{jullien1999richardson,boffetta2002relative,biferale2005lagrangian,bitane2012time,bitane2013geometry,thalabard2014turbulent,bourgoin2015turbulent}.
On the other hand, the Lagrangian flow  
behaves quite differently in  compressible environments. 
The paradigmatic  examples are Burgers flows, where fluid particles collide and eventually coalesce 
into  single trajectories, leading to shock formation 
and concentration of mass into singular structures
\cite{frisch2002burgulence,bec2007burgers}.

These two contrasting behaviors —-- explosive separation in incompressible 
turbulence and coalescence in compressible Burgers flows --— can both be recovered,  
along with a wide range of intermediate regimes, in transport models  
driven by synthetic random velocity ensembles. 
In this setting, the velocity field is prescribed statistically 
rather than derived from fluid equations, allowing one to systematically 
explore how different statistical features of the flow influence 
Lagrangian transport. 
Solvable frameworks are  obtained when considering
$N$ tracer particles  subjected to thermal noise, evolving  in a quasi-Lagrangian frame 
\cite{chaves2003lagrangian,kupiainen2003nondeterministic,gawedzki2008soluble}.  Their collective motion relative to a reference trajectory $X_0$ is  then
defined by the 
stochastic differential equation (SDE) 
\begin{equation} \label{eq:ql_transport}
    dX_p = U(X_p-X_0,dt)+\sqrt{2\kappa}\, d\beta_p,
\end{equation} 
where $p=1, \dots, N$ denotes the label of the Lagrangian particle,
$U$ is a random field with prescribed statistics, and
$\{\beta_p\}_{p=1}^N$ is a family of $d$-dimensional
Brownian motions that are independent from one another and from $U$.
The parameter $\kappa > 0$ represents the amplitude of thermal noise.
One may identify different phases of the Lagrangian flow,
based on various dichotomic behaviors \cite{chaves2003lagrangian},    
by examining the separation process 
$R(t) := X_1(t) - X_0(t)$
and the associated transition function  
\be \label{eq:dicho1}
P_{t}^\kappa(r_0,dr) := \P_{r_0}\left(R(t) \in dr\right),
\ee  
obtained by averaging over realizations  
of both the velocity field and the thermal noise.
Of particular interest is the dichotomy revealed by the limiting property
\begin{equation}
	\label{eq:dicho2}
   \lim_{\kappa,r_0 \to 0}  P_t^\kappa(r_0,dr)\underset{(=)}{\neq}
   \delta(r)dr,
\end{equation}
 distinguishing between stochastic ($\neq$) and deterministic ($=$) 
behavior of trajectories evolving in quenched space-time 
realizations of the velocity $U$, and differing only by vanishingly 
small thermal diffusivity.
For $\kappa>0$, the presence of thermal noise causes 
the separation process to spread out, 
even when starting the process at zero distance.
The phenomenon of \emph{spontaneous stochasticity} --- named by analogy with 
spontaneous symmetry breaking in spin systems --- 
corresponds to the case where stochasticity remains even 
in the limit $\kappa \to 0$. 
In this regime, initially coincident fluid particles separate and reach 
$O(1)$ distances in finite time, forming a well-defined stochastic process 
already in typical space-time realizations of the velocity field 
and in the absence of thermal noise.
This signals a breakdown of the Lagrangian flow, with significant implications 
for the physics and mathematics of scalar 
transport phenomena 
\cite{gawedzki2008soluble,eyink2015spontaneous,drivas2017lagrangian,valade2023anomalous}. 
Another dichotomy refers to the ability of particles,  
initially separated by $r_0 > 0$, to collide and intersect their 
trajectories in finite time.  
This is captured by the behavior of the transition probability  
to zero separation
\begin{equation} \label{eq:dichotomy}
    \lim_{\kappa \to 0}  P_t^\kappa(r_0, \{0\})\underset{(=)}{>} 0,
\end{equation}
which distinguishes between colliding ($>$) 
and non-colliding ($=$) behavior.\\

The dichotomies described by the limiting properties  
\eqref{eq:dicho2} and \eqref{eq:dichotomy}  
characterize two independent facets of Lagrangian dynamics.  
Investigating which combinations of these regimes arise  
under different flow statistics yields a 
phase diagram characterizing turbulent transport.
The phase diagram can be rigorously analyzed in the case of the 
Kraichnan model of turbulent advection, 
which considers Gaussian velocity ensembles with non-trivial spatial structure
but white-in-time statistics 
\cite{kraichnan1968small,bernard1998slow,frisch1998intermittency, gawedzki2000phase,falkovich2001particles,le2002integration,kupiainen2003nondeterministic,le2004flows,gawkedzki2004sticky,gabrielli2008clustering}.
In the Kraichnan model, the possible behaviors of the Lagrangian flow  
depend on the interplay between key parameters  
of the Gaussian field,  
such as its spatial dimensionality, level of compressibility,  
and spatial scaling properties.
In its one-dimensional version, one considers the 
stochastic dynamics \eqref{eq:ql_transport} 
with the driving velocity field given by the integral formula \cite{robert2008hydrodynamic}
\begin{equation} \label{eq:u0}
    U_{0}(x, dt) = 
    \int_{\mathbb{R}} \phi(x-y)\,  \mW(dy,dt),
\end{equation}
where $\mW$ is a $(1+1)$-dimensional Brownian sheet
\cite{khoshnevisan2006multiparameter} 
(see also \cite[Exercise I.3.11]{revuz2013continuous}), 
formally expressed as integral of  
space-time white noise. 
The kernel $\phi$ is a suitably defined convolution square root 
of the spatial part of the Kraichnan correlation function,
\begin{equation}
    \E \left(U_0(x,t)U_0(x+r,t')\right) = \C(r)\min (t,t'), 
\end{equation}
satisfying the following asymptotic relation at small distances
\begin{equation}
    \label{eq:correlation}
    \phi * \phi = \C(r) \sim 1-|r|^\xi \quad\text{as} \quad r \to 0.
\end{equation}
Here, the parameter $\xi\in (0,2]$ represents the roughness exponent 
of the underlying Kraichnan flow,
ensuring that typical realizations of the velocity field $U_0$ 
are only Hölder continuous with exponent $H$ for any $H < \xi/2$.

In the one-dimensional Kraichnan model,
the phase transitions of the Lagrangian flow
are governed, to a large extent, by the roughness exponent $\xi$.
For smooth flows $(\xi = 2)$, the velocity field $U_0$ is Lipschitz continuous,
ensuring existence and uniqueness of solutions to the SDE \eqref{eq:ql_transport}.
As a result, trajectories initialized at the same point cannot branch,
and distinct trajectories cannot intersect.
This corresponds to equality in both dichotomies \eqref{eq:dicho2} 
and \eqref{eq:dichotomy}, signaling deterministic, 
non-colliding Lagrangian behavior.
For $\xi < 2$, the flow becomes rough, and a wider range of behaviors  
can be observed.  
If $1 \leq \xi < 2$, the Lagrangian flow remains deterministic  
but exhibits collisions.  
This represents the collapse of the Lagrangian flow,  
with fluid particles coalescing into a single trajectory  
upon impact.
On the other hand, for $0 < \xi < 1$,  
different trajectories still intersect,  
but whether they are able to branch out or not  
depends on the specific details of the regularization employed.  
This marks a non-universal phase,  
in which the limiting behavior of the Lagrangian flow  
is sensitive to the regularization procedure,  
allowing for both spontaneous stochasticity and deterministic behavior  
to emerge under different conditions.
These transitions can be determined by observing that,  
as a stochastic process, the relative separation maps to a Bessel process  
in effective dimension  
$d_e = \frac{2 - 2\xi}{2 - \xi}$  
\cite{gawedzki2000phase,gawedzki2002soluble,lawler2018notes}.  
More fundamentally, they arise as a direct consequence of  
Feller's theory for one-dimensional diffusions  
\cite{breiman1992probability,revuz2013continuous}.

Although convenient proxies for turbulent flows,  
self-similar velocity ensembles are not realistic  
in light of the intermittency phenomenon and Kolmogorov's 1962 (K62)  
refined self-similarity hypothesis,  
which relates the lack of self-similarity in turbulent environments  
to the emergence of scale-dependent lognormal statistics  
for the energy dissipation rate \cite{kolmogorov1962refinement}.
An explicit construction of intermittent (multifractal) random velocity fields  
accounting for these features was first proposed by Robert and Vargas (2008)  
\cite{robert2008hydrodynamic} and was subsequently refined in  
\cite{chevillard2010stochastic,chevillard2015peinture,pereira2016dissipative,chevillard2019skewed,reneuve2020flow}.
The common building block for these models is the
Gaussian multiplicative chaos (GMC) theory \cite{rhodes2014gaussian},  
a theory of random multifractal measures initiated in the pioneering work of Kahane  
\cite{kahane1985chaos}, which provides a rigorous random field construction  
of the Kolmogorov-Obukhov lognormal model of turbulent energy dissipation.\\

In this work, we introduce a multifractal extension of the Kraichnan model  
by coupling the Kraichnan velocity field $U_0$ with a frozen-in-time GMC.
More specifically, we consider velocity fields given by 
the modified integral formula
\begin{equation}\label{eq:velocity}
    U(x, dt) = 
   Z^{-1/2} \int_{\mathbb{R}} \phi(x-y)\, {e^{\gamma \gf (y)}} \mW(dy,dt),\;\;  Z:=e^{2\gamma^2 \mathbb E\left({\gf^2}\right)}.
\end{equation}
In this modified expression, the parameter $\gamma \in (0,\sqrt{2}/2)$ 
quantifies the level of intermittency, and 
$\gf$ is a log-correlated Gaussian field independent of $\mW$. 
The normalization factor $Z$ compensates 
the wild fluctuations of the exponential term $e^{\gamma\gf}$.
Expression \eqref{eq:velocity} defines a statistically 
homogeneous and isotropic velocity field, whose equal-time structure functions satisfy
the lognormal scaling \cite{chevillard2015peinture,chevillard2019skewed,reneuve2020flow}
\begin{equation}
   \label{eq:uscaling}
    \mathbb E\left(U(x+r,t)-U(x,t)\right)^{2p}
    \underset{r\to 0}{\sim} |r|^{\zeta_U(2p)},
\end{equation}
with scaling exponents given by
\begin{equation}\label{eq:zeta_U}
    \zeta_U(p)=\left(\frac{\xi}{2}+\gamma^2\right)p-\frac{\gamma^2p^2}{2},
\end{equation}
valid for $p<\frac{\xi}{2\gamma^2} + 1$.
The non-linear power-law spectrum 
captures one facet of the intermittency phenomenon,
making the field \eqref{eq:velocity}
a more faithful stochastic caricature of turbulence.
Note that, in this work, the fields $\gf$ and $\mW$ 
are assumed to be independent from one another. 
As a result, the odd moments vanish identically. 
While intermittent in the sense of its multiscaling properties,
the multifractal velocity ensemble considered in this work is
in particular not skewed, unlike the more realistic field constructed in \cite{chevillard2019skewed}.

Our purpose  is to investigate the phase transitions  
of the Lagrangian flow in such multifractal extension of the Kraichnan model,  
where fluid particles are advected by the   intermittent velocity field \eqref{eq:velocity}.  
Compared to the monofractal case, the term $e^{\gamma\Gamma}$ introduces yet another layer of randomness. 
Describing the separation process thus requires  
distinguishing between two statistical settings:  
the \emph{quenched setting}, in which the statistical behavior is studied  
in a pathwise, almost sure sense with respect to realizations of  
$\gf$ and $\mW$, and the \emph{annealed setting},  
where one  averages over both $\mW$
and the environmental noise $e^{\gamma\Gamma}$.  
In both settings,  a phase diagram for the Lagranian flows can be derived  in terms of  
the scaling exponent $\xi$ and the intermittency parameter $\gamma$, 
using tools from one-dimensional Feller Markov processes.
Our analysis indicates that Eulerian intermittent fluctuations have a smoothing effect on 
particle dispersion. This smoothing effect holds  true in both the quenched and the annealed settings. 
In the quenched case, we find that the behavior of the Lagrangian flow has a clear physical relation to the multifractal properties of the velocity field.
The phases are governed by the most probable Hölder exponent $\overline{H}= \xi/2 + \gamma^2$ --- in the sense of Parisi-Frisch multifractal formalism --- reflecting the fact that fluid trajectories spend most of their time in regions of space exhibiting typical velocity fluctuations.
As for the annealed setting, our results recover a mean-field argument  put forward in the previous numerical study \cite{considera2023spontaneous},
where  the smoothing effect of intermittency was observed in Monte-Carlo simulations of the separation process. 
As such,  the present work serves as a theoretical counterpart to \cite{considera2023spontaneous}.  
We point out, however, a slight difference in the approach:  
In~\cite{considera2023spontaneous}, expression~\eqref{eq:velocity} was used to model  
the Eulerian velocity field. In the present work, by contrast,  
expression~\eqref{eq:velocity} models a quasi-Lagrangian velocity,  
and this seemingly minor alteration proves fundamental in providing a solvable framework.\\

The separation process driven by our  multifractal Kraichnan flows resembles a diffusion evolving in
a random geometry induced by the GMC. 
The use of GMC as a random landscape, 
in which other random processes are allowed to evolve, 
has been explored previously in \cite{garban2016liouville,berestycki2015diffusion}, 
within the context of Liouville quantum gravity (LQG). 
In those works, a two-dimensional Brownian motion is coupled to the GMC, 
giving rise to the natural diffusion process in the random geometry 
of planar LQG, known as Liouville Brownian motion (LBM). 
In this work, we explicitly introduce a multiplicative generalization  
of the Liouville Brownian motion in one dimension,  
which we refer to as the \emph{multiplicative Liouville Brownian motion} (MLBM),  
and compare it with the multifractal Kraichnan diffusion  
from the viewpoints of phase transitions and multifractal behavior.

Specifically, the $(\xi,\gamma)$-MLBM is formally defined as the solution to the  
stochastic dynamics  
\begin{equation}\label{eq:MLBM1}
    d\B = |\B|^{\xi/2} e^{-\gamma\gf(\B)  
    + \gamma^2 \E(\gf(\B)^2)} d\beta.
\end{equation}
Setting $\xi = 0$  
recovers a one-dimensional version of the Liouville Brownian motion  
\cite{garban2016liouville}, while setting $\gamma = 0$ yields a  
process with the same statistics as the inter-particle separation in  
the (monofractal) Kraichnan model.  
The logarithmic singularity in the correlation function of $\gf$ at  
small distances renders the expression above non-rigorous. To  
rigorously define the MLBM, we therefore follow \citet{garban2016liouville} and  
construct it via a time-change method.
Although they exhibit different phase transitions for a given  
pair $(\xi, \gamma)$, the phase transitions of the multifractal Kraichnan model  
can be derived from those of the MLBM  
by performing a suitable change of parameters.
By applying the mapping $\xi \mapsto \xi + 4\gamma^2$  
in \eqref{eq:MLBM1}, one recovers exactly the phase transitions  
of the multifractal Kraichnan model, in both the quenched and annealed settings.
Conversely,  
the transformation $\xi \mapsto \xi - 4\gamma^2$  
maps the multifractal Kraichnan model into the MLBM  
in a similar fashion.

Ending this introduction with a conceptual note,
GMC is a theory whose inception was rooted in the K62 theory of turbulence, 
but it has since expanded significantly to other 
areas of physics, most notably to LQG.  
In the present work, 
the comparison between the MLBM and the multifractal Kraichnan process  
illustrates how mathematical tools from LQG, based on GMC theory, 
can be applied to the study of turbulence when properly adapted.
It also provides a natural way to describe 
turbulent transport using the language of random geometries,
extending to the Lagrangian setting a mathematical bridge 
--- mediated by GMC theory --- 
between LQG and statistical theories of turbulence.\\

The paper is organized as follows.  
Section \ref{sec:Kraichnan} recalls the 1D (monofractal) Kraichnan theory  
for the separation process and its connection to Feller's 
theory of one-dimensional diffusions. 
The definition of the separation process 
in multifractal Kraichnan flows is given in Section \ref{sec:direct}.
Section \ref{sec:mfk-phases}
discusses the colliding/non-colliding and stochastic/deterministic dichotomies  
in terms of the parameters $\xi$ and $\gamma$.  
The multifractal properties of the separation process are discussed 
in Section \ref{sec:multifractal}, 
while Section \ref{sec:MLBM} explores analogies with 
Liouville Brownian motion.
Section \ref{sec:remarks} formulates concluding remarks.

\section{Phase transitions in the 1D Kraichnan model} \label{sec:Kraichnan}

In this section, we describe the  phases of the Lagrangian flow in the one-dimensional  
Kraichnan model, in terms  of the roughness of the velocity field and regularization procedures.
The results presented here are not new, and are likely familiar to experts in turbulence theory. Our goal, however, is to rederive them  
using a set of mathematical tools --- namely, Feller's theory for one-dimensional diffusions, speed measures, and time-changed Brownian motions \cite{breiman1992probability,revuz2013continuous}
--- that may be less familiar to the mathematical physicist.  This background section sets up the machinery that  
will be used later in  Sections  \ref{sec:direct} and \ref{sec:mfk-phases} to rigorously analyze the phase transitions arising 
in the multifractal Kraichnan model ---  the main contribution of this work.

\subsection{The regularized separation process}
\subsubsection{Characterization as a diffusion process}

We begin by considering the regularized Lagrangian dynamics,  
in which particle motion is smoothed by the effects of viscosity  
and thermal diffusivity.
This is achieved by introducing a  
regularized version of the filtering kernel appearing in the velocity field  
\eqref{eq:u0}.
For clarity and simplicity, 
we follow \citet{robert2008hydrodynamic} (RV) and explicitly define the
regularized kernel by the expression
\begin{equation}
     \label{eq:rvkernel}
    \phi_\eta(x)=L^{-\xi/2}\psi\left(\dfrac{x}{L}\right)\frac{x}{|x|_{\eta}^{3/2-\xi/2}}.
\end{equation}
Here, $|\cdot|_\eta$ denotes a regularized norm with 
small-scale cutoff $\eta$, such that $|r|_\eta=|r|$ for $|r| >\eta$, 
and smoothly varies down to $|0|_\eta=O(\eta)$. 
The function $\psi(x)$ is a smooth, radially symmetric function equal  
to $1$ when $|x| \leq 1$ and vanishing when $|x| > 2$. It plays the role  
of a large-scale cutoff, localizing the field at scales smaller than $L$.  
The parameter $L > 0$ represents the (large-scale) velocity correlation length, with  
the prefactor $L^{-\xi/2}$ ensuring that the field is dimensionless.  
The regularized Kraichnan field $U_{0,\eta}$ is then obtained
by 
replacing $\phi$ with $\phi_\eta$ in expression \eqref{eq:u0}.

In what follows, we consider the dynamics \eqref{eq:ql_transport}
with $U_{0,\eta}$ as the driving field.
In this case, the two-particle separation $R(t):=X_1(t)-X_0(t)$
satisfies
\begin{equation}\label{eq:ql_model2}
    dR=  \Delta U_{0,\eta}(R,dt)+ \sqrt{2\kappa}\, d\beta,
\end{equation}
where the quasi-Lagrangian velocity is given by 
\begin{equation}  
    \Delta U_{0,\eta}(r,dt) := \int_{\R} \left( \phi_\eta(r - y) -  
    \phi_\eta(-y) \right) \mW(dy, dt),
\end{equation}
and $\beta$ is a Brownian motion independent of the Brownian sheet $\mW$.
The small-scale regularization renders the field  
$\Delta U_{0,\eta}(x, dt)$ Lipschitz continuous, allowing one to  
apply classical results from stochastic analysis to guarantee 
the existence of a unique solution to Eq.\eqref{eq:ql_model2}.
The process $R(t)$ is a one-dimensional diffusion process taking values on $\R$,
with generator
\begin{equation}\label{eq:quad-gen}
	\mL_K = \mA_{\eta,\kappa}(r)\frac{\mathrm{d^2}}{\mathrm{d}r^2}, 
\end{equation}
where the regularized diffusion coefficient is given by 
\begin{equation}\label{eq:coefficient kraichnan}
	\mA_{\eta,\kappa}(r)=\frac{1}{2}\int_\R dz\left[\phi_\eta(r-z) - \phi_\eta(-z)\right]^2
    + \kappa.
\end{equation}
The (symmetric) coefficient  $\mA_{\eta,\kappa}(r)$ 
represents the second-order structure function 
$\E\left(U_{0,\eta}(x+r)-U_{0,\eta}(x)\right)^2$
of the Kraichnan velocity field \eqref{eq:u0}, 
under the $(\eta,\kappa)$-regularization.

\subsubsection{Characterization as a time-changed Brownian motion}

In order to investigate the phase transitions in the 
one-dimensional Kraichnan model, it is useful to introduce the 
speed measure of the regularized separation process $R$.
The speed measure, denoted by ${\frak m}_{\eta,\kappa}$,
quantifies how quickly the process moves through different regions of 
the state space compared with a Brownian motion. Large values of the speed measure 
correspond to regions where particles slowly separate, 
while small values indicate regions where separation  
occurs more rapidly.
It recovers the escape time from $r$ out of an interval  $(r_1,r_2)\ni r$  as
\be
	\label{eq:escape}
	\mathbb E_r T_{r_1} \wedge T_{r_2}  = \int_\R G(r,y) {\frak m}_{\eta,\kappa}(dy), 
\ee
where $T_\rho := \inf\{ t > 0 : R(t) = \rho \}$ denotes the  
first hitting time of separation $\rho$, and $a \wedge b$ stands for  
the minimum of $a$ and $b$. The function $G$ is the Green function  
for the one-dimensional Laplacian on $(r_1,r_2)$ with Dirichlet boundary  
conditions,
conventionally defined as the solution to
\be
	- \frac{1}{2} \partial_{rr} G =\delta(r-z) \quad \text{ if } r,z \in (r_1,r_2) \quad \text{and} \quad G(r,z)= 0 \quad \text{otherwise},
\ee
and with explicit expression 
\begin{equation}
	G(r,z) = \frac{2(r\wedge z-r_1)(r_2-r\vee z)}{(r_2-r_1)},
\end{equation}
where it is not vanishing.
In particular, the Brownian escape times are recovered by  applying the transformation  ${\frak m}_{\eta,\kappa}(dz)  \mapsto dz$  
in Eq.~\eqref{eq:escape}.\\

For the regularized separation process~\eqref{eq:ql_model2}, the speed measure is given by  
(cf.~\cite{revuz2013continuous}, Exercise~VII.3.20)
\begin{equation}\label{eq:speed_regularized}
	{\frak m}_{\eta,\kappa}(dr) = \dfrac{dr}{2 \mA_{\eta,\kappa}(r)},
\end{equation}
and provides a complete description of the statistics  
of the inter-particle separation in the one-dimensional Kraichnan model.
In fact, the infinitesimal generator $\mL_K$ can be written in terms  
of ${\frak m}_{\eta,\kappa}$ alone as
\be
    \label{eq:generator_speed}  
    \mL_K = \dfrac{1}{2} \, \frac{\mathrm{d}}{\mathrm{d} \frak{m}_{\eta,\kappa}}  
    \frac{\mathrm{d}}{\mathrm{d} r}.
\ee
At a fundamental level, the relevance of the speed measure stems from  
its role in representing the separation process as a time-changed  
Brownian motion. 
Indeed, 
an application of Itô's formula, 
as presented in~\cite[pp.~58--59]{bakry2013analysis},
leads to the following equality in law 
\begin{equation}\label{eq:R=BM}
    R(t) \stackrel{\text{law}}{=} B(\tau_{\eta,\kappa}(t)),  
\end{equation}
where $B$ is a Brownian motion and 
\be
	\label{eq:dds}
	\tau_{\eta,\kappa} = \inf \left\lbrace s\ge 0, \mC_{\eta,\kappa}(s)>t \right\rbrace,\quad \mC_{\eta,\kappa}(t)=\int_0^t \dfrac{ds}{2\mA_{\eta,\kappa}(B_s)}.
\ee
In this representation, $\tau_{\eta,\kappa}$ is the  
quadratic variation of $R$,  
prescribing the growth $R(t)^2 \sim \tau_{\eta,\kappa}(t)$.  
It can be heuristically interpreted  
as the natural (proper) timescale of the process.  
On the other hand, $C_{\eta,\kappa}$ can be viewed as a clock process,  
measuring how long the separation process takes to traverse each portion  
of the state space. 
Both quantities are non-decreasing functions of time,  
and expression \eqref{eq:dds} means that  
$\tau_{\eta,\kappa}$ is the generalized right-continuous inverse  
of $C_{\eta,\kappa}$.
Observe that, due to the presence of regularization,  
the coefficient $\mA_{\eta,\kappa}(r)$ is strictly positive and  
uniformly bounded below in $r$. This implies that  
$\C_{\eta,\kappa}(t) < \infty$ for each $t$. In other words,  
the separation process does not get stuck in any region of the state space, 
and in particular,  
particles do not coalesce upon collision.  
We stress that representation \eqref{eq:R=BM}
is non-trivial: it
allows one to construct realizations of the regularized separation process 
using the Brownian path $B$ as the only input. 
In doing so, the clock process is expressed entirely as a functional  
of Brownian motion, and the resulting separation process is measurable  
with respect to $B$.

The role of the speed measure becomes explicit upon introducing 
the Brownian local time $\ell(r,t)$,  which we recall is  associated to the amount of time the Brownian spends at position $r$. 
By applying the occupation times formula \cite[Corollary VI.1.6]{revuz2013continuous}, 
the clock process can be represented as  (see, \emph{e.g.} \cite[Chapter IV]{borodin2015handbook})
\be
	\label{eq:tauc-m}
	\mC_{\eta,\kappa}(t) = \int_\R \ell(y,t)\frak m_{\eta,\kappa}(dy),\quad \ell(y,t) = \lim_{\epsilon \downarrow 0} \dfrac{1}{2\epsilon}\int_0^t ds\,\mathds{1}_{B_s \in (y-\epsilon,y+\epsilon)}.
\ee
Thus, in addition to being defined by the SDE \eqref{eq:ql_model2},  
the regularized inter-particle separation can also be realized  
as  time-changed Brownian motions, parametrized  through the speed measures \eqref{eq:speed_regularized} 
and relations \eqref{eq:R=BM}--\eqref{eq:tauc-m}. This  fundamental viewpoint  will be used extensively in this work.

\subsection{The unregularized separation process}\label{ssec:unregularizedKraichnan}
\subsubsection{Definition}
Formulation \eqref{eq:R=BM} 
provides a convenient framework for defining the 
unregularized separation process.
By setting $\eta = \kappa = 0$ in expression \eqref{eq:coefficient kraichnan},
one obtains the unregularized diffusion coefficient
$\mA \equiv \mA_{0,0}$. Its small-scale behavior is characterized by

\begin{lemma}\label{thm:A bound}
    Let $\xi\in(0,2]$. There exist constants 
    $C_+,C_->0$ such that, for all $r \in[-L,L]$,\\
\be   
	 C_- \left(\frac{|r|}{L}\right)^{\xi} \leq  \mA(r)\leq C_+ \left(\frac{|r|}{L}\right)^{\xi}.
   \ee  
\end{lemma}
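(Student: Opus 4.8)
The plan is to work directly with the integral defining $\mathcal A = \mathcal A_{0,0}$, namely $\mathcal{A}(r) = \tfrac12\int_{\R}\big[\phi(r-z)-\phi(-z)\big]^2\,dz$ with $\phi(x)=L^{-\xi/2}\psi(x/L)\,\mathrm{sgn}(x)\,|x|^{(\xi-1)/2}$ the $\eta=0$ kernel of \eqref{eq:rvkernel}, and to extract the power $|r|^\xi$ by a near-field/far-field dissection of the domain of integration. First I would record two reductions. The change of variables $z\mapsto -y$ rewrites the coefficient as $\mathcal{A}(r)=\tfrac12\int_\R[\phi(y+r)-\phi(y)]^2\,dy$, and oddness of $\phi$ then gives $\mathcal A(-r)=\mathcal A(r)$, so one may assume $r\in(0,L]$. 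Moreover $\phi$ is compactly supported with its only singularity the $L^2$ function $|x|^{(\xi-1)/2}$ (square-integrable precisely because $\xi>0$), so $\mathcal A$ is finite and continuous, and $\mathcal A(r)>0$ for $r\ne 0$ since $\phi(\cdot+r)=\phi$ is impossible for a nonzero compactly supported function; the genuinely delicate regime is therefore $r$ small, the range $r\asymp L$ (where $(r/L)^\xi\asymp 1$) being absorbed into the constants by continuity and positivity.

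For the lower bound I would discard most of the domain and keep only $y\in(-\tfrac34 r,-\tfrac14 r)$, an interval of length $r/2$ on which $\psi\equiv 1$. There $\phi(y)=-L^{-\xi/2}|y|^{(\xi-1)/2}$ and $\phi(y+r)=+L^{-\xi/2}|y+r|^{(\xi-1)/2}$ carry opposite signs, so the two contributions add rather than cancel: $[\phi(y+r)-\phi(y)]^2\ge L^{-\xi}|y+r|^{\xi-1}\ge c_\xi\,L^{-\xi}r^{\xi-1}$, and integrating gives $\mathcal A(r)\gtrsim L^{-\xi}r^\xi$. This step is robust and insensitive to the value of $\xi\in(0,2]$.

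The upper bound carries the real content, and the point is that the near and far contributions are controlled by genuinely different mechanisms. I would split the $y$-integral at $|y|=2r$. On $\{|y|\le 2r\}$ one may be wasteful: $[\phi(y+r)-\phi(y)]^2\le 2\phi(y+r)^2+2\phi(y)^2$ and $|\phi(u)|\le L^{-\xi/2}|u|^{(\xi-1)/2}$, so this part is $\lesssim L^{-\xi}\int_{|u|\le 3r}|u|^{\xi-1}\,du\asymp L^{-\xi}r^\xi$, using again $\xi>0$. On $\{|y|>2r\}$, by contrast, $\phi(y+r)$ and $\phi(y)$ have the same sign and comparable size, so one must exploit their near-cancellation: the mean value theorem applied to $t\mapsto\psi(t/L)|t|^{(\xi-1)/2}$ yields $|\phi(y+r)-\phi(y)|\lesssim r\big(|y|^{(\xi-3)/2}+L^{-1}|y|^{(\xi-1)/2}\big)$ for $2r<|y|<2L$, the integrand vanishing beyond $|y|=2L$. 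Squaring and integrating, the $L^{-1}|y|^{(\xi-1)/2}$ term contributes $\lesssim L^{-\xi-2}r^2\int_0^{2L}|y|^{\xi-1}\,dy\lesssim L^{-2}r^2\le L^{-\xi}r^\xi$, while the decisive first term contributes $\lesssim L^{-\xi}r^2\int_{2r}^{2L}|y|^{\xi-3}\,dy$, and this is exactly where the hypothesis $\xi<2$ enters: then $\int_{2r}^{2L}|y|^{\xi-3}\,dy\lesssim r^{\xi-2}$, which promotes the spurious $r^2$ into the desired $r^\xi$.

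I expect this far-field cancellation estimate to be the main obstacle: it is the step that forces the exponent to come out right, and it is also where the large-scale cutoff $\psi$ is indispensable, since it is what truncates the $y$-integral at scale $L$ — at the borderline exponent $\xi=2$ the scaling integral $\int|y|^{\xi-3}\,dy$ is only logarithmically convergent and the cutoff is essential merely for $\mathcal A$ to be finite, so I would treat that endpoint by a separate direct estimate. The remaining ingredients — the thin transition layers where $\psi$ varies, the shell $2L-|y|=O(r)$ at the edge of the support (where one uses that $\psi$ vanishes at $\pm2$), and the range $r\asymp L$ — are then disposed of by routine estimates of the same flavour. A parallel, slightly slicker route, at the cost of computing the Fourier transform of the cutoff kernel, would start from $\mathcal A(r)=\int\widehat{\mathcal C}(k)(1-\cos kr)\,dk$ with $\widehat{\mathcal C}(k)=|\widehat\phi(k)|^2\asymp|k|^{-1-\xi}$ at large $|k|$ and split the $k$-integral at $|k|=1/r$, the same $\xi<2$ convergence appearing there.
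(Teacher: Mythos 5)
Your proposal is correct (on $\xi\in(0,2)$) but takes a genuinely different route from the paper's. The paper's proof is a one-line scaling argument: the change of variables $z\mapsto rz$ in the defining integral factors out $|r|^{\xi}$ exactly, leaving $\mA(r)=C(r)|r|^{\xi}$ with the prefactor $C(r)$ of Eq.~\eqref{eq:cd-bis}; one then argues that $C$ is continuous (by dominated convergence) and strictly positive on the compact interval and invokes the extreme value theorem to produce $C_{\pm}$. Your near-field/far-field dissection with the mean value theorem reaches the same conclusion by explicit estimates. What the paper's route buys is brevity; what yours buys is transparency about where each hypothesis enters: the near-field term requires $\xi>0$ for local square-integrability of the kernel, and the far-field cancellation requires $\xi<2$ for $\int |y|^{\xi-3}\,dy$ to converge at the lower limit — facts that in the paper's argument are hidden inside the unproved assertions that $C(0)<\infty$ and that a dominating function exists uniformly in $r$ (the rescaled integrand decays like $|z|^{\xi-3}$ at infinity, so verifying those assertions is exactly your far-field estimate in disguise). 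You are also right to flag the endpoint $\xi=2$: there the rescaled integral diverges logarithmically, $C(r)\sim\log(L/r)$ as $r\to0$, so the stated upper bound picks up a logarithmic correction $\mA(r)\lesssim r^{2}\log(L/r)$; this is a borderline case that the paper's continuity-plus-compactness argument silently glosses over (continuity of $C$ at $r=0$ fails there), and which you correctly isolate for separate treatment rather than claiming it follows from the same mechanism.
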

In short, $\mA \approx (|r|/L)^\xi$ on $[-L,L]$.  This property comes from the direct calculation  \cite{robert2008hydrodynamic}
\be
	\label{eq:cd}
	 \mA(r) = c_d\left(\frac{|r|}{L} \right)\left(\dfrac{|r|}{L}\right)^\xi,\quad c_d(\rho) :=  \dfrac{1}{2} \int_\R dz\,\left|  \psi(\rho+\rho z)\dfrac{1+z}{|1+z|^{3/2-\xi/2}}-\psi(\rho z)\dfrac{z}{|z|^{3/2-\xi/2}}\right|^2,
\ee
together with the observation that  $c_d(\rho)$ is bounded on (0,1)---see Figure \ref{fig:L1} and proof in in \cref{sec:proofs}. 

Associated to $\mA$ are the unregularized speed measure
\begin{equation}\label{eq:speed3} 
	\frak{m}(dr)\equiv \frak{m}_{0,0}(dr)=\frac{dr}{2\mA}, 
\end{equation}
and the unregularized clock process 
\be \label{eq:unreg_clock}
	\mC(t)\equiv\mC_{0,0}(t) = \int_\R \ell(y,t)\frak m(dy).
\ee
The unregularized separation process is then \emph{defined}  
as the time-changed Brownian motion associated with the 
clock process \eqref{eq:unreg_clock}, through the unregularized counterpart of 
relation \eqref{eq:R=BM}. This defines a Markov process with continuous sample paths
for every typical realization of $B$, which formally solves the SDE \eqref{eq:ql_model2} with $\eta=\kappa=0$.

The clock process $\C$ is almost surely finite for times shorter  than the first hitting time of the origin by $B(t)$.
However, since $\mA(0)=0$,   it may attain infinite values when $B(t)$ hits the origin. This corresponds to the situation where particles collide with each other  and coalesce. This question will be addressed below, and goes along with the  identification of the various phases for  the unregularized Lagrangian flow in the one-dimensional Kraichnan model.

\begin{figure}
	\centering
	\includegraphics[width=0.42\textwidth]{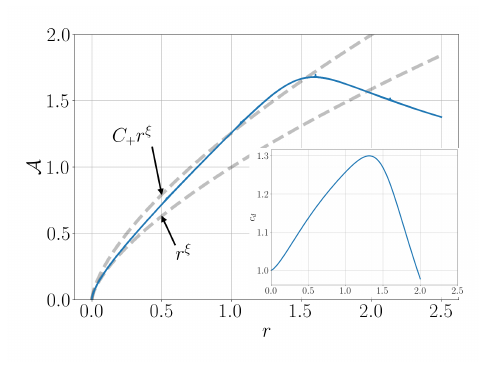}
	\includegraphics[width=0.42\textwidth]{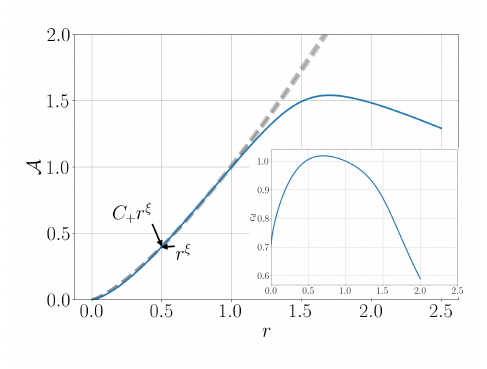}\\
	\includegraphics[width=0.42\textwidth]{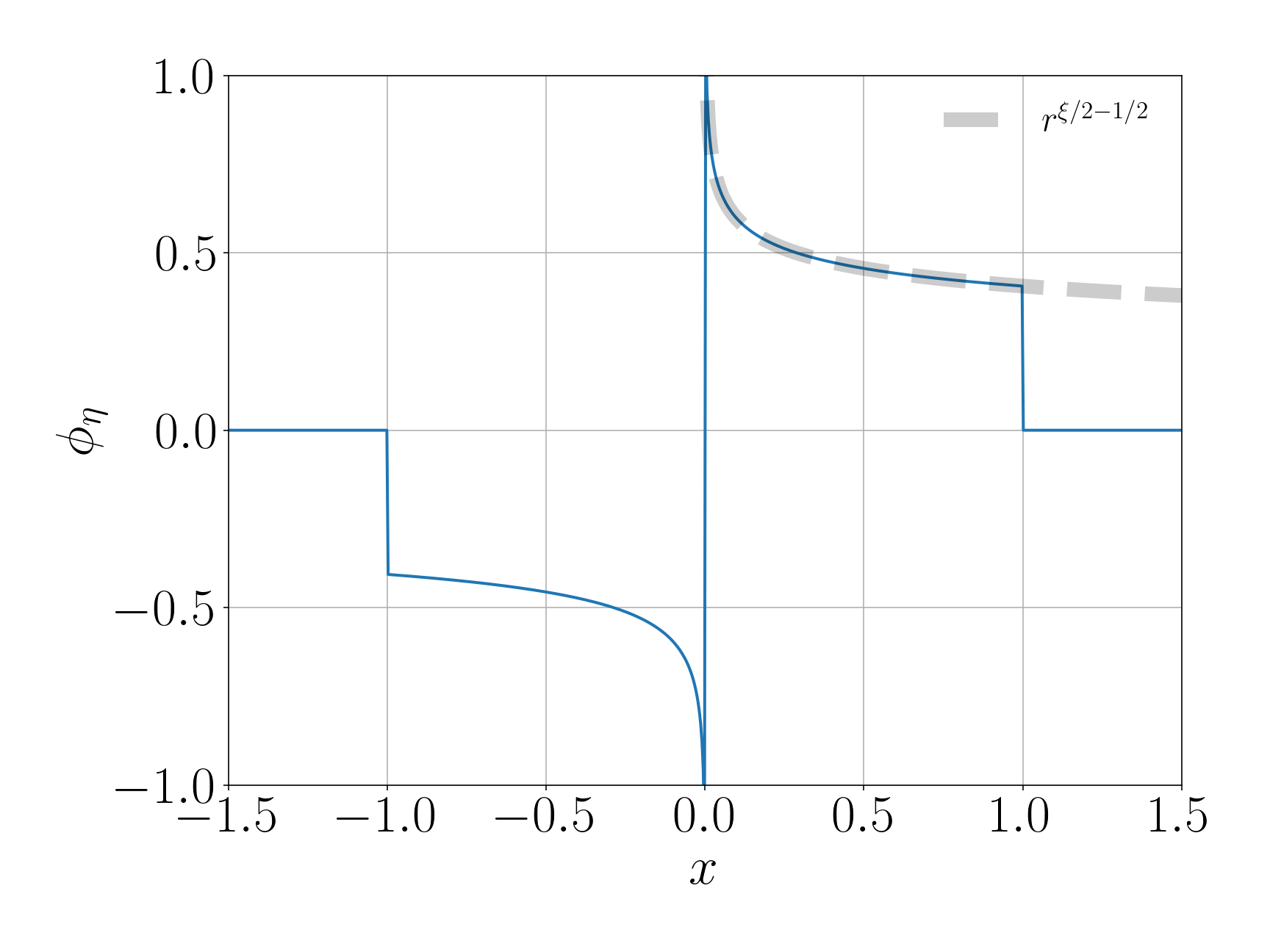}
	\includegraphics[width=0.42\textwidth]{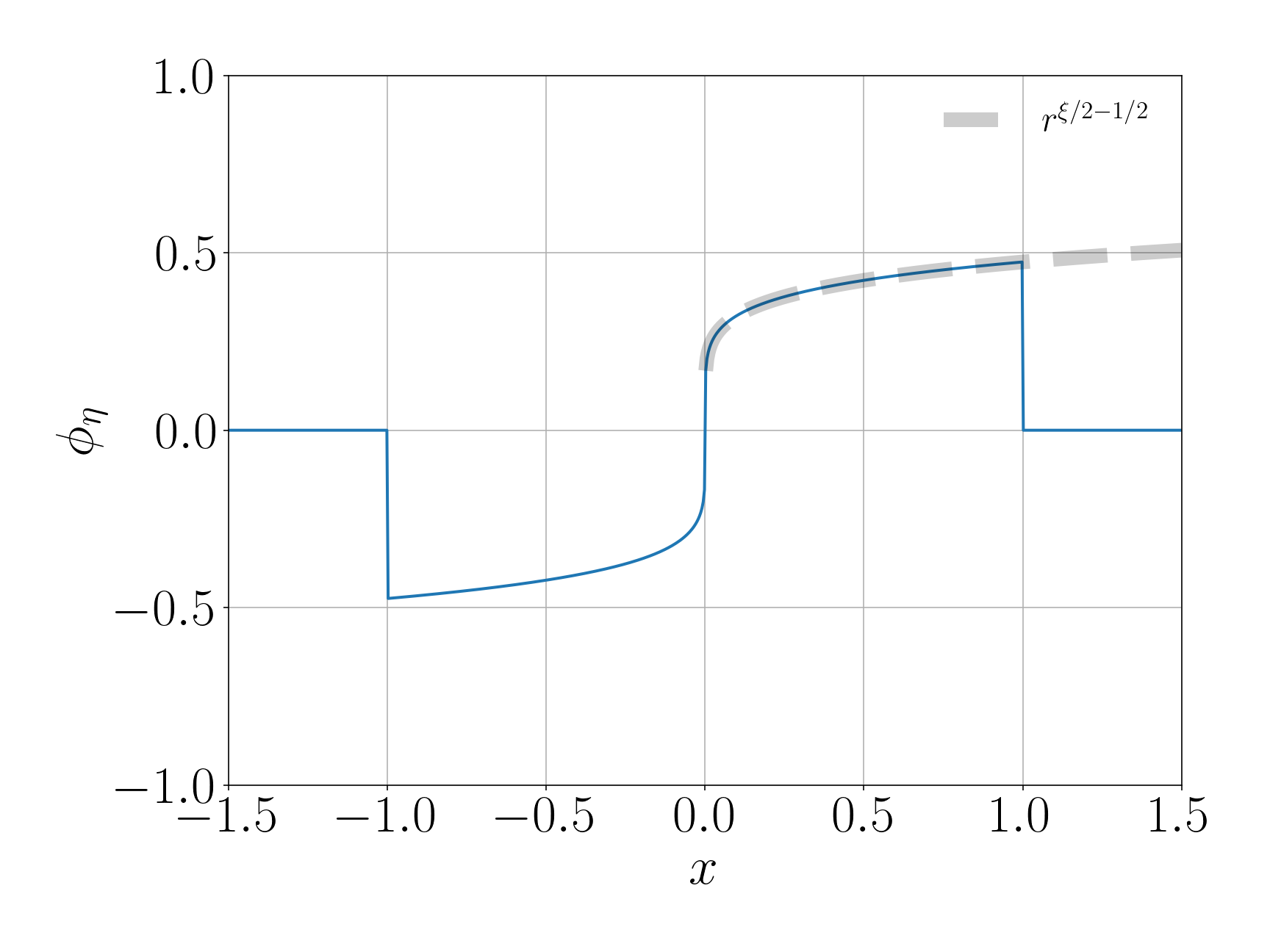}
	\caption{Top panels show the scaling coefficient $\mA$, with numerical estimates of the prefactors $c_d(r)$ given by Eq.~\eqref{eq:cd} in insets for $\xi =2/3$ (left) and $\xi=4/3$ (right). The bottom panels show the corresponding RV kernels. Numerics use $\eta=10^{-10}$ and   $\psi(x) = 1/2-\tanh(4x-1.5)/2$ }
	\label{fig:L1}
\end{figure}

\subsubsection{Boundary behaviors}
The different regimes of the unregularized Lagrangian flow
can be derived by applying  Feller's theory of one-dimensional diffusions 
and their boundary behavior  \cite{feller1954diffusion,breiman1992probability,ito2012diffusion} 
to the unregularized separation process.
In principle, the process $R$ can take both positive and negative 
values. 
However, since our focus is on the explosive separation or collapse of trajectories,
we may as well restrict $R$ to the positive half-line: this is equivalent to studying the norm of the separation process. 
The norm process is obtained by replacing the Brownian motion featured in the time-changed characterization  \eqref{eq:R=BM} by a Brownian motion reflected at zero.
Upon restricting it to the positive half-line, the pair-separation  process $R$ is a one-dimensional diffusion with state space  
contained in the interval $[0,\infty)$, and is fully characterized by its speed measure  
\begin{equation}\label{eq: full speed2}  
    \frak M(dr) = \mathds{1}_{(0,\infty)}(r)\, \frak{m}(dr)  
    + \frak M_0 \delta(dr),  
\end{equation}  
where $\frak{m}(dr)$ is given by \eqref{eq:speed3}, and  
$\delta(dr)$ denotes the Dirac measure at zero.  
While $\frak{m}(dr)$ governs the fluctuations of $R$ within the  
interval $(0,\infty)$, 
the atomic part $\frak M_0 \in [0,\infty]$ is related to the post-collisional 
behavior of fluid trajectories. 
Whether $\frak M_0$ plays a role, however, depends on 
the nature of the origin as a boundary point, within the 
framework introduced by Feller \cite{feller1954diffusion,breiman1992probability},
which we now study. \\

The phases of the Lagrangian flow are determined by the boundary type at $r = 0$.
Following the presentation of Breiman~\cite[Chapter~16]{breiman1992probability}, 
the origin is characterized according to its accessibility properties 
to and from non-vanishing separations $r \in (0, \infty)$,
expressed through the hitting times  $T_r:=\inf\{t>0: R(t)=r\}$ and conditional probabilities 
$\P_{r}\left(\cdot\right):=\P\left(\;\;\cdot \;\;\middle| \,R(0) = r\right)$.

The first criterion determines whether the point $0$ can be reached in finite time,
and is formalized in next definition.
\begin{definition}
	 The point $0$ is said to be an \textit{inaccessible (accessible) boundary}  if,  for all $r>0$,
	\begin{equation}
		\P_{r}\left(T_0< +\infty\right)\underset{(>)}{=}0. 
	\end{equation}
\end{definition}
If  accessible, $0$ can be reached in finite time with positive probability, thus belonging to the state space. 
In our context of particle separation, this situation describes finite-time collision between any two fluid particles.
By contrast, if $0$ is inaccessible, it cannot be reached in finite time and does not belong to the state space. 

The second criterion determines whether non-colliding trajectories can branch out in finite time,  
and leads to the following classification.
\begin{definition}
	Let $0$ be an inaccessible boundary. We call it \emph{natural} if, for all  $r, t > 0$,  
	\begin{equation}
		\lim_{r_0 \to 0^+} \P_{r_0}(T_r < t) = 0,
	\end{equation}
	and we call it \emph{entrance} if, for all $r > 0$, there exists $t > 0$ such that  
	\begin{equation}
		\lim_{r_0 \to 0^+} \P_{r_0}(T_r < t) > 0.
	\end{equation} 
\end{definition}
The boundary point $0$ being natural means that particles  
starting from distinct initial positions never collide, nor do fluid particles split trajectories. This regime intuitively connects with existence and uniqueness of solutions  to the flow equations. In contrast, if $0$ is an entrance boundary, particles never  collapse, but fluid trajectories may split, branching into distinct   paths in finite time --- thus giving rise to an effective stochastic  behavior.

Similarly, when  accessible, $0$  is called  \emph{regular} if  particles can escape and  \emph{exit}  otherwise.
In this case, these properties are determined by the speed measure \eqref{eq: full speed2}  
in neighborhoods of the origin, according to the next definition.
\begin{definition}
	Let $0$ be an accessible boundary. We call it \emph{exit (regular)}  if, for  
	all $\delta > 0$,  
	\begin{equation}\label{eq:exit}
		\frak{m}((0, \delta)) \underset{(<)}{=} +\infty.  
	\end{equation}	
\end{definition}
An exit boundary point is a point of no return: it can be reached in  
finite time, but the process cannot start from it.  That ${\frak m}$ takes arbitrarily large values in arbitrarily small  neighborhoods of $0$ means that the process spends arbitrarily large  
amounts of time near the origin --- regardless of the value assigned to  
the atomic part $\frak M_{0}$.  
In the context of Kraichnan flows, this situation corresponds to  different particle trajectories sticking together upon collision and  
coalescing into a single trajectory. 
In contrast, $\frak{m}$ is finite in neighborhoods of regular boundary points. 
In this case, whether the process spends an infinite amount of  time at the boundary is determined by the atomic component  $\frak{M}_0$.
If $0$ is a regular boundary, 
the quantity $\frak{M}_0$ 
decides the fate of fluid particles upon collision, 
ranging from coalescence ($\frak M_0=\infty$) to branching ($\frak M_0=0$). \\

The following diagram schematically summarizes the possible boundary  
types, with the arrow symbols denoting accessibility properties from  
and to $0$:  
\be
	\rotatebox{90}{\hspace{-0.75cm} \text{accessible} }
	\left \lbrace\hspace{1cm}
	\begin{split}
	& \text{regular:}\; \left\lbrace0\right\rbrace \;{ \huge \substack{\rightarrow\\ \leftarrow } }\;(0,\infty) \quad \quad\quad
	\text{entrance:}\; \left\lbrace0\right\rbrace \;{ \huge \substack{\rightarrow\\ \nleftarrow } }\;(0,\infty) \\
	& \text{exit:}\; \left\lbrace0\right\rbrace \;{ \huge \substack{\nrightarrow\\ \leftarrow } }\;(0,\infty) \quad \quad\quad\quad\,\,
	\text{natural:}\; \left\lbrace0\right\rbrace \;{ \huge \substack{\nrightarrow\\ \nleftarrow } }\;(0,\infty). \\
	\end{split}
	\hspace{1cm}
	\right\rbrace
	\rotatebox{270}{\hspace{-1cm} \text{inaccessible} }
\ee


\subsubsection{Phases of the  Lagrangian flow}
\label{ssec:phasesmonofractal}
The different regimes of the Lagrangian flow are determined
by two useful lemmas, which provide convenient  
computational criteria for determining the boundary type at the origin. 
\begin{lemma}\label{thm:closed_point}
    The point $0$ is an inaccessible (accessible) 
	boundary if and only if, $\forall \delta>0$,
    $$
    \int_{0}^{\delta}r \;{\frak m}(dr) \underset{(<)}{=}+\infty.
    $$
\end{lemma}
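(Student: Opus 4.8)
The statement relates accessibility of $0$ to the integral $\int_0^\delta r\,\mathfrak{m}(dr)$. This is Feller's classical test for the boundary, specialized to a diffusion in natural scale (i.e. with scale function $s(r)=r$, which is the case here since the generator \eqref{eq:generator_speed} has the form $\tfrac12 \frac{d}{d\mathfrak{m}}\frac{d}{dr}$). The plan is to invoke the standard Feller criterion in the form found in Breiman \cite[Chapter 16]{breiman1992probability} or Revuz-Yor \cite{revuz2013continuous}.

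\emph{Step 1: Set up via expected hitting times.} Fix $0<\delta$ and $0<r_0<\delta$. For the diffusion on $(0,\delta)$ started at $r_0$, absorbed at the endpoints, I would write down the expectation $\Ex_{r_0}[T_0\wedge T_\delta]$ using the Green-function representation \eqref{eq:escape}, namely $\Ex_{r_0}[T_0\wedge T_\delta]=\int_0^\delta G_{(0,\delta)}(r_0,y)\,\mathfrak{m}(dy)$ with $G_{(0,\delta)}(r_0,y)=2(r_0\wedge y)(\delta-r_0\vee y)/\delta$.

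\emph{Step 2: The key estimate.} Since $R$ is in natural scale, the probability $\Prob_{r_0}(T_0<T_\delta)=1-r_0/\delta>0$ for every $r_0\in(0,\delta)$; so $0$ is accessible (reachable before exiting $(0,\delta)$ with positive probability) precisely when the exit time $T_0\wedge T_\delta$ has finite expectation — equivalently, when it is a.s. finite — for some (hence every) $r_0$. The integral $\int_0^\delta G_{(0,\delta)}(r_0,y)\,\mathfrak{m}(dy)$ is finite if and only if $\int_0^\delta y\,\mathfrak{m}(dy)<\infty$: near $y=0$ the kernel $G_{(0,\delta)}(r_0,y)\asymp y$ (for $y<r_0$), while on $[r_0,\delta)$ it is bounded, and the behavior near $y=\delta$ is harmless since $\mathfrak m$ is finite there (as $\mathfrak A$ is bounded below away from $0$). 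Hence finiteness of $\Ex_{r_0}[T_0\wedge T_\delta]$ is equivalent to $\int_0^\delta y\,\mathfrak m(dy)<\infty$. Conversely, if $\int_0^\delta y\,\mathfrak m(dy)=+\infty$, then $\Ex_{r_0}[T_0\wedge T_\delta]=+\infty$; combined with $\Prob_{r_0}(T_\delta<\infty)=1$ whenever $T_\delta<T_0$... more carefully, one shows $T_0=+\infty$ a.s.\ on the event of non-exit through $\delta$, so $\Prob_{r_0}(T_0<\infty)\le \Prob_{r_0}(T_\delta<T_0)=r_0/\delta$, and letting the outer boundary recede (or a direct supermartingale argument on $R(t\wedge T_\delta)$, which is a bounded martingale, forcing $T_0<\infty$ only if $\Ex[T_0\wedge T_\delta]<\infty$) yields inaccessibility. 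I would phrase this cleanly through the dichotomy: $T_0\wedge T_\delta<\infty$ a.s.\ always (the diffusion cannot stay forever in a compact interval off its boundary unless trapped, which it is not here since $\mathfrak m$ is $\sigma$-finite on $(0,\delta)$), so accessibility of $0$ reduces to $\Prob_{r_0}(T_0<T_\delta)>0$ together with $\Ex_{r_0}[T_0\wedge T_\delta]<\infty$ — but actually the cleanest route is simply to cite Feller's test verbatim.

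\emph{Step 3: Conclude.} Matching with the Feller integrability criterion for accessibility of a boundary of a diffusion in natural scale gives exactly the stated equivalence, with the strict inequality $<\infty$ corresponding to the accessible case and $=+\infty$ to the inaccessible case.

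\emph{Main obstacle.} The only real subtlety is justifying the "always a.s.\ finite exit from a compact off-boundary interval" claim and correctly handling the boundary contribution at $y=\delta$ — but the latter is trivial because $\mathfrak A\geq C_-(\delta/L)^\xi>0$ there by Lemma \ref{thm:A bound}, so $\mathfrak m$ is a finite measure on any $[\epsilon,\delta]$. So essentially the proof is a one-paragraph reduction to Feller's classical boundary test plus the observation that the scale function is the identity; no genuinely new estimate is needed beyond Lemma \ref{thm:A bound}.
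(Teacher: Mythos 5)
Your proposal takes the same route as the paper: the paper states this lemma without proof, treating it as the classical Feller accessibility test for a diffusion in natural scale (the framework it imports from Breiman, Chapter~16), and your reduction via the Green-function representation of $\Ex_{r_0}[T_0\wedge T_\delta]$ together with the observation that the scale function is the identity is exactly the standard derivation of that test.

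One intermediate claim in your sketch is false, however, and worth flagging even though you ultimately abandon it in favor of citing Feller verbatim: it is \emph{not} true that $T_0\wedge T_\delta<\infty$ almost surely for every $r_0\in(0,\delta)$. In the inaccessible case (e.g.\ $\xi=2$, where $R$ behaves like a geometric Brownian motion near the origin), the process is a nonnegative local martingale that converges to $0$ without ever reaching it; on the event that it never hits $\delta$ (which has probability $1-r_0/\delta>0$ by the gambler's-ruin formula applied to interior levels $a\downarrow 0$) one has $T_0\wedge T_\delta=+\infty$. The $\sigma$-finiteness of $\frak m$ on $(0,\delta)$ does not prevent this, since the trapping occurs by asymptotic convergence to the boundary rather than by sticking at an interior point. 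The correct dichotomy is the one your Step~2 otherwise gives: if $\int_0^\delta y\,\frak m(dy)<\infty$ then $\Ex_{r_0}[T_0\wedge T_\delta]<\infty$, so the exit time is a.s.\ finite and $0$ is reached with probability $1-r_0/\delta>0$; if the integral diverges, a monotone-convergence argument over the hitting times $T_a$, $a\downarrow 0$, shows that $\lim_a T_a=+\infty$ on the event of approaching $0$, hence $T_0=+\infty$ a.s. Your garbled inequality $\P_{r_0}(T_0<\infty)\le\P_{r_0}(T_\delta<T_0)$ should be replaced by this limiting argument. With that repair the proof is complete and coincides with the classical one the paper invokes.
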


\begin{lemma}\label{thm:natural}
If $0$ is inaccessible, then it is natural.
\end{lemma}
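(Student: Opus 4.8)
The plan is to read the statement off Feller's boundary classification together with Lemma~\ref{thm:closed_point}. The key structural fact I would use is that the separation process is in \emph{natural scale}: its generator is $\mL_K=\tfrac12\,\tfrac{\mathrm{d}}{\mathrm{d}\frak{m}}\tfrac{\mathrm{d}}{\mathrm{d}r}$, so the scale function is $s(r)=r$. In this normalization the classification of the origin (see, e.g.\ \cite[Ch.~16]{breiman1992probability}) is controlled by two test integrals near $0$: the \emph{accessibility integral} $I_1:=\int_0^\delta r\,\frak{m}(dr)$ and the \emph{entrance integral} $I_2:=\int_0^\delta (\delta-r)\,\frak{m}(dr)$, for any fixed $\delta>0$ (finiteness of each being independent of $\delta$). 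The theory says: $0$ is accessible iff $I_1<\infty$ --- which is exactly Lemma~\ref{thm:closed_point} --- and, for an inaccessible boundary, $0$ is \emph{entrance} iff $I_2<\infty$ and \emph{natural} iff $I_2=+\infty$. Since entrance and natural exhaust the inaccessible case, the lemma reduces to the implication $I_1=+\infty\ \Rightarrow\ I_2=+\infty$.

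This implication I would prove by an elementary comparison, exploiting that near $0$ the weight $\delta-r$ dominates the weight $r$. Fix $\delta\in(0,L)$ and split the entrance integral at $\delta/2$; on $(0,\delta/2]$ one has $\delta-r\ge\delta/2\ge r$, hence
\[
I_2\ \ge\ \int_0^{\delta/2}(\delta-r)\,\frak{m}(dr)\ \ge\ \int_0^{\delta/2} r\,\frak{m}(dr)\ =\ I_1-\int_{\delta/2}^{\delta} r\,\frak{m}(dr).
\]
The subtracted term is finite because $\frak{m}$ is a Radon measure away from the origin: by Lemma~\ref{thm:A bound}, $\mA(r)\ge C_-(r/L)^{\xi}\ge C_-(\delta/2L)^{\xi}>0$ for $r\in[\delta/2,\delta]$, so $\frak{m}([\delta/2,\delta])=\int_{\delta/2}^{\delta}\tfrac{dr}{2\mA(r)}<\infty$ and therefore $\int_{\delta/2}^{\delta} r\,\frak{m}(dr)\le\delta\,\frak{m}([\delta/2,\delta])<\infty$. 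Since $0$ being inaccessible gives $I_1=+\infty$ by Lemma~\ref{thm:closed_point}, we conclude $I_2=+\infty$, i.e.\ $0$ is a natural boundary.

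The only genuinely non-routine ingredient is the correspondence between the \emph{probabilistic} entrance/natural dichotomy as stated in the paper --- phrased through $\lim_{r_0\to 0^+}\P_{r_0}(T_r<t)$ --- and the \emph{analytic} test integral $I_2$; I would simply invoke this from Feller's theory of one-dimensional diffusions in natural scale \cite{breiman1992probability}, exactly as the accessibility criterion underlying Lemma~\ref{thm:closed_point} is invoked. Everything else is the one-line inequality above, which uses nothing beyond the comparison $\delta-r\ge r$ on $(0,\delta/2]$, local finiteness of $\frak{m}$ on $(0,\infty)$, and the lower bound on $\mA$ from Lemma~\ref{thm:A bound}.
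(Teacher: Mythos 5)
Your proof is correct and follows essentially the same route as the paper, which disposes of this lemma by observing that the separation process is in natural scale (no drift) and citing \cite[Prop.~16.45]{breiman1992probability}; your comparison $\delta-r\ge r$ on $(0,\delta/2]$, combined with local finiteness of $\frak m$ away from the origin, is precisely the elementary computation that makes that citation ``immediate.'' The only caveat is that the equivalence between the probabilistic entrance/natural definitions and the test integral $I_2$ is itself taken from Feller's theory rather than proved, but the paper does exactly the same.
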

\begin{remark}
	\cref{thm:natural} is specific to the one-dimensional Kraichnan model,  
	where the unregularized separation process has no drift component,
	and is an immediate consequence of \cite[Prop.~16.45]{breiman1992probability}.  
	In the multidimensional version of the Kraichnan model, however,  
	the origin being inaccessible does not imply that it is natural.  
\end{remark}

We can now classify the possible behaviors of the boundary point $r=0$ for the unregularized separation process $R$.
This classification precisely reflects the stochastic/deterministic and colliding/non-colliding dichotomies 
at the level of the Lagrangian flow, expressed previously in terms of the transition kernels in Eqs.~\eqref{eq:dicho2} and \eqref{eq:dichotomy}.
\begin{theorem}\label{thm:phase}
	Let $\xi \in (0,2]$. The boundary point $0$ is
		\emph{(I)} \emph{regular}, if $\xi < 1$;\quad 
		\emph{(II)} \emph{exit}, if $1 \leq \xi < 2$;\quad
		\emph{(III)} \emph{natural}, if $\xi = 2$.

\end{theorem}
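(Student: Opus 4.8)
The plan is to combine the computational criteria in \cref{thm:closed_point} and \cref{thm:natural} with the sharp two-sided bound $\mA(r)\asymp (|r|/L)^\xi$ from \cref{thm:A bound}. Since the speed measure near the origin is $\frak m(dr)=dr/(2\mA(r))\asymp (L/|r|)^\xi\,L^{-?}\,dr$ up to the constants $C_\pm$, the behavior of all the relevant integrals is governed by the power law $|r|^{-\xi}$, and the dichotomies reduce to elementary convergence/divergence of one-dimensional integrals $\int_0^\delta r^{a}\,dr$.

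First I would settle accessibility using \cref{thm:closed_point}: the criterion asks whether $\int_0^\delta r\,\frak m(dr)$ is finite. By \cref{thm:A bound} this integral is comparable to $\int_0^\delta r^{1-\xi}\,dr$, which converges iff $1-\xi>-1$, i.e.\ iff $\xi<2$. Hence $0$ is \emph{accessible} for $\xi\in(0,2)$ and \emph{inaccessible} for $\xi=2$. Combined with \cref{thm:natural}, the case $\xi=2$ is immediately \emph{natural}, which gives part (III).

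Next, for $\xi\in(0,2)$ I would distinguish exit from regular via \cref{thm:exit} (the definition with $\frak m((0,\delta))=+\infty$ for exit, finite for regular). Again by \cref{thm:A bound}, $\frak m((0,\delta))$ is comparable to $\int_0^\delta r^{-\xi}\,dr$, which is finite iff $-\xi>-1$, i.e.\ iff $\xi<1$, and infinite iff $\xi\ge 1$. Therefore $0$ is \emph{regular} for $\xi\in(0,1)$ — giving part (I) — and \emph{exit} for $\xi\in[1,2)$ — giving part (II). The only mild subtlety is the boundary case $\xi=1$: then $\int_0^\delta r^{-1}\,dr=+\infty$, so the lower bound $C_-\int_0^\delta r^{-1}\,dr$ already forces $\frak m((0,\delta))=+\infty$, placing $\xi=1$ in the exit class, consistently with the statement $1\le\xi<2$.

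The only real work is making sure the constants from \cref{thm:A bound} do not interfere: since both bounds in that lemma are the \emph{same} power of $|r|$ up to multiplicative constants, each integral above is squeezed between two constant multiples of the same power-law integral, so finiteness is decided purely by the exponent and no borderline ambiguity survives except the genuine threshold values $\xi=1$ and $\xi=2$, which are handled as above. I expect the main (minor) obstacle to be bookkeeping at these two threshold exponents and confirming that \cref{thm:A bound} is valid up to and including $\xi=2$ (it is, by hypothesis $\xi\in(0,2]$), so that the $\xi=2$ case is covered by the accessibility computation rather than requiring a separate argument.
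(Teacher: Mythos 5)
Your proposal is correct and follows essentially the same route as the paper: the two-sided bound of \cref{thm:A bound} gives $\frak m(dr)\asymp r^{-\xi}dr$ near the origin, \cref{thm:closed_point} then yields accessibility iff $\xi<2$, \cref{thm:natural} settles case (III), and the finiteness of $\frak m((0,\delta))\asymp\int_0^\delta r^{-\xi}dr$ separates regular ($\xi<1$) from exit ($1\le\xi<2$) exactly as in the paper's argument (note only that the exit/regular criterion is a definition, Eq.~\eqref{eq:exit}, not a separate lemma). No gaps.
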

\begin{proof}
The proof relies on the fundamental property $\mA(r) \approx r^\xi$ on $(0,L)$,
which implies ${\frak m}(dr) \approx r^{-\xi} dr$ on any interval $(0,\delta) \subset (0,L)$.
We thus have $\int_{0}^{\delta}r \;{\frak m}(dr)< +\infty$ if and only if $\xi<2$. From \cref{thm:closed_point} we conclude that $0$ is accessible if and only if $\xi<2$. We then use \cref{thm:natural} to conclude that
the point $0$ is natural for $\xi=2$, proving (III). For $ \xi<2$, we estimate the speed measure as  ${\frak m}((0,\delta)) \approx \int_0^\delta r^{-\xi} dr$. From  \eqref{eq:exit}, we deduce that the point $0$ is regular if $\xi<1$ and exit if $1\leq\xi<2$,
leading to (I) and (II).
\end{proof}
The physical interpretation of \cref{thm:phase} goes as follows.
For sufficiently rough flows ($\xi<1$), trajectories collide in finite-time 
almost surely. Their subsequent behavior is determined by the choice of ${\frak M_0} \in [0,\infty]$. 
Setting ${\frak M_0}=0$
corresponds to trajectories reflecting  
instantaneously from the origin after collision, leading to random  
evolution of particles in quenched space-time realizations of the  
velocity field, even in the absence of thermal noise.  
This marks the emergence of spontaneous stochasticity.
On the other hand, assigning ${\frak M_0} = \infty$ corresponds to  
particles sticking together upon collision and coalescing into a  
single trajectory.
The case $0<{\frak M_0}<\infty$ describes soft collisions and interpolates 
between the other two cases.
For smoother flows ($1 \leq \xi < 2$), trajectories collide and  
coalesce almost surely in finite time, regardless of the choice of  
${\frak M_0}$.  
For spatially smooth flows ($\xi = 2$), the point $0$ is a natural  
boundary. In this phase, trajectories neither collide nor branch.  
This reflects the well-posedness of the initial value problem for  
Eq.~\eqref{eq:ql_transport}, where two particles starting from the  
same position ultimately follow the same trajectory in a prescribed  
realization of the velocity field.  
As such, the threshold $\xi = 1$ marks the dichotomy \eqref{eq:dicho2}  
between deterministic and spontaneously stochastic behavior, while  
$\xi = 2$ marks the dichotomy \eqref{eq:dichotomy} between colliding and  
non-colliding behavior.

In short, and in physical terms, the boundary behaviors described 
by \cref{thm:phase} distinguish between three 
different phases for the Lagrangian flow, which we refer to as 
(I) branching (colliding/conditionally stochastic),  (II) coalescing (colliding/deterministic) and (III)  smooth  (non-colliding/determistic). 
The coalescing and smooth phases are universal with respect to regularization schemes,
whereas the branching phase is not (see \cref{ssec:vanishingreg} below).

\subsubsection{The Bessel shortcut to the phase diagram}
\label{ssec:Besselshortcut}
A shortcut  to  Theorem \ref{thm:phase} \cite{gawedzki2000phase,falkovich2001particles,gawedzki2002soluble} is to realize that, 
up to the first  collision time $T_0$, 
the dynamics behaves essentially as the multiplicative process 
\be 
	\label{eq:MBM}
	dR = R^{\xi/2}dW,\quad R(t=0)=r,
\ee 
for initial relative separation $r>0$.
It\^o's formula then tells us that, for $t<T_0$ , the  process $X_t=\dfrac{1}{1-\xi/2} R_t^{1-\xi/2}$ is the  Bessel process of parameter $a$, defined through
\be
	\label{eq:bessel}
	dX =\dfrac{a}{X} dt +  dW  ,\quad a = \dfrac{\xi}{2\xi-4},
\ee
representing the norm of a Brownian motion in effective dimension $d_e=2a+1 = \dfrac{2-2\xi}{2-\xi}$.
The properties of Bessel processes have now been extensively studied --- see, \emph{e.g.}, \cite{lawler2018notes}.
In particular, those processes do not reach the origin for $a\ge 1/2$, are reflected about the origin for $-1/2<a<1/2$, and are absorbed by the origin for $a<-1/2$. As a function of $\xi$, the parameter $a$ decreases from $0$ to  $-\infty$ as $\xi$ goes from $1$ to $2$. 
The stochastic and colliding dichotomies at $\xi=1$ and $\xi=2$ precisely correspond to the transitions at $a=-1/2$ and $a=-\infty$, respectively --- see the left panel of Fig.~\ref{fig:comments}.

As sidenote, the mapping of particle separation to Bessel processes can 
be made also a physical  one,   
if one considers a \emph{random potential version} of  
the field \eqref{eq:u0}. The model replaces the space-time white noise,  
within the quasi-Lagrangian field \eqref{eq:ql_model2}, 
by a Brownian path $\widetilde {\mathcal W}(dy,dt) = \delta(y-R) \,dW dy$ 
attached to the end-point separation 
--- this changes $\mathcal W$ to $\widetilde{ \mathcal W}$ in \eqref{eq:u0}. 
Upon using the random potential $\widetilde \phi(r) = |r|^{\xi/2} \wedge L$, 
the (norm of the) separation process is then precisely a Bessel process   
up to the first exit time out of the interval $[0,L]$.

\begin{figure}[h!]
    \centering
    \includegraphics[width=0.49\textwidth]{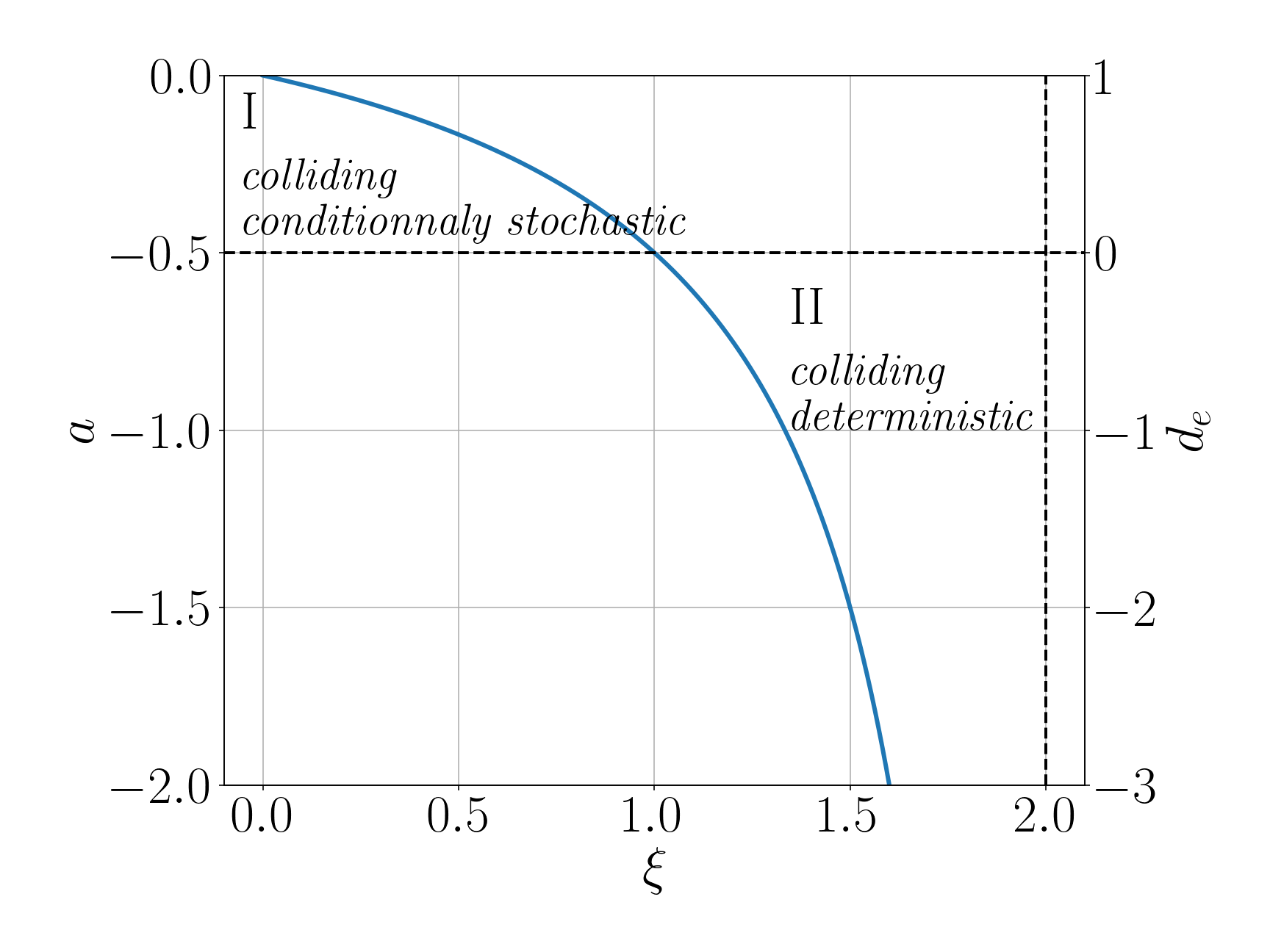}
    \includegraphics[width=0.49\textwidth]{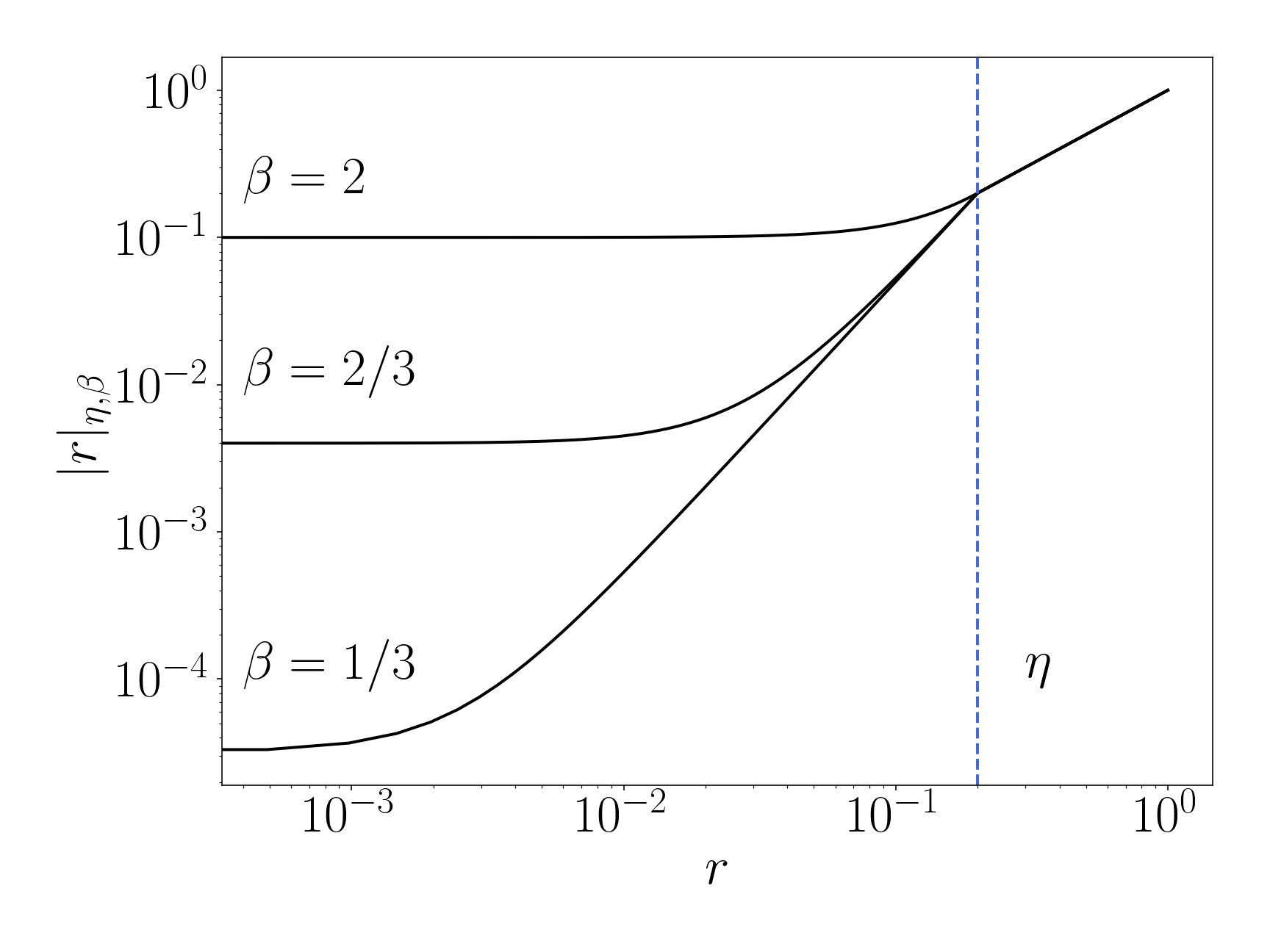}
    \caption{Left: Effective parameters $a$ and $d_e$  of the  dual Bessel process \eqref{eq:bessel}. Right: Regularized norms $|\cdot|_{\eta,\beta}$ for  $\eta =0.2$  and various $\beta>0$}
    \label{fig:comments}
\end{figure}

\subsection{The limit of vanishing regularization}
\label{ssec:vanishingreg}
The freedom in the choice $\frak M_0$ at the level of the unregularized 
processes for the phase $0<\xi <1$ connects to the boundary behaviors 
in the boundary layer $(0,\eta)$ for the  pair separation. 
It reflects the non-universality of the vanishing $\eta,\kappa$ limit, 
with the limiting dynamics sensitive to the small-scale details. 
In other words, different regularizations lead to different limiting processes.
To illustrate this feature, let us decompose the speed measure of the 
unregularized process as 
\be
  \frak m_{\eta,\kappa}(dr)  = {\frak m_+}(dr)+{\frak m}_{\eta,\kappa}^{(0)}(dr),\quad \text{with} \quad {\frak m_{\eta,\kappa}^+}(dr)= \mathds{1}_{\R\backslash (\eta,\eta)}(r)\,{\frak m}(dr),\quad {\frak m^{(0)}_{\eta,\kappa}}(dr)= \mathds{1}_{(\eta,\eta)}(r)\,{\frak m}(dr),
\ee
and assume that on $(0,L)$ we have the relation
\be 
	{\mathcal A}_{\eta,\kappa}(r)-2\kappa \simeq |r|_{\eta,\beta}^\xi
\ee
with the mollifications $r_{\eta,\beta}$  defined  through the $\beta$-regularizations as
\begin{equation}\label{eq:regularized_norms}
    |r|_{\eta,\beta}=
    \begin{cases}
        \frac{|r|^2}{C \eta} + \frac{\eta^{2/\beta}}{2}\quad &\text{if}\quad r\leq \eta, \\
        |r|\quad&\text{if}\quad r> \eta,
    \end{cases},
    \quad C=\frac{2}{2-\eta^{2/\beta-1}},\quad \beta>0.
\end{equation}
The norms are continous and non-vanishing on $(0,L)$ ---see the right panel of Fig.~\ref{fig:comments}.

Then,  from the defining relation \eqref{eq:generator_speed} ${\frak m_{\eta,\kappa}}(dr)  = (2\mA_{\eta,\kappa})^{-1}\,dr$,
we deduce that  $\frak m_+ (dr) \to \frak m(dr)$ as $\eta,\kappa \to 0$. However the boundary component behaves as
\be
	 \frak m[(-\eta,\eta)]  = O\left(\dfrac{\eta }{\eta^{2\xi/\beta}+\kappa}\right) \quad \text{as $\eta\to 0$}.
\ee
Setting a limit with  $ \kappa = O(\eta^{2\xi/\beta})$, any of the following three  cases may occur, depending on the exponent $\beta$ prescribing the regularization.
\be
	 m_{\eta,\kappa}^{(0)}[(-\eta,\eta)] \to 
	\begin{cases}
		& 0	\hspace{1cm}	\beta>\xi\\
		 & O(1)	\hspace{1cm}	\beta=\xi\\
		 & \infty	\hspace{1cm}	\beta<\xi\\
	\end{cases},\quad \text{ as } \eta\to 0.
\ee

While the  absolutely continuous part of the  speed measure exhibit universality with respect to regularizations schemes,
the quantity $\frak M(\{0\}) = \lim_{\eta,\kappa\to 0 }m_{\eta,\kappa}^{(0)}[(-\eta,\eta)] $  depends on the small scale details. 
This feature is the underlying mechanism behind the lack of universal behavior 
in the $0<\xi<1$ phase of the unidimensional Kraichnan model, 
 extending to the intermediate compressibility regime in 
multidimensional Kraichnan flows \cite{gawedzki2000phase,gawkedzki2004sticky,e2000generalized}.

\section{Multifractal Kraichnan flows} \label{sec:direct}
In this section, we  study the separation process driven by the one-dimensional velocity fields \eqref{eq:velocity}.  
This provides a generalization of the one-dimensional Kraichnan model discussed in \S\ref{sec:Kraichnan}, 
with  spatial modeling of intermittency in terms of GMC theory, 
whose essential properties and definitions are briefly recalled here, 
but otherwise exposed in more detail in the review \cite{rhodes2014gaussian}. 
Similarly to the previous section, 
we define the regularized and unregularized separation process in 
multifractal Kraichan flows in terms of time-changed Brownian motion, 
with the associated clock process featuring the GMC.

\subsection{Gaussian multiplicative chaos}\label{sec:gmc2}

We consider GMC measures defined on $\R$, 
formally written as 
\begin{equation}\label{eq:gmc2}
       \mu(dx)= e^{2\gamma \gf(x)}dx,
\end{equation}
where $\gf$ is a
centered Gaussian random field with correlation function
\begin{equation}\label{eq:log-kernel}
    \E\left[\gf(r)\gf(0)\right]=\log_{+}\frac{L}{|r|},\quad\quad  \log_+(r)=\max(\log(r),0),
\end{equation}
and $\gamma\in[0,\sqrt{2}/2)$ is a free parameter 
controlling the level of intermittency.
The difficulty in defining such a measure arises from the
logarithmic singularity of the correlation function \eqref{eq:log-kernel}
at short-distances, which causes the field 
$\gf$ to be distributional valued rather than 
pointwise defined. 
Furthermore, the already large fluctuations of the field $\gf$ 
are amplified when taking the exponential, 
rendering expression \eqref{eq:gmc2} non-rigorous.
Observe that, given the poor regularity of $\gf$, the correlation function
\eqref{eq:log-kernel} is to be interpreted in the weak sense, 
that is, integrated against suitable test functions.

The standard approach to address
this issue is to apply a regularization procedure.
One introduces a sequence of approximations $(\gf_\eta)_\eta$, 
idexed by a small parameter $\eta>0$,
which converges to $\gf$ as $\eta\to0$.
Additionally, one also introduces a regularizing sequence 
$Z_\eta \propto \eta^{-2\gamma^2}$,
responsible for compensating the wild fluctuations of $\gf_\eta$ 
during the limit procedure. As a result, one obtains 
the regularized GMC measure 
\begin{equation}\label{eq:gmc_regularized}
        \mu_\eta(dx)= Z_\eta^{-1} e^{2\gamma \gf_\eta(x)}dx.
\end{equation}
The convergence of $\mu_\eta$ towards a non-trivial measure
exhibiting multifractal properties was first established in 
Kahane's original work \cite{kahane1985chaos}, utilizing 
martingales approximations of the field $\gf$.
Since then, the convergence of GMC measures has been established  under different approximations schemes \cite{rhodes2014gaussian}, with the limiting measure  independent from  the specific scheme of regularization. This allows for a natural interpretation of the measure $\mu$, 
formally defined in \eqref{eq:gmc2}, as the weak limit of the regularized measures $\mu_\eta$ --- see Fig. \ref{fig:3} for an illustration.\\
\begin{figure}
 	\includegraphics[width=0.49\textwidth]{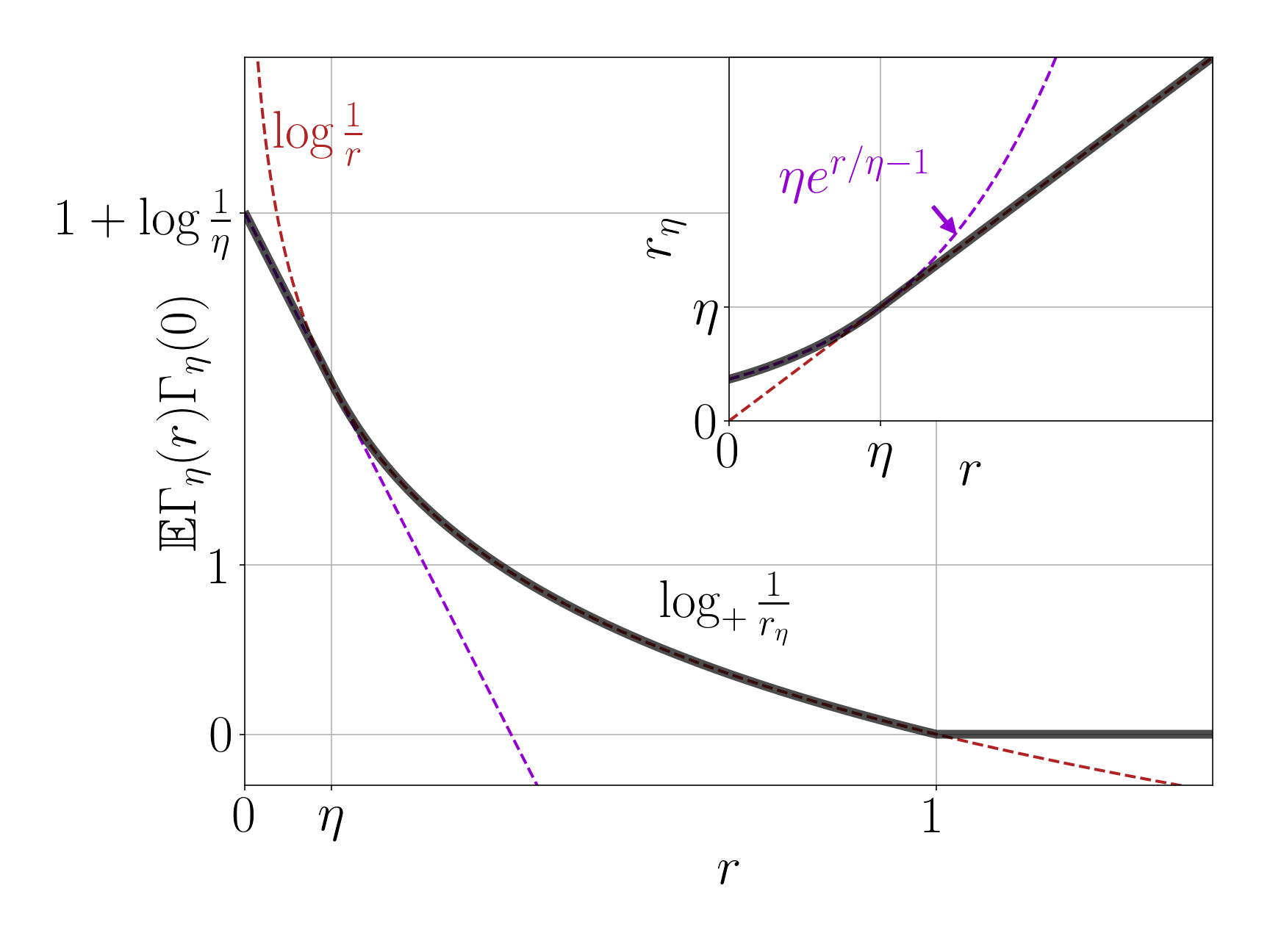}
 	\includegraphics[width=0.49\textwidth]{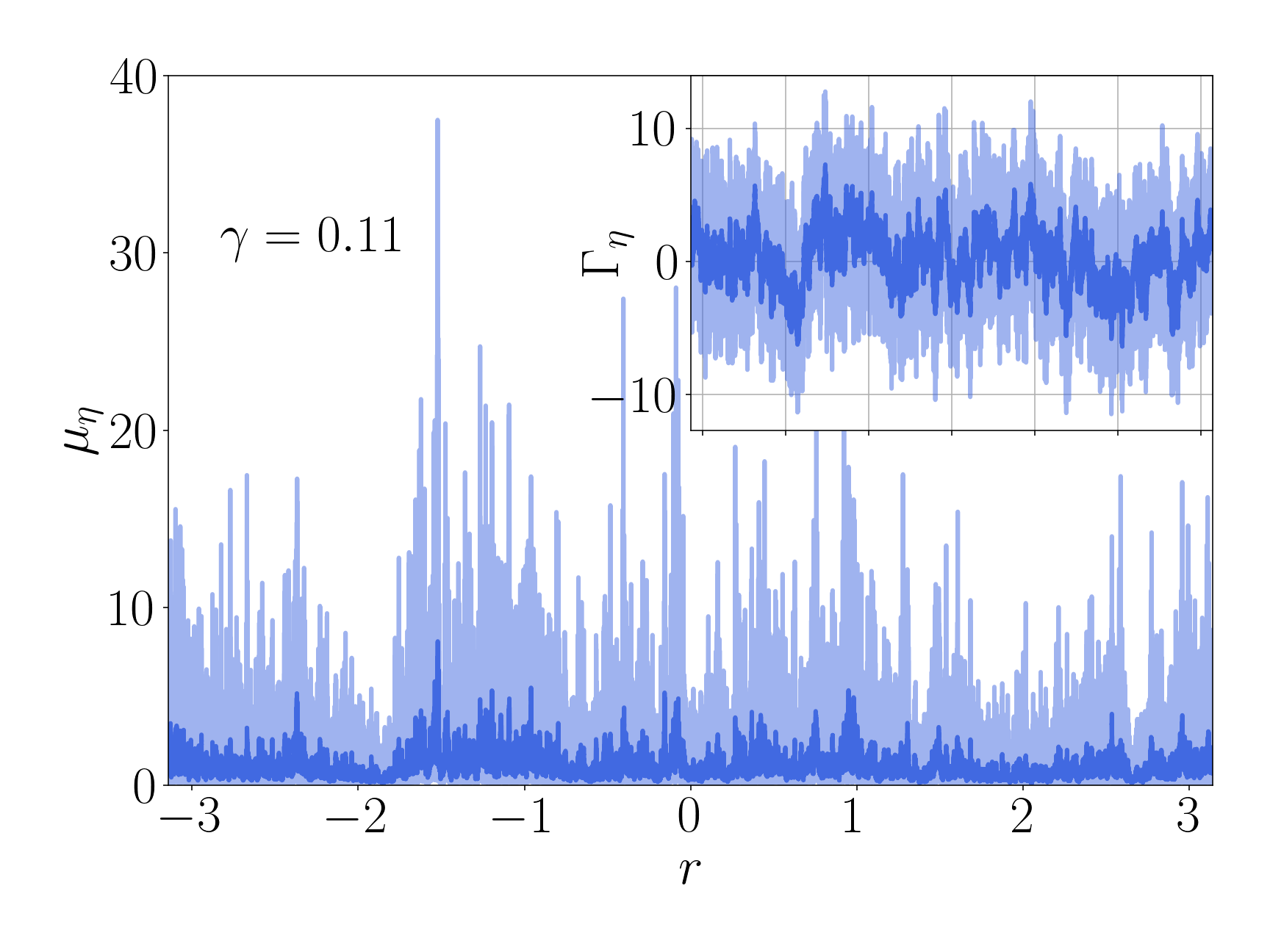}
	\caption{Left: $\eta$-regularization of the log kernel \eqref{eq:log-kernel} used in Kahane's martingale approximation. Right: Random realizations of the regularized GMC  $\mu_\eta  = (\eta/e)^{2\gamma^2} e^{2\gamma \Gamma_\eta} $ for $\eta=10^{-3}$ (deep blue) and $\epsilon = 10^{-6}$ (pale blue). Inset shows the regularized Gaussian field  $\Gamma_\eta$ with similar conventions.}
	\label{fig:3}
\end{figure}

One facet of the multifractal nature of GMC measures is the nonlinear behavior  
of the scaling exponents that characterize their mass distribution.
The GMC measure prescribed by expression \eqref{eq:gmc2} obeys the 
asymptotic relation \cite{rhodes2014gaussian} 
\begin{equation}
    \E\left(\mu([x,x+r])^p\right) \sim r^{p+\tau(p)}
    \quad \text{as}\quad r \to 0,
\end{equation}
for $p\in \left(-\infty, \frac{1}{2\gamma^2}\right)$ and  with the scaling exponents
\begin{equation}\label{eq:power_law_spectrum_gmc}
    \tau(p) = 2\gamma^2p(1-p).
\end{equation}
The expression above prescribes a quadratic profile for the scaling exponents, with the parameter $\gamma$
accounting for deviations from the monofractal scaling $\E\left(\mu([x,x+r])^p\right) \sim r^p$.
In the context of  turbulence modeling, the local GMC averages $\mu_r(x):=r^{-1}\gmc\left([ x,x+r]\right)$ are  the central quantities connecting to Kolmogorov refined similarity theory of 3D homogeneous isotropic turbulence. They represent the power 2/3 of the  dissipation field coarse-grained over a distance $r$.
For the random field \eqref{eq:velocity}, the (rigorous) scaling \eqref{eq:uscaling} then  reflects the dimensional heuristics
\be
\label{eq:rss}
U(x+r)-U(x) \overset{law}\sim \mu_r(x)^{1/2} r^{\xi/2},
\ee
implying the bridging relation  $\zeta_U(p) = p\xi/2 + \tau(p/2)$ and tying 
together the scaling exponents of the dissipation and velocity fields. 
In particular, this relation predicts a negative  flatness exponent 
$\zeta_U(4) - 2  \zeta_U(2) = -16 \gamma^2$.  Choosing for example $ \gamma \simeq 0.07$  or $\gamma\simeq 0.11$ reproduces the flatness exponents  $\simeq - 0.1$ or $\simeq - 0.2$ respectively observed in  3D Navier-Stokes \cite{chevillard2015peinture} or surface quasi-geostrophic turbulence \cite{valade2025surface}. 

\cb

\subsection{The regularized multifractal separation process}

We consider the dynamics \eqref{eq:ql_transport}, driven by a  
regularized version of the multifractal velocity field  
\eqref{eq:velocity}, here denoted by $U_\eta$, obtained by replacing  
$\phi$ and $\gf$ with their regularized counterparts in  
\eqref{eq:velocity}. In this section, as in \cref{sec:Kraichnan}, we  
also employ the RV kernel \eqref{eq:rvkernel}. Recall that, in expression  
\eqref{eq:velocity}, the fields $\gf$ and $\mW$ are assumed to be  
independent.
The two-particle separation $R(t):=X_1(t)-X_0(t)$ then obeys the SDE
\begin{equation}\label{eq:ql_model-gamma}
    dR_t=  \Delta U_\eta(R_t,dt)+ \sqrt{2\kappa}\, d\beta, 
\end{equation}
where $\beta$ is a Brownian motion independent 
from $\gf$ and $\mW$, and  
the quasi-Lagrangian velocity is given by 
\begin{equation}\label{eq:ql_velocity-robert}
    \Delta U_\eta(r,dt):=   Z^{-1/2}\int_{\R} \left(\phi_\eta(r-y)-\phi_\eta(-y) \right)  e^{\gamma \Gamma_\eta(y)}\mW(dy,dt).
\end{equation}
Observe that, unlike in the (monofractal) Kraichnan setting, the  
interpretation of Eq.~\eqref{eq:ql_velocity-robert} as a  
quasi-Lagrangian field is crucial here to obtain a dynamical  
decoupling between the center of mass $(X_1 + X_0)/2$ and the relative  
separation, which in turn yields a closed dynamics for $R$.
This is due to the presence of the GMC term  
$\propto e^{\gamma \gf_\eta}$, which introduces a spatially  
non-homogeneous but time independent component in each realization  
of the velocity field \eqref{eq:ql_velocity-robert}.  
The quasi-Lagrangian formulation also emerges as the natural  
framework for addressing time-correlated extensions of the  
(monofractal) Kraichnan theory  
\cite{kupiainen2003nondeterministic,chaves2003lagrangian}.

The fact that the two-point function of $U_\eta$
is the same as that of the Kraichnan ensemble
raises the question of whether multifractality can 
lead to new phases of the Lagrangian flow or whether it 
has any impact on the two-particle motion at all.
Indeed, one of the takeaways of \cref{sec:Kraichnan} is that 
the different phases of the Kraichnan 
model are entirely prescribed by its correlation function and scaling exponent $\xi$. 
A naive guess based on these considerations would then 
speculate that the Lagrangian flow generated by $U^\eta$ should 
be the same as that generated by the Kraichnan ensemble, as a consequence of 
having $\zeta_U(2)=\xi$ (see \eqref{eq:zeta_U}).
This guess is however not correct, as the GMC induces non-trivial intermittent corrections.

Here, we  are mostly interested in the statistics of the 
inter-particle distance for quenched realizations of $\gf$.  
By \emph{quenched statistics}, we mean statistics conditioned on specific realization of  $\gf$.
By fixing $\gf$, we allow particles to interact with the GMC enviroment, 
and  intermittency will turn out to have a clear physical effect.  
The effect of intermittency will, however, also  be present when considering an \emph{annealed version} of the separation process, obtained by averaging out the environment. 
Although the physical meaning becomes less clear, this leads to a simpler mathematical framework, with also possible connections to 
homogenization theory.

To make the distincion between the various layers of randomness, we introduce the notation $\E^\gf$ for the expectation with respect to $\gf$,  and $\E^\mW$ for expectation with respect to both Brownian motion and the Brownian sheet. The corresponding probability measures are denoted by $\P^\gf$ and $\P^\mW$. Occasionally, we will also use $B$ to denote   Brownian trajectories. In such cases, the notation  for the probability measure and expectation  follows the same convention as for $\mW$.

From Eqs.\eqref{eq:ql_model-gamma} and \eqref{eq:ql_velocity-robert},  
we obtain the quenched quadratic variation of the regularized 
separation process
\begin{equation}\label{eq:quadratic_variation2-gam0}
        \biggl\langle R(t), R(t) \biggr\rangle =
        Z^{-1} \int_0^t\int_\R \left[\phi_\eta(R_s-z) - \phi_\eta(-z)\right]^2 e^{2\gamma\gf_\eta(z)}dzds
        + 2\kappa t.
\end{equation}
Applying Itô's formula for quenched realizations of $\gf_\eta$ 
leads to the generator of the regularized separation process
\begin{equation}\label{eq:generator_MF_Kraichnan}
    \mL_{\gamma} = \mA^{(\gamma)}_{\eta,\kappa}(r)\frac{\mathrm{d^2}}{\mathrm{d}r^2},
\end{equation}
where the random diffusion coefficient is given by
\begin{equation}\label{eq:quadratic_variation2-gam}
    \mA^{(\gamma)}_{\eta,\kappa}(r) = \dfrac{1}{2} \int_\R \left[\phi_\eta(r-z) - \phi_\eta(-z)\right]^2 \mu_\eta(dz) + \kappa.
\end{equation}
The development leading to  
Eqs.~\eqref{eq:quadratic_variation2-gam0}--\eqref{eq:quadratic_variation2-gam}  
evidently parallels that of \cref{sec:Kraichnan}, which led to  
Eqs.~\eqref{eq:quad-gen} and \eqref{eq:coefficient kraichnan}.  
However, the current construction yields the random differential operator  
\eqref{eq:generator_MF_Kraichnan} as the infinitesimal generator of  
the regularized separation process, governing its statistics under  
quenched realizations of the field $\gf_\eta$.  
Moreover, the associated diffusion coefficient \eqref{eq:quadratic_variation2-gam} 
now involves the regularized GMC measure \eqref{eq:gmc_regularized}, 
and is thus a frozen-in-time  
random field depending on both the realization of $\gf_\eta$ and the  
intermittency parameter $\gamma$.
\cb

\subsection{The unregularized multifractal separation process}
\label{ssec:def-unregMF}

In addition to being defined by the SDE \eqref{eq:ql_model-gamma},  
the regularized separation process in our multifractal Kraichnan model 
can also be realized as a time-changed Brownian motion,  
via a construction similar to that in \eqref{eq:R=BM} and \eqref{eq:dds},  
but with the clock process given by  
\begin{equation}
    \mC_{\eta,\kappa}(t) = \int_0^t \dfrac{ds}{2 \mA^{(\gamma)}_{\eta,\kappa}(B_s)}.
\end{equation}
This integral expression remains well-posed when $\eta = \kappa = 0$  
and defines a clock process with non-trivial properties.  
Indeed, the unregularized diffusion coefficient is naturally interpreted  
as the chaos integral
\begin{equation}\label{eq:unregularized diffusion coefficient}
    \mA^{(\gamma)}(r)\equiv \mA^{(\gamma)}_{0,0} =  \dfrac{1}{2}\int_\R \left[\phi(r-z) - \phi(-z)\right]^2 \mu(dz) ,\quad r\neq 0,
\end{equation}
where $\gmc$ is the 
GMC measure given by \eqref{eq:gmc2}.
The expression above defines an almost surely finite 
random variable with expected value 
\begin{equation}\label{eq:mean A}
    \E \mA^{(\gamma)}(r) = 
    \dfrac{1}{2}\int_{\R} \left(\phi(r-z)-\phi(-z)\right)^2\,dz
    \approx r^\xi.
\end{equation}
Consequently, the unregularized clock process
\begin{equation}\label{eq:unregularized clock}
    \C(t)\equiv\C_{0,0}(t) = \int_0^t \dfrac{ds}{2\mA^{(\gamma)}(B_s)},
\end{equation}
is almost surely finite for times shorter 
than the first hitting time of the origin by $B(t)$. %
This provides a framework for analyzing the separation  
process in unregularized multifractal Kraichnan flows  
and motivates the following definition.
\begin{definition}[Multifractal Kraichnan separation process]\label{thm:multifractal kraichnan}
    \label{def:time_representation}
    We define the unregularized multifractal Kraichnan separation process
    as the time-changed Brownian motion
    $$
        R(t)=B(\tau(t)), \quad \quad t\geq0,
    $$
   
   \noindent where $\tau(t) = \inf\{s\geq0 : \C(s) > t\}$, with
   $\C$ given by \eqref{eq:unregularized clock}.
\end{definition}

As a direct consequence of Theorem 16.51 in \cite{breiman1992probability},  
we obtain the Markov property for the multifractal Kraichnan separation process  
as soon as the field $\gf$ is quenched,  along with sample path continuity 
for every typical realization of $\gf$ and $B$ --- at least up to first collision time. This is formally stated as 
\begin{theorem}\label{thm:mfk is markov}
    $\P^\gf$-a.s., up to first hitting time of $B_t$ to the origin,
    the multifractal Kraichnan separation process 
    is a stationary strong Markov process,
    with continuous sample paths and speed measure given by
    \begin{equation}\label{eq:speed2}
        \frak{m}^{(\gamma)}(dr)= \frac{dr}{2\mA^\gamma(r)}.    
    \end{equation}
\end{theorem}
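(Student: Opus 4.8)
At bottom, the statement asserts that for $\P^\gf$-almost every realization of the environment $\gf$ the deterministic theory of time-changed Brownian motion --- in the form packaged by \cite[Thm.~16.51]{breiman1992probability} and recalled around \eqref{eq:R=BM}--\eqref{eq:tauc-m} --- applies pathwise to the clock \eqref{eq:unregularized clock} and the measure $\frak{m}^{(\gamma)}$ of \eqref{eq:speed2}. The plan is therefore: \emph{(i)} isolate the properties this theory demands of a speed measure; \emph{(ii)} show that, $\P^\gf$-almost surely, the frozen GMC environment endows $\frak{m}^{(\gamma)}$ with exactly those properties on $\R\setminus\{0\}$; and \emph{(iii)} feed this into the time-change machinery up to the first visit of $B$ to the origin, reading off time-homogeneity and the strong Markov property from the fact that the clock is an autonomous additive functional of $B$. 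For \emph{(i)}: the construction turns a Brownian motion $B$ --- which is on natural scale, $s(x)=x$ --- into a regular strong Markov diffusion with prescribed speed measure $\frak{m}$ as soon as $\frak{m}$ is a Radon measure on the interior of the state space assigning positive mass to every nonempty open subinterval; continuity of the density is \emph{not} required, only local finiteness and full support. This last point matters, because the unregularized coefficient $\mA^{(\gamma)}$ is GMC-driven and need not be continuous, so the generator route used for the regularized process is unavailable here.

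Step \emph{(ii)} is the substance. Condition on $\gf$ and restrict to the event --- of full $\P^\gf$-measure, by Kahane's construction and its refinements \cite{kahane1985chaos,rhodes2014gaussian} --- on which $\mu$ is a locally finite, non-atomic Borel measure with full support. On this event we claim $r\mapsto(2\mA^{(\gamma)}(r))^{-1}$ is a strictly positive, Lebesgue-a.e.\ finite density on $\R\setminus\{0\}$ that is integrable on every compact subset of $\R\setminus\{0\}$. First, $\mA^{(\gamma)}(r)<\infty$ for Lebesgue-a.e.\ $r$: this follows from \eqref{eq:mean A} and Tonelli, since the singularities $|z|^{\xi-1}$ and $|r-z|^{\xi-1}$ of $[\phi(r-z)-\phi(-z)]^2$ are Lebesgue-integrable for every $\xi>0$. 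Second, $\mA^{(\gamma)}(r)>0$ at every $r\neq0$: the map $z\mapsto[\phi(r-z)-\phi(-z)]^2$ is continuous off the $\mu$-null set $\{0,r\}$ and strictly positive on a nonempty open set, so full support of $\mu$ forces $\mA^{(\gamma)}(r)>0$. Third, local integrability: given a compact $K\subset\R\setminus\{0\}$, an elementary property of the fixed, compactly supported kernel $\phi$ furnishes a ball $B_0$ and a constant $c_0>0$, depending only on $K$, with $[\phi(r-z)-\phi(-z)]^2\geq c_0$ for all $z\in B_0$ and all $r\in K$; hence $\mA^{(\gamma)}(r)\geq\tfrac{c_0}{2}\mu(B_0)$ uniformly on $K$, and since GMC measures have finite negative moments of all orders \cite{rhodes2014gaussian} we get $\mathbb E\big[\sup_{r\in K}\mA^{(\gamma)}(r)^{-1}\big]\leq\tfrac{2}{c_0}\,\mathbb E\,\mu(B_0)^{-1}<\infty$, so $\int_K\frak{m}^{(\gamma)}(dr)<\infty$ $\P^\gf$-a.s. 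Consequently $\frak{m}^{(\gamma)}$ is, $\P^\gf$-a.s., a Radon measure on $\R\setminus\{0\}$ charging every interval, as required in \emph{(i)}.

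For step \emph{(iii)}, write $T_0^B:=\inf\{t>0:B_t=0\}$ and invoke the occupation-times identity $\C(t)=\int_\R\ell(y,t)\,\frak{m}^{(\gamma)}(dy)$ (\cite[Corollary~VI.1.6]{revuz2013continuous}, exactly as in \eqref{eq:tauc-m}). For $t<T_0^B$ the Brownian path $(B_s)_{s\leq t}$ stays in a compact subset of $\R\setminus\{0\}$, on which the local-integrability bound of step \emph{(ii)} makes $\C(t)$ finite, full support of $\frak{m}^{(\gamma)}$ makes $\C$ strictly increasing, and continuity of $\C$ is clear. Hence $\tau=\C^{-1}$ is a genuine continuous, strictly increasing time change on $[0,\C(T_0^B))$ --- so that the generalized right-continuous inverse used to define the process coincides with an ordinary inverse there --- and $R=B\circ\tau$ has continuous sample paths. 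Since $\C$ is an additive functional of $B$ carrying no explicit time dependence and $B$ is a strong Markov process, the time-changed process $R$ is a time-homogeneous (``stationary'') strong Markov process on $[0,\C(T_0^B))$, on natural scale and with speed measure $\frak{m}^{(\gamma)}$; this is the content of \cite[Thm.~16.51]{breiman1992probability}. Finally, $R_t=B_{\tau_t}$ reaches $0$ exactly when $\tau_t$ attains $T_0^B$, i.e.\ at $t=\C(T_0^B)$, which is simultaneously the first collision time of $R$ and the first visit of the underlying Brownian motion to the origin; this is the horizon up to which the claim is stated, completing the proof.

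\textbf{Where the difficulty lies.} The one-dimensional diffusion theory is off-the-shelf, so the only genuine work is step \emph{(ii)}: transferring the almost-sure qualitative properties of the GMC measure (locally finite, non-atomic, full support, all negative moments finite) into the regularity the speed measure must possess away from the origin. The delicate point there is the \emph{uniformity in $r$} of the lower bound $\mA^{(\gamma)}(r)\gtrsim\mu(B_0)$ on a compact $K$, which is precisely why one fixes a single ball $B_0$ and a single constant $c_0$ for all of $K$. Everything concerning the boundary point $r=0$ --- whether $\C(T_0^B)<\infty$, i.e.\ whether collisions occur in finite time, and the role of the atomic component $\frak M_0$ --- is deliberately left outside the scope of this statement and is taken up in the following sections.
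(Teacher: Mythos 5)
Your proof is correct and takes essentially the same route as the paper, which establishes the almost-sure finiteness of $\mA^{(\gamma)}$ and of the clock process before $B$ first hits the origin and then invokes Theorem~16.51 of \cite{breiman1992probability} directly. What you add is an explicit verification of the hypotheses of that citation (that $\frak m^{(\gamma)}$ is $\P^\gf$-a.s.\ a Radon measure of full support on $\R\setminus\{0\}$), which the paper leaves implicit.
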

To define the separation process beyond the first collision time, one needs to study the nature of the origin as a boundary point --- similar to the study of \cref{ssec:phasesmonofractal}.

\section{Phase transitions}
\label{sec:mfk-phases}
The different regimes of the multifractal Kraichnan model  
can be analyzed by applying Feller's boundary theory,  
in the same manner as in Section~\ref{sec:Kraichnan}.  
As before, we restrict the multifractal Kraichnan process  
to the positive half-line, and express the resulting speed measure as  
\begin{equation}\label{eq: full speed}  
    \frak M^{(\gamma)}(dr) = \mathds{1}_{(0,\infty)}(r)\, \frak{m}^{(\gamma)}(dr)  
    + \frak M_0^{(\gamma)} \delta(dr),  
\end{equation}  
where $\frak{m}^{(\gamma)}$ is given by \eqref{eq:speed2}.  
In the same vein as the monofractal Kraichnan theory, the different  
phases of the Lagrangian flow in the multifractal case  
are also determined by the boundary type at $r = 0$.  
This is governed by the behavior of the speed measure  
in neighborhoods of the origin, which, in turn, depends on the  
asymptotic behavior of $\mA^{(\gamma)}(r)$ as $r \to 0$  
(see~\cref{sec:Kraichnan}).

\subsection{Heuristics}\label{sec:heuristic}
To  gain intuition  
about the small-scale power-law behavior of $\mA^{(\gamma)}(r)$, let us first describe a crude 
scaling argument which, as we shall see,  turns out to give the correct result.
Expression  \eqref{eq:unregularized diffusion coefficient} suggests the identification
$\mA^{\gamma}(r) \sim \left( U(r) -U(0) \right)^2$. 
The  refined self-similarity phenomenology of Eq.~\eqref{eq:rss} then leads to the estimate
\be
	\label{eq:rssA}
	\mA^{\gamma}(r)  \simeq r^\xi \mu_r(0),
\ee
where we recall that $\mu_r(0) = r^{-1}\mu([0,r])$ denotes the local average of the GMC. 
Upon accepting that the  $\mu_r$ behaves 
like the $r$-regularized GMC defined in  \eqref{eq:gmc_regularized}  
(see \cite{rhodes2014gaussian}),  the
substitution $\eta\mapsto r$ in Eq.~\eqref{eq:gmc_regularized} yields 
\be
    \label{eq:heuristics}
    \gmc_r \widesim r^{ 2\gamma^2} e^{2\gamma\gf_r(0)} \widesim  r^{2\gamma^2},
    \quad \text{as} \quad r\to 0.
\ee
The second estimate follows from  the blunt estimate  
$\gf_r(0)\sim \sqrt{\log\frac{1}{r}}$ \cite{berestycki2024gaussian}, 
disregarding the random nature of the Gaussian field $\Gamma$ by replacing it with its typical behavior. 
Combining   Eq.~\eqref{eq:heuristics} and the refined-similarity estimate \eqref{eq:rssA} yields $\mA^{(\gamma)}(r) \widesim r^{\xi +2\gamma^2}$. 
In turn, this  suggests  that the speed measure defined by Eq.~\eqref{eq:speed2}  behaves as
\begin{equation}\label{eq:speed_limit2}
        \frak m^{(\gamma)}(dr)\sim r^{-\xi-2\gamma^2}dr,
\end{equation}
 for $r$ small enough. 
In other words, we expect the speed measure of the multifractal Kraichnan separation process  
to behave similarly to that of the one-dimensional Kraichnan model with effective roughness exponent  
$\xi_{\text{eff}} = \xi + 2\gamma^2$,  
yielding a stochastic/deterministic dichotomy at $\xi_{\text{eff}} = 1$  
and a colliding/non-colliding dichotomy at $\xi_{\text{eff}} = 2$.

\subsection{Analysis of the diffusion coefficient}\label{sec:moments of A}

The heuristic behavior \eqref{eq:speed_limit2} can be explicitly tied to   
the scaling exponents of the small-$p$ moments of the diffusion coefficient $\mA^{(\gamma)}(r)$.
From expression \eqref{eq:unregularized diffusion coefficient}, we expect $\mA^{(\gamma)}(r)$ to exhibit a  
multifractal power-law spectrum similar to that of the driving  
multifractal velocity field \eqref{eq:uscaling}, namely, 
\begin{equation}
    \E^\gf\left[ \left( \mA^{(\gamma)}(r) \right)^p \right]  
    \sim \E\left[ \left( U(x + r) - U(x) \right)^{2p} \right]  
    \sim |r|^{\zeta_U(2p)},
\end{equation}
with scaling exponents prescribed by \eqref{eq:zeta_U}, as illustrated  in Fig.~\ref{fig:numerical evidence A}.
This intuition is substantiated by the following 
\begin{proposition}[Scaling along positive integers]\label{thm:power-law spectrum A}
    Let $p$ be a positive integer such that 
    $p < \frac{2\xi}{\gamma^2}$. 
    Then, there exists a constant $C>0$, independent of $r$, 
    such that for  $|r|\leq L$, we have
    \begin{equation}\label{eq:power_law spectrum A}
        \E^\gf\left(\left(\mA^{(\gamma)}(r)\right)^{p}\right)\leq C \left(\frac{|r|}{L}\right)^{\zeta_\mA(p)},
    \end{equation}
    with scaling exponents 
    \begin{equation}\label{eq:zetaA2}
        \zeta_{\mA}(p)=\zeta_U(2p)=\left(\xi+2\gamma^2\right)p -2\gamma^2p^2.
    \end{equation}
\end{proposition}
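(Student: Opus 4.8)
The plan is to exploit that $p$ is a positive integer, turning the moment into a deterministic $p$-fold integral against the correlation function of the chaos measure. Expanding the $p$-th power of the chaos integral \eqref{eq:unregularized diffusion coefficient} and using that the integrand is non-negative, Tonelli's theorem together with the standard GMC moment formula (see \cite{rhodes2014gaussian}) yields
\[
\E^{\gf}\!\Bigl(\bigl(\mA^{(\gamma)}(r)\bigr)^{p}\Bigr)=\frac{1}{2^{p}}\int_{\R^{p}} \prod_{i=1}^{p}\bigl[\phi(r-z_{i})-\phi(-z_{i})\bigr]^{2}\ \prod_{1\le i<j\le p} g_{L}(z_{i}-z_{j})\ dz_{1}\cdots dz_{p},
\]
with the two-point weight $g_{L}(u)=e^{4\gamma^{2}\log_{+}(L/|u|)}=\max\!\bigl(1,(L/|u|)^{4\gamma^{2}}\bigr)$. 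I would justify this identity by first writing it for the regularized measures $\mu_{\eta}$ of \eqref{eq:gmc_regularized}, where it is the elementary computation of a jointly Gaussian exponential moment (the normalization $Z_{\eta}$ cancelling the diagonal contributions), and then letting $\eta\to0$ by monotone convergence.

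Next I would perform the scaling change of variables $z_{i}=r w_{i}$. Using the exact homogeneity of the RV kernel, $\phi(r x)=r^{\xi/2-1/2}L^{-\xi/2}\psi(rx/L)\,\widetilde\phi(x)$ with $\widetilde\phi(x):=\operatorname{sgn}(x)\,|x|^{\xi/2-1/2}$, together with $g_{L}(ru)=(L/r)^{4\gamma^{2}}\min(|u|,L/r)^{-4\gamma^{2}}$ and the Jacobian $r^{p}$, the $r$-power extracted from the $p$ kernel factors, the $\binom{p}{2}$ chaos factors and the Jacobian equals
\[
(\xi-1)p-4\gamma^{2}\tbinom{p}{2}+p=(\xi+2\gamma^{2})p-2\gamma^{2}p^{2}=\zeta_{\mA}(p),
\]
with matching $L$-power $-\zeta_{\mA}(p)$. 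Since $\mA^{(\gamma)}(-r)\law\mA^{(\gamma)}(r)$ (by oddness of $\phi$ and the statistical symmetry of $\mu$), the claim reduces to the uniform bound $\sup_{0<r\le L}\widetilde I_{p}(r)\le C<\infty$, where $\widetilde I_{p}(r)$ is the same multiple integral with kernel factors $\bigl[\psi(r(1-w_{i})/L)\widetilde\phi(1-w_{i})-\psi(-rw_{i}/L)\widetilde\phi(-w_{i})\bigr]^{2}$ and chaos factors $\min(|w_{i}-w_{j}|,L/r)^{-4\gamma^{2}}$.

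To bound $\widetilde I_{p}(r)$ uniformly I would dominate the $r$-dependent kernel factor by an $r$-independent envelope with the correct local behaviour, read off from the asymptotics of $\phi$ underlying \cref{thm:A bound} and formula \eqref{eq:cd}: it is $\lesssim |w|^{\xi-1}+|1-w|^{\xi-1}$ near the two source points $w=0$ and $w=1$ (these are genuine singularities only when $\xi<1$), $\lesssim |w|^{\xi-3}$ for large $|w|$ (Taylor expansion of the increment, the large-scale cut-off $\psi$ only helping), and bounded elsewhere; and I would use $\min(|w_{i}-w_{j}|,L/r)^{-4\gamma^{2}}\le|w_{i}-w_{j}|^{-4\gamma^{2}}$. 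This replaces $\widetilde I_{p}(r)$ by a fixed $p$-fold integral of a product of pairwise electrostatic singularities $|w_{i}-w_{j}|^{-4\gamma^{2}}$, two source singularities $|w_{i}|^{\xi-1},\,|1-w_{i}|^{\xi-1}$, and far-field decay $|w_{i}|^{\xi-3}$, and the heart of the matter is to show this converges. This is a Selberg/GMC-type analysis: one decomposes $\R^{p}$ according to which subsets of coordinates cluster and near which locus — the sources $0,1$, a generic interior point, or infinity — and checks, scale by scale, that the resulting exponents stay on the convergent side; a cluster of $k$ coordinates at a source contributes a power $\rho^{\,\xi k-2\gamma^{2}k(k-1)}$ in the cluster diameter $\rho$, one at a generic point contributes $\rho^{\,k-1-2\gamma^{2}k(k-1)}$, while $\xi>0$ handles the single-source and $\xi\le2$ the far-field terms. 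The combinatorial bookkeeping of this decomposition — and, more delicately, pinning down which bound on $p$ (in terms of $\xi$, $\gamma$, and ultimately the moment threshold of the GMC) keeps all these exponents positive — is the main obstacle; the expansion, the chaos moment formula and the scaling step are routine.
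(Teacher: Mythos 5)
Your first two steps agree in substance with the paper: the moment is turned into a deterministic $p$-fold integral via the GMC moment formula (the paper imports this as \cite[Prop.~3.1]{robert2008hydrodynamic}), and the exponent $\zeta_{\mA}(p)$ is extracted from the scaling of the kernel and of the pairwise weight. After that the routes diverge, and this is where your proposal has a genuine gap. You reduce everything to the uniform boundedness in $r$ of a rescaled Selberg-type integral, sketch the cluster decomposition that would control it, and then explicitly concede that ``pinning down which bound on $p$ \dots keeps all these exponents positive'' is the main unresolved obstacle. But that is precisely the entire non-formal content of the proposition: the hypothesis $p<2\xi/\gamma^{2}$ has to emerge from exactly this convergence analysis, and without carrying it out you have only computed the candidate exponent, not proved the bound. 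The paper sidesteps the full cluster analysis with two imported lemmas of Robert--Vargas: a product bound that concentrates all $\binom{p}{2}$ pairwise interactions onto a single integration variable (at the price of raising the interaction to the power $m\sim p/2$), and a supremum-scaling lemma $\sup_{x}\int(\Delta_{r}\phi(z))^{2}\,|x-z|_{\ast}^{-\delta}\,dz\lesssim r^{\xi-\delta}$ valid for $\delta<\xi$ — the hypothesis on $p$ is exactly the condition that makes this last lemma applicable, plus a Cauchy--Schwarz step to handle odd $p$.

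There is a second, more structural problem: your own cluster bookkeeping, if completed honestly, does not close under the stated hypothesis. A cluster of $k$ coordinates at a generic interior point contributes the exponent $k-1-2\gamma^{2}k(k-1)$, which is positive only for $k<1/(2\gamma^{2})$; this diagonal condition is independent of $\xi$ and is not implied by $p<2\xi/\gamma^{2}$ (e.g.\ for $\xi$ near $2$ and $\gamma^{2}$ moderately large the latter admits values of $p$ for which even the bare GMC moment $\E[\mu(A)^{p}]$ diverges). So the Selberg route would either prove the estimate on a smaller range of $p$ or force you to confront a discrepancy with the proposition as stated. If you want to complete a proof along the lines the authors intend, you should abandon the full cluster decomposition and instead prove and apply the two Robert--Vargas lemmas: the product bound trades the diagonal singularities against the integrable source singularity $|z|^{\xi-1}$ of $(\Delta_{r}\phi)^{2}$, which is how a condition coupling $p$ to $\xi$ (rather than to the bare GMC moment threshold) enters.
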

\cref{thm:power-law spectrum A} is not fully satisfactory as it characterizes the scaling exponents of the diffusion coefficient $\mA^{(\gamma)}(r)$ for positive integer orders only. 
What we need, however  is for the scaling law \eqref{eq:power_law spectrum A}  to hold for arbitrarily small $p$ and not just for positive integers. We do not see any fundamental reason for the scaling law 
\eqref{eq:power_law spectrum A} to break down for non-integer orders, but
extending the region of validity of Eq.\eqref{eq:power_law spectrum A} to positive and negative real values of $p$ proves technically very challenging and beyond the technical scope of this work.  We therefore choose to trust the numerical evidence  in Fig. \ref{fig:numerical evidence A}, which strongly suggests that \eqref{eq:power_law spectrum A}  holds for non-integer orders, including small negative values of $p$,
and  henceforth work under the following 
\begin{assumption} \label{thm:assumption}
    There exists $\delta>0$ such that, for $p \in (-\delta,\delta)$
    and for $|r|\leq L$, we have 
    \begin{equation}
        \E^\gf\left(\left(\mA^{(\gamma)}(r)\right)^{p}\right)\leq C \left(\frac{|r|}{L}\right)^{\zeta_\mA(p)},
    \end{equation}
    where $C>0$ is a constant independent of $r$,
    and $\zeta_\mA$ is given by \eqref{eq:zetaA2}. 
\end{assumption}

\begin{figure}[h]
    \includegraphics[width=0.49\textwidth]{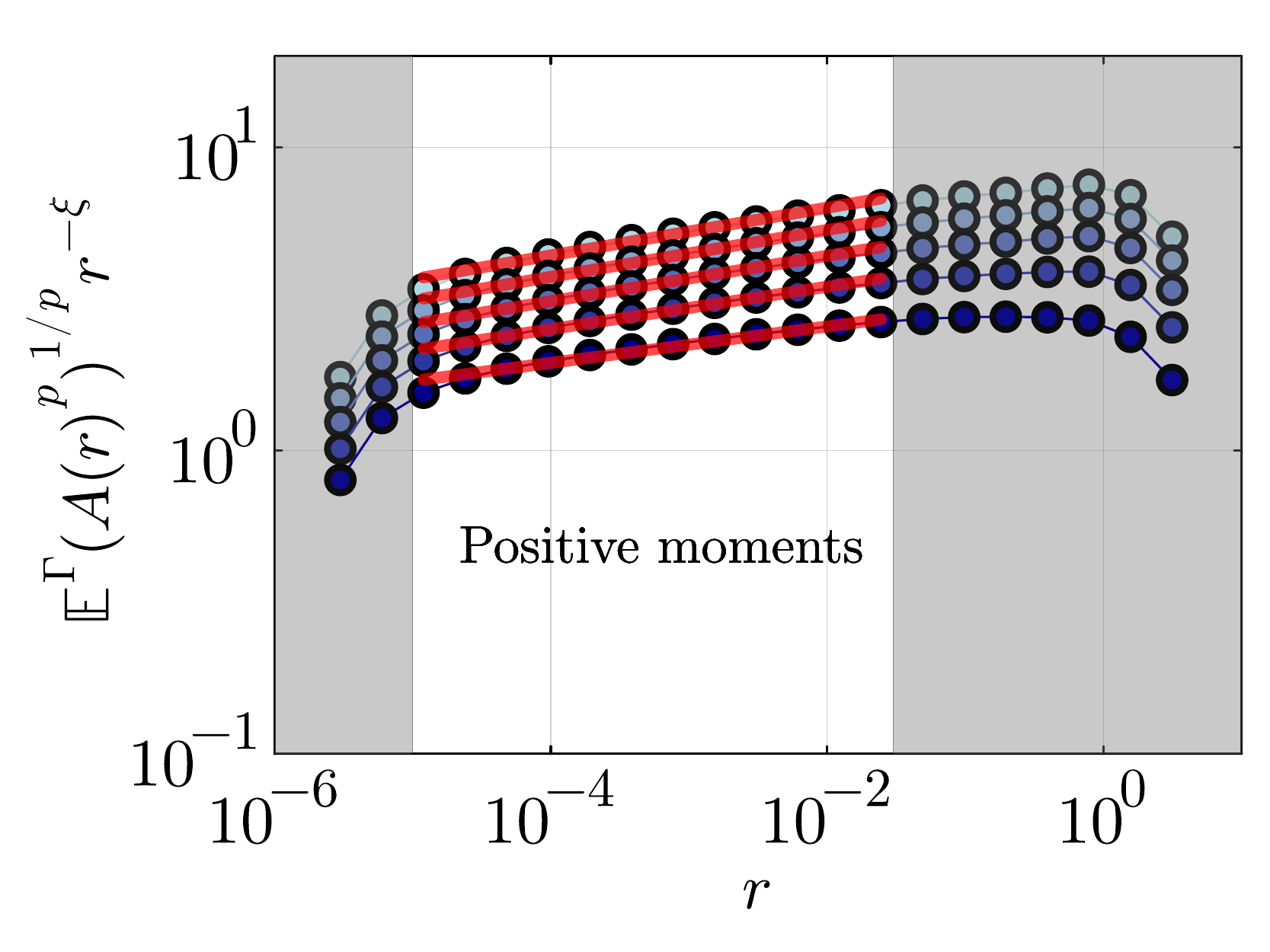}
    \includegraphics[width=0.49\textwidth]{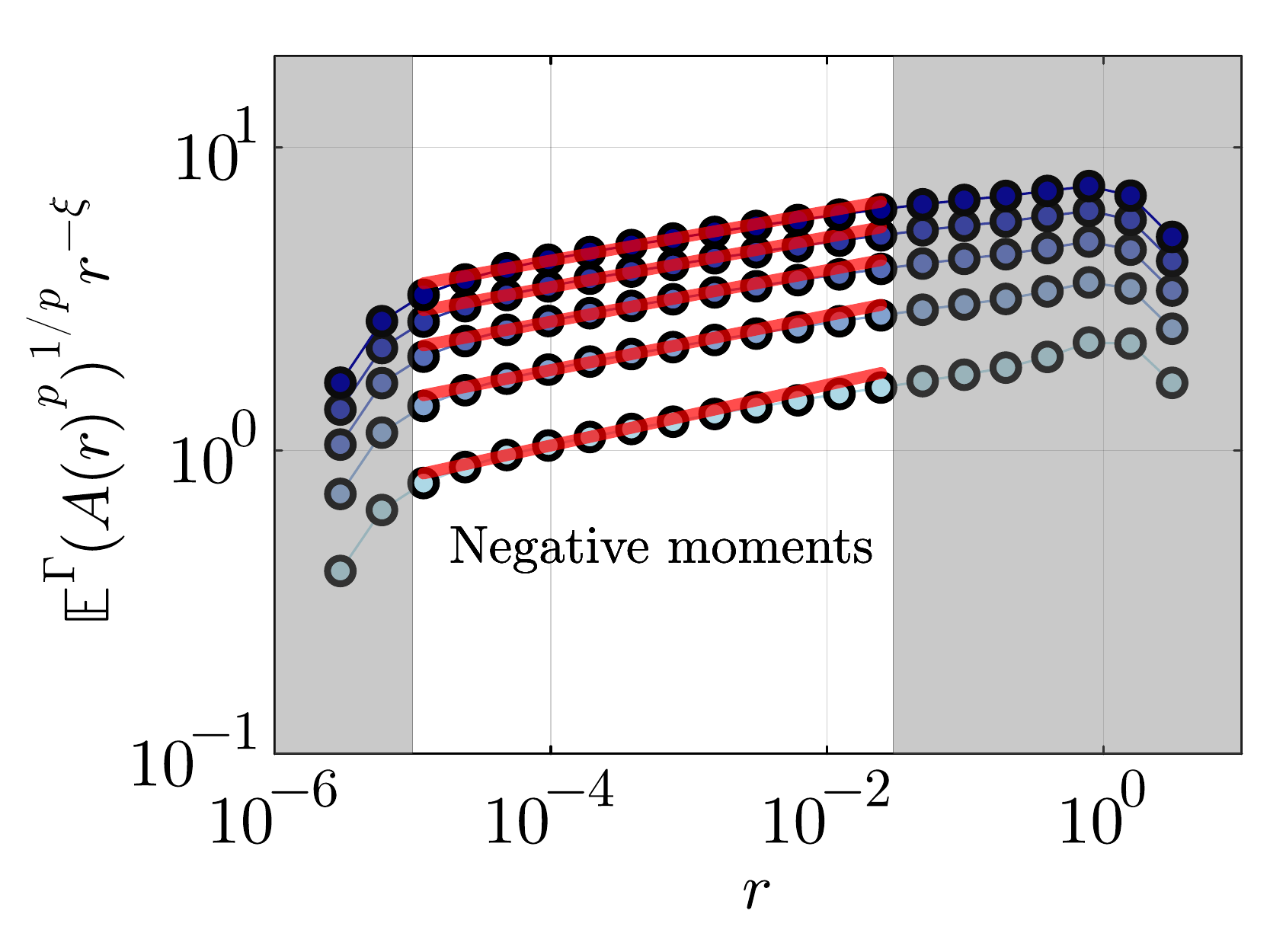}
        \caption{Left: Compensated moments of the diffusion coefficient 
        $\E^\gf\left(\mA^{(\gamma)}(r)^p\right)^{1/p}r^{-\xi} \propto r^{\zeta_A(p)/p-\xi}$
        obtained by Monte Carlo sampling using $2^{22}$ grid points, 
        for $\gamma=0.2$, $\xi =2/3$ and 
        $p=1/64, 1/32, 1/16, 1/8,1/4$ (from top to bottom). 
        Right: Same as the left panel, but 
        for $p=-1/64, -1/32, -1/16, -1/8, -1/4$ (from top to bottom). 
        }
        \label{fig:numerical evidence A}
\end{figure}

Under the \cref{thm:assumption}, we are able to deduce the scaling behavior 
predicted heuristically by the scaling argument in \cref{sec:heuristic},
which in particular implies that, 
around the origin, 
$\mA^{(\gamma)}(r)$ is 
$H$-Hölder continuous for every $H<\xi+2\gamma^2$ and 
for every typical realization of the GMC.
\begin{proposition}\label{prop:Aestimate}
    Suppose that \cref{thm:assumption} holds.
    Let $\eps>0$. 
    Then, $\P^\gf$-a.s., 
    \vspace{0.2cm}
    \begin{equation*}
        r^{\xi+2\gamma^2+\eps} \leq \mA^{(\gamma)}(r)\leq r^{\xi+2\gamma^2-\eps}
        \quad \text{as} \quad r\to0.    
    \end{equation*}
\end{proposition}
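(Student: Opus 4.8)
The plan is to upgrade the pointwise moment bound of \cref{thm:assumption} into an almost‑sure two‑sided estimate by a Markov--Borel--Cantelli argument along the dyadic scales $r_n=2^{-n}$, and then to pass from the dyadic lattice to all (or Lebesgue‑almost all) small $r$. Since for $0<r<1$ the two inequalities are monotone in $\eps$, it suffices to establish them $\P^\gf$‑a.s.\ for each $\eps$ in a sequence $\eps_k\downarrow 0$, and one may take $\eps$ small enough that $\xi+2\gamma^2-\eps>0$. Fixing $p\in(0,\delta)$ with $p<\eps/(2\gamma^2)$, Markov's inequality and \cref{thm:assumption} give
\[
	\P^\gf\!\Big(\mA^{(\gamma)}(2^{-n})>2^{-n(\xi+2\gamma^2-\eps)}\Big)\le 2^{np(\xi+2\gamma^2-\eps)}\,\E^\gf\!\big[(\mA^{(\gamma)}(2^{-n}))^{p}\big]\le C\,2^{-n\,p(\eps-2\gamma^2p)},
\]
because $\zeta_\mA(p)-p(\xi+2\gamma^2-\eps)=p(\eps-2\gamma^2p)>0$; this is summable, so Borel--Cantelli yields $\mA^{(\gamma)}(2^{-n})\le 2^{-n(\xi+2\gamma^2-\eps)}$ for all large $n$, $\P^\gf$‑a.s. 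The same computation with a small negative exponent $-q$ applied to $(\mA^{(\gamma)}(2^{-n}))^{-q}$ gives, $\P^\gf$‑a.s., $\mA^{(\gamma)}(2^{-n})\ge 2^{-n(\xi+2\gamma^2+\eps)}$ for all large $n$. The same mechanism, combined with the translation stationarity of $\mu$, also produces matching a.s.\ two‑sided bounds $(2^{-n})^{1+2\gamma^2+\eps}\le\mu(J)\le(2^{-n})^{1+2\gamma^2-\eps}$ for all dyadic intervals $J$ of length $2^{-n}$ from any fixed finite family, which feeds the interpolation below.

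For the \emph{lower bound} I would argue directly and for every small $r$: for $0<r<L/2$ and $z\in(r/2,r)$ the numbers $\phi(r-z)$ and $\phi(-z)$ have opposite signs, so from the explicit form of $\phi$ (with $\psi\equiv1$ at these scales) $|\phi(r-z)-\phi(-z)|\ge c_\xi\,r^{\xi/2-1/2}$ for some $c_\xi>0$, hence $\mA^{(\gamma)}(r)\ge\tfrac12\int_{r/2}^r(\phi(r-z)-\phi(-z))^2\mu(dz)\ge c'_\xi\,r^{\xi-1}\,\mu([r/2,r])$. For $r\in[2^{-n-1},2^{-n}]$ the interval $[r/2,r]$ contains some dyadic interval of length $2^{-n-3}$ from the $O(1)$ such intervals inside $[2^{-n-2},2^{-n}]$, so the a.s.\ lower bounds just mentioned give $\mu([r/2,r])\ge c\,r^{1+2\gamma^2+\eps}$ uniformly over the block, and therefore $\mA^{(\gamma)}(r)\ge c''_\xi\,r^{\xi+2\gamma^2+\eps}\ge r^{\xi+2\gamma^2+2\eps}$ for all small $r$.

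For the \emph{upper bound} I would split the chaos integral \eqref{eq:unregularized diffusion coefficient} at $|z|=4r$: the far part is $\le Cr^2\!\int_{4r<|z|<L}\!|z|^{\xi-3}\mu(dz)+Cr^2$ and the near part is $\le C\!\int_{|z|\le5r}(|r-z|^{\xi-1}+|z|^{\xi-1})\mu(dz)$. When $\xi\ge1$ the kernel $\phi$ is bounded near $0$, and since the relevant $\mu$‑integrals run over intervals centered at $0$ — hence contained in fixed dyadic ones — the a.s.\ dyadic upper bounds on $\mu$, together with the geometric sums (dominated by the scale $\approx r$), give $\mA^{(\gamma)}(r)\le Cr^{\xi+2\gamma^2-\eps}$ for every small $r$. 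For $0<\xi<1$ the factor $|r-z|^{\xi-1}$ blows up at $z=r$ and $\mA^{(\gamma)}(r)$ is genuinely anomalously large at the (Lebesgue‑null but dense) set of thick points of $\mu$, so the assertion is to be read Lebesgue‑a.e.\ in $r$ — which is precisely what is used downstream, since only the integrals $\int_0^\delta r\,\frak{m}^{(\gamma)}(dr)$ and $\frak{m}^{(\gamma)}((0,\delta))$ enter the Feller analysis. For that version one uses Fubini: $\E^\gf\!\big[\int_{2^{-n-1}}^{2^{-n}}\mathds{1}\{\mA^{(\gamma)}(r)>r^{\xi+2\gamma^2-\eps}\}\,dr\big]\le 2^{-n}\!\sup_{r\le2^{-n}}\P^\gf(\mA^{(\gamma)}(r)>r^{\xi+2\gamma^2-\eps})\le C\,2^{-n(1+p(\eps-2\gamma^2p))}$, which is summable, so Borel--Cantelli forces the total Lebesgue measure of $\{r\in(0,1):\mA^{(\gamma)}(r)>r^{\xi+2\gamma^2-\eps}\}$ to be $\P^\gf$‑a.s.\ finite and to concentrate away from the origin, giving the bound for a.e.\ small $r$. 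Intersecting the resulting events over $\eps_k\downarrow0$ finishes the argument.

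The main obstacle I expect is exactly the uniform/a.e.\ upper control in the rough regime $\xi<1$: \cref{thm:assumption} is pointwise in $r$, whereas $r\mapsto\mA^{(\gamma)}(r)$ is itself a heavy‑tailed, multifractal object that cannot be bounded uniformly over a dyadic block, and because $\mu$ is translation‑stationary only \emph{in law}, the almost‑sure dyadic events cannot be transplanted to the shifted copies $\mu(r'+\,\cdot\,)$ that would appear in a pathwise modulus‑of‑continuity estimate. One is therefore pushed to route through moments and a Fubini/Borel--Cantelli argument, and to settle for the Lebesgue‑almost‑everywhere formulation, which is nonetheless the form needed for the subsequent boundary classification.
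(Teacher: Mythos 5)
Your first paragraph is, almost verbatim, the paper's entire proof: Markov's inequality with a small positive (resp.\ negative) moment order, the observation that $\zeta_\mA(p)/p\to\xi+2\gamma^2$ as $p\to0$ so that $\zeta_\mA(p)-p(\xi+2\gamma^2-\eps)=p(\eps-2\gamma^2p)$ can be made positive, summability along $r_n=2^{-n}$, and Borel--Cantelli for each bound. Where you diverge is the passage from the dyadic lattice to general small $r$: the paper disposes of this in a single sentence (``the general case follows by considering a general sequence $r_n$ going to $0$ and approximating it by the dyadic sequence''), whereas you devote most of your effort to it. Your concern there is legitimate --- \cref{thm:assumption} is a pointwise-in-$r$ moment bound, $r\mapsto\mA^{(\gamma)}(r)$ is neither monotone nor equipped with an a priori modulus of continuity that the dyadic events could be transported along, so the paper's interpolation step is terser than the difficulty warrants. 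Your replacement (a direct pathwise lower bound exploiting the sign structure of $\phi$ on $(r/2,r)$ together with a.s.\ lower bounds on the $O(1)$ dyadic $\mu$-masses per scale, and a Fubini/Borel--Cantelli argument yielding the upper bound for Lebesgue-almost every small $r$) is sound, and the Lebesgue-a.e.\ version you settle for in the regime $\xi<1$ is indeed all that the downstream Feller criteria require, since only $\frak{m}^{(\gamma)}((0,\delta))$ and $\int_0^\delta r\,\frak{m}^{(\gamma)}(dr)$ enter \cref{thm:multifractal phases}. In short: the paper's route buys brevity by leaving the interpolation implicit; yours buys an explicit account of the one step that is actually delicate, at the mild cost of weakening the stated conclusion to an a.e.\ (rather than uniform) upper bound when $\xi<1$.
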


The proofs of \cref{thm:power-law spectrum A} and \cref{prop:Aestimate} can be found  in Appendix \ref{sec:proofs}.

\subsection{Quenched phase transitions}

From the estimate given by \cref{prop:Aestimate}, we obtain the classification of the boundary point 
$r=0$ for the multifractal Kraichnan process, following the analysis of Subsection \ref{ssec:unregularizedKraichnan}.
The classification can be carried out by replacing $\xi$ 
with $\xi_{\text{eff}}=\xi + 2\gamma^2$ in the one-dimensional Kraichnan model, leading to 
\begin{theorem}\label{thm:multifractal phases}
    Suppose that the \cref{thm:assumption} holds. 
    Let $\xi\in (0,2]$ and $\gamma \in [0,\sqrt 2/2)$.
	Then, for the unregularized multifractal Kraicnan separation process in \cref{thm:multifractal kraichnan}, the origin is,
    $\,\,\P^\gf$-a.s.,
    \be
	\nonumber
        (I)  \text{ regular if } \xi< 1 - 2\gamma^2 ,\quad
        (II)\text{  exit if } 1 - 2\gamma^2\leq  \xi< 2 - 2\gamma^2 ,\quad
       (III) \text{ natural if } 2 - 2\gamma^2\leq\xi.
   \ee
The phase diagram is represented in the left panel of Fig. \ref{fig:phases}. 
\end{theorem}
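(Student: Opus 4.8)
The plan is to mimic the proof of \cref{thm:phase} verbatim, with the monofractal scaling $\mA(r)\approx r^\xi$ replaced by the almost-sure two-sided bound of \cref{prop:Aestimate}. Fix a typical realization of $\gf$. First I would translate \cref{prop:Aestimate} into a statement about the speed measure $\frak m^{(\gamma)}(dr) = dr/(2\mA^{(\gamma)}(r))$: for every $\eps>0$, $\P^\gf$-a.s., we have $r^{-\xi-2\gamma^2-\eps}\lesssim 1/\mA^{(\gamma)}(r)\lesssim r^{-\xi-2\gamma^2+\eps}$ as $r\to 0$, hence the same two-sided bounds hold for the density of $\frak m^{(\gamma)}$ on a small interval $(0,\delta)$ (with $\delta$ depending on the realization and on $\eps$).

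The second step applies the accessibility criterion \cref{thm:closed_point} to the unregularized multifractal separation process, which by \cref{thm:mfk is markov} is $\P^\gf$-a.s.\ a strong Markov process with the stated speed measure (so the criterion applies pathwise in $\gf$). Using the bounds above, $\int_0^\delta r\,\frak m^{(\gamma)}(dr)$ converges if and only if $\int_0^\delta r^{1-\xi-2\gamma^2\pm\eps}\,dr$ converges; since $\eps$ is arbitrary, this integral is finite precisely when $2-\xi-2\gamma^2>0$, i.e.\ $\xi<2-2\gamma^2$, and infinite when $\xi>2-2\gamma^2$. The boundary value $\xi=2-2\gamma^2$ is handled by the fact that the $\eps$-margin can be taken arbitrarily small on either side, or by noting $\int_0^\delta r^{1-\xi-2\gamma^2-\eps}dr=\int_0^\delta r^{-1-\eps}dr=\infty$; either way one concludes $0$ is inaccessible for $\xi\geq 2-2\gamma^2$ and accessible for $\xi<2-2\gamma^2$. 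Invoking \cref{thm:natural} — valid here since the unregularized multifractal separation process has no drift, being a time-changed Brownian motion — gives that the origin is natural when $\xi\geq 2-2\gamma^2$, which is case (III).

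For cases (I) and (II), assume $\xi<2-2\gamma^2$, so $0$ is accessible, and apply \cref{thm:closed_point}'s companion, the exit/regular dichotomy \eqref{eq:exit}: estimate $\frak m^{(\gamma)}((0,\delta))\approx\int_0^\delta r^{-\xi-2\gamma^2\pm\eps}\,dr$, which is finite iff $\xi+2\gamma^2<1$ and infinite iff $\xi+2\gamma^2>1$; the threshold $\xi=1-2\gamma^2$ falls on the divergent side, so $0$ is regular for $\xi<1-2\gamma^2$ (case I) and exit for $1-2\gamma^2\leq\xi<2-2\gamma^2$ (case II). This completes the classification.

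The only genuine subtlety — and the main thing to be careful about rather than a true obstacle, since \cref{prop:Aestimate} does the heavy lifting — is that the bounds in \cref{prop:Aestimate} hold only in an $\eps$-window and only asymptotically as $r\to 0$, with the constants and the threshold $\delta$ depending on the realization of $\gf$. One must therefore phrase each convergence/divergence test so that it is insensitive to the multiplicative constants and to the arbitrarily small $\eps$, which is automatic for power-law integrands near $0$, and one must check that the exceptional $\gf$-null set can be taken independent of $\eps$ by intersecting over a sequence $\eps_n\downarrow 0$. The boundary cases $\xi=1-2\gamma^2$ and $\xi=2-2\gamma^2$ should be stated explicitly to match the closed/open inequalities in the theorem. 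No new probabilistic input beyond what is already established in \cref{sec:Kraichnan} is needed.
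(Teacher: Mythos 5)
Your overall route is exactly the paper's: the paper's entire proof of this theorem is the remark that, given \cref{prop:Aestimate}, one reruns the proof of \cref{thm:phase} with $\xi$ replaced by $\xi_{\mathrm{eff}}=\xi+2\gamma^2$, i.e.\ test accessibility via \cref{thm:closed_point}, invoke \cref{thm:natural} for the driftless time-changed Brownian motion, and distinguish regular from exit via \eqref{eq:exit}. Your treatment of the strict-inequality regions is correct, and your remarks about intersecting the $\gf$-null sets over a sequence $\eps_n\downarrow 0$ and about the realization-dependent threshold $\delta$ are exactly the right points of care.

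Two issues, one cosmetic and one substantive. Cosmetic: your displayed two-sided bound on $1/\mA^{(\gamma)}$ is written backwards; for $r<1$ the correct order is $r^{-\xi-2\gamma^2+\eps}\lesssim 1/\mA^{(\gamma)}(r)\lesssim r^{-\xi-2\gamma^2-\eps}$. Substantive: your handling of the critical lines $\xi=2-2\gamma^2$ and $\xi=1-2\gamma^2$ does not work. To place $\xi=2-2\gamma^2$ in the inaccessible phase you must show $\int_0^\delta r\,\frak m^{(\gamma)}(dr)=+\infty$, which requires a divergent \emph{lower} bound on $r/\mA^{(\gamma)}(r)$; the only lower bound \cref{prop:Aestimate} supplies is $r^{-1+\eps}$, whose integral converges. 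The computation you invoke, $\int_0^\delta r^{-1-\eps}\,dr=\infty$, is the divergence of the \emph{upper} bound, which proves nothing. The same indeterminacy occurs at $\xi=1-2\gamma^2$, where $\frak m^{(\gamma)}((0,\delta))$ is sandwiched between a convergent and a divergent integral. On the critical lines the $\eps$-window of \cref{prop:Aestimate} is genuinely too coarse to decide, and one would need sharper (e.g.\ logarithmic) control of $\mA^{(\gamma)}$ near $0$, analogous to the exact iff of the Seiberg bound used for the MLBM. To be fair, the paper's own two-line proof is silent on exactly this point, so the limitation is shared; but since you explicitly claim to settle the boundary cases, you should either supply the finer estimate or flag them as requiring input beyond \cref{prop:Aestimate}.
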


\cref{thm:multifractal phases} rigorously establishes a smoothing  
effect of intermittency on Lagrangian dispersion.  
For small values of $\xi$ and $\gamma$, the origin is a regular  
boundary point. In this regime, similarly to the (monofractal)  
Kraichnan model, the fate of the Lagrangian flow depends on the  
weight $\frak M_0^{(\gamma)}$. 
As discussed in \cref{ssec:vanishingreg}, this quantity
not only heavily depends on the regularization scheme employed but
also on the precise manner in which the regularization is removed.
If viscosity is removed first, reflecting boundary condition  
is selected, whereas removing thermal noise first leads 
to absorbing boundary condition.

For $\xi<1$,  the multifractal Kraichnan model  
transitions into the exit phase (II)  as $\gamma$ increases,  
In this phase, particles coalesce upon collision regardless of the limiting procedure.
This universal behavior with respect to coalescence  in the multifractal Kraichnan model for $\xi<1$ is a manifestation of the smoothing effect of intermittency.
For $1<\xi<2$, we observe a transition  to the natural phase (III) by increasing $\gamma$.  
In this region, particles neither collide with one another  nor branch out when starting from the same initial position.
This behavior is typically associated with smooth flows, where the driving velocity field is Lipschitz continuous and a unique  solution to the flow equations exists.  The fact that the natural phase extends for $\xi<2$,  at the cost of increasing $\gamma$,  suggests that the flow equations driven by rough intermittent  velocity fields ($1<\xi<2$) admit a unique solution,  provided that $\gamma$ is sufficiently large.   In other words, the smoothing effect  seems to induce a phenomenon of \emph{regularization by intermittency}. 

\begin{figure}[h]
    \includegraphics[width=0.49\textwidth]{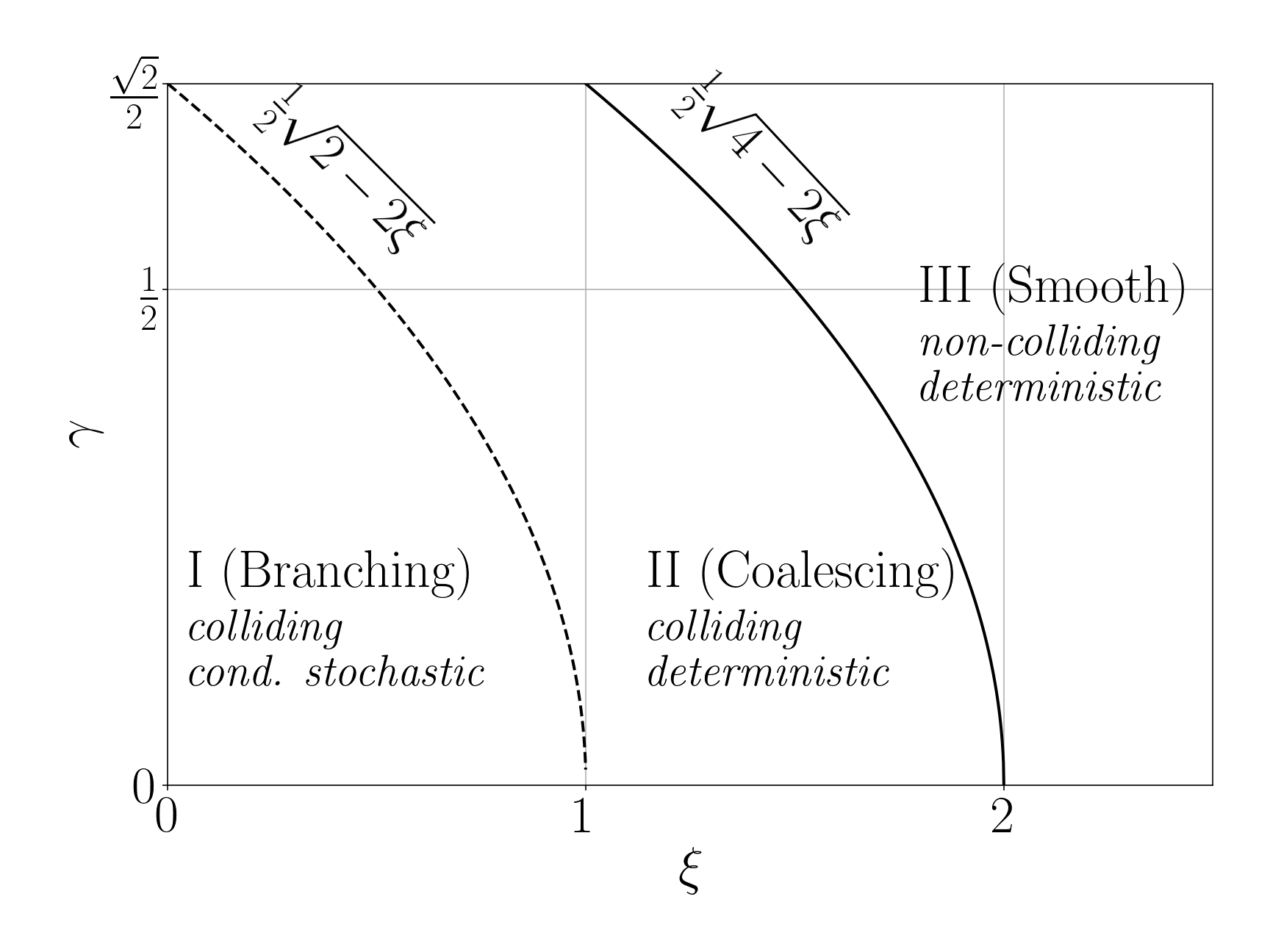}
    \includegraphics[width=0.49\textwidth]{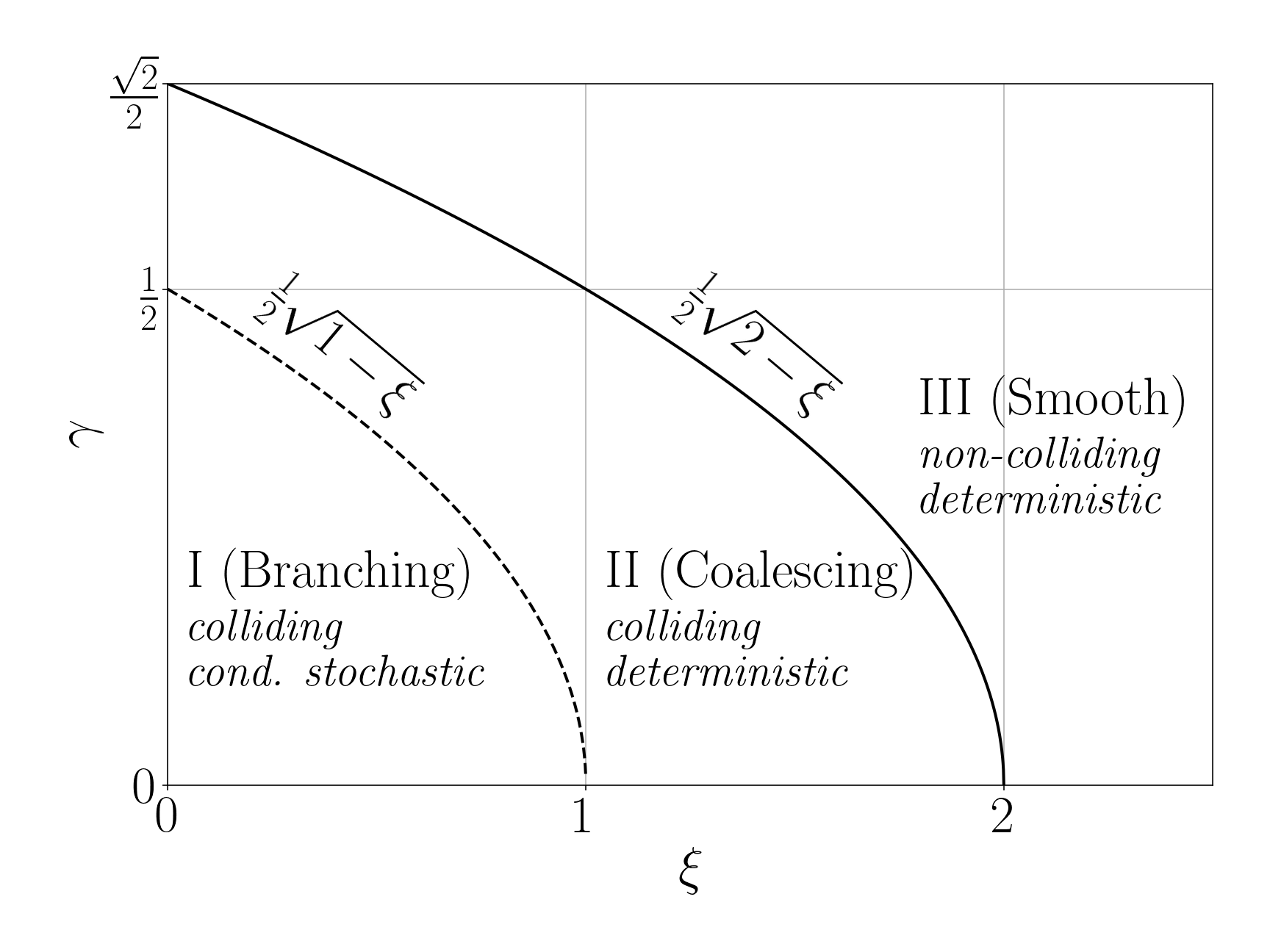}
    \caption{Phases of the Lagrangian flow in the multifractal Kraichnan model. Left: Quenched setting of Theorem \ref{thm:multifractal phases}. Right: Annealed setting of \S \ref{ssec:annealed}.}
    \label{fig:phases}
\end{figure}

\subsection{Annealed phase transitions}\label{ssec:annealed}

In a previous numerical  paper \cite{considera2023spontaneous},  we studied a finite-dimensional approximation  of the multifractal Kraichnan model, obtained by interpreting the advection of particles  as an interaction through a random pairwise potential.  
The use of this simplified model allowed for a mean-field Ansatz,  which ultimately led to the conclusion that  phase transitions   are prescribed by the ``mean-field exponent''  $\xi_{mf} = \xi + 4\gamma^2$, different from the quenched exponent  $\xi+2\gamma^2$ of Theorem \ref{thm:multifractal phases}.
We however point out that the exponent $\xi_{mf}$ 
can be interpreted as the relevant exponent driving 
the phase transition of the separation process in the \emph{annealed setting}.
The annealed setting is obtained by letting the separation evolve 
in an averaged environnment.

The annealed separation process is defined as the 
time-changed Brownian motion parametrized by the annealed clock process
\be\label{eq:unregularized clock2}
    \C_{mf}(t) = \mathbb E^\Gamma \int_0^t \dfrac{ds}{2 \mA^{(\gamma)}(B_s)} =  \int_{\mathbb R}\ell(y,t) \E^{\gf} {\frak m}^{(\gamma)} (dy) ,
\ee
where we recall that $\ell$ denotes the Brownian local time. 
The \emph{annealed speed measure} is defined by  averaging the quenched
speed measure \eqref{eq:speed2} over realizations of $\gf$
in the following manner
\be
        \E^{\gf} {\frak m}^{(\gamma)} (dr) = \E^\gf \left(\left(2 \mA^{\gamma}(r)\right)^{-1}\right)\,dr\\
        \approx r^{\zeta_\mA(-1)} dr\\
        \sim r^{-\xi_{mf}} dr.
	\label{eq:meanfieldFeller}
\ee
Note that, to obtain the scaling relation above, we have assumed that the 
scaling law in \cref{thm:power-law spectrum A} also holds for $p=-1$. 
The infinitesimal generator of the annealed separation process
can then be recovered from the speed measure  
through relation \eqref{eq:generator_speed}  as 
\begin{equation}\label{eq:annealed generator}
    \mL_{mf} = r^{\xi_{mf}}\frac{\mathrm{d^2}}{\mathrm{d}r^2}.
\end{equation}
Observe that the operator above is deterministic, with no GMC component,  
as a consequence of averaging out its multifractal contribution.  
In the annealed setting, the multifractal corrections are thus fully prescribed  by the  
(effective) mean-field exponent $\xi_{mf}$. Substituting $\xi$ by $\xi_{mf}$ in 
Theorem \ref{thm:phase} implies a 
stochastic/determistic dichotomy at $\xi_{mf} =1$,
and a colliding/non-colliding dichotomy at $\xi_{mf}=2$. 
This leads to the \emph{annealed} diagram represented in the  right panel of Fig.\ref{fig:phases}. \cb The net effect of the averaging procedure is to make the smoothing effect of intermittency more pronounced, with the transitions  I$-$II and II$-$III occurring at smaller values of $\gamma$.
Moreover, the annealed Lagrangian flow can transition  from regular to natural for arbitrarly small values of $\xi$.
In this sense, averaging the speed measure over realizations of $\gf$
provides an additional smoothing mechanism.

The relevance of the mean-field exponent $\xi_{mf}$ was previously pointed out  in  \cite{considera2023spontaneous} using a heuristic approach, which required substituting the random fields by random potentials --- in the spirit of the Bessel shortcut of \cref{ssec:Besselshortcut}.
The approach developed here provides a clean way to retrieve the predictions of \cite{considera2023spontaneous},  
without the need for such (a priori unjustified) substitution.  As a side note, we point out that the (environmental) averaging over the GMC measure leading to the annealed process
bears a formal similarity
to stochastic homogenization theory for random elliptic operators.  
In that context, one considers a diffusion operator with random coefficients  
and seeks to derive an effective, deterministic operator  
that governs the large-scale behavior of the underlying  
diffusion process in the so-called ``homogenization limit'',  
that is, in the limit where the scale of heterogeneity tends to zero.  
In the particular case of one-dimensional operators in non-divergence form,  
the effective diffusion coefficient is given by the harmonic average of the  
initial, environment-dependent random coefficient \cite{papanicolaou1982diffusions,papanicolaou1995diffusion,pavliotis2008multiscale}.
This is in close correspondence with the calculations that yielded  
the generator~\eqref{eq:annealed generator}.  
It is thus tempting to view~\eqref{eq:annealed generator}  as the homogenized version of the quenched generator associated to the speed measure~\eqref{eq:speed2},  governing the statistics of the multifractal Kraichnan separation process  
in a diffusive scaling limit of the type $\lim_{\varepsilon \to 0} \varepsilon R(t/\varepsilon^2)$. Such connection could prospectively hold in phase (I), in the absence of trapping effects, but its rigorous substantiation lies beyond the scope of the present paper. 
\cb


\section{Multifractal behavior}
\label{sec:multifractal}
The smoothing behavior of the Lagrangian flow in the quenched setting admits a physical interpretation in terms of the Parisi–Frisch multifractal spectrum, which, within the GMC framework, is formulated using the notion of $\alpha$-thick points.
Interpreting the phase diagram of \cref{thm:multifractal phases}  through the lens of the multifractal formalism sheds some light on the  
interplay between Eulerian roughness and the different phases of the Lagrangian flow.

\subsection{$\alpha$-thick points and Parisi-Frisch multiscaling}
\label{ssec:thickpoints}
\begin{figure}
 	\includegraphics[width=0.49\textwidth]{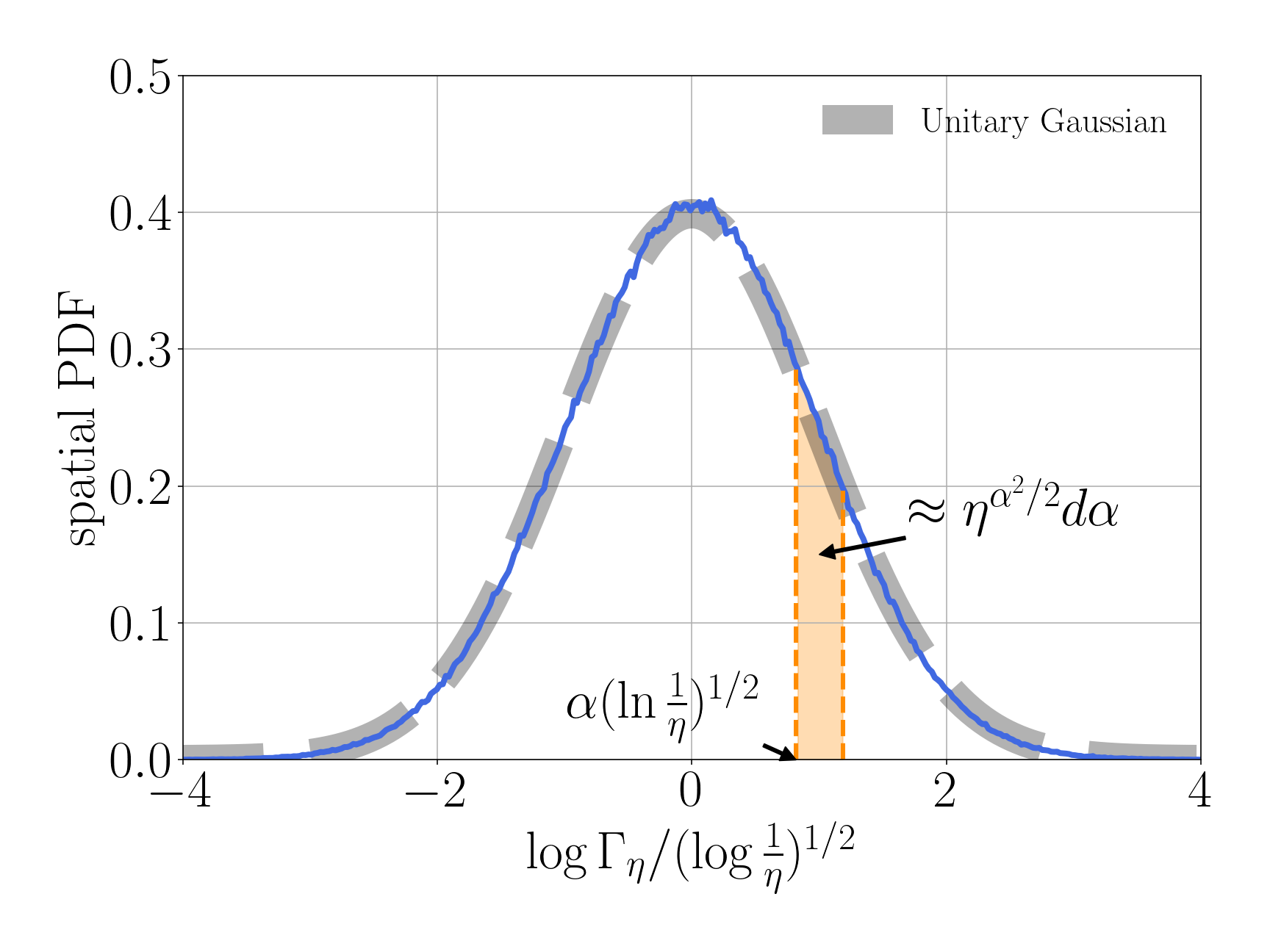}
 	\includegraphics[width=0.49\textwidth]{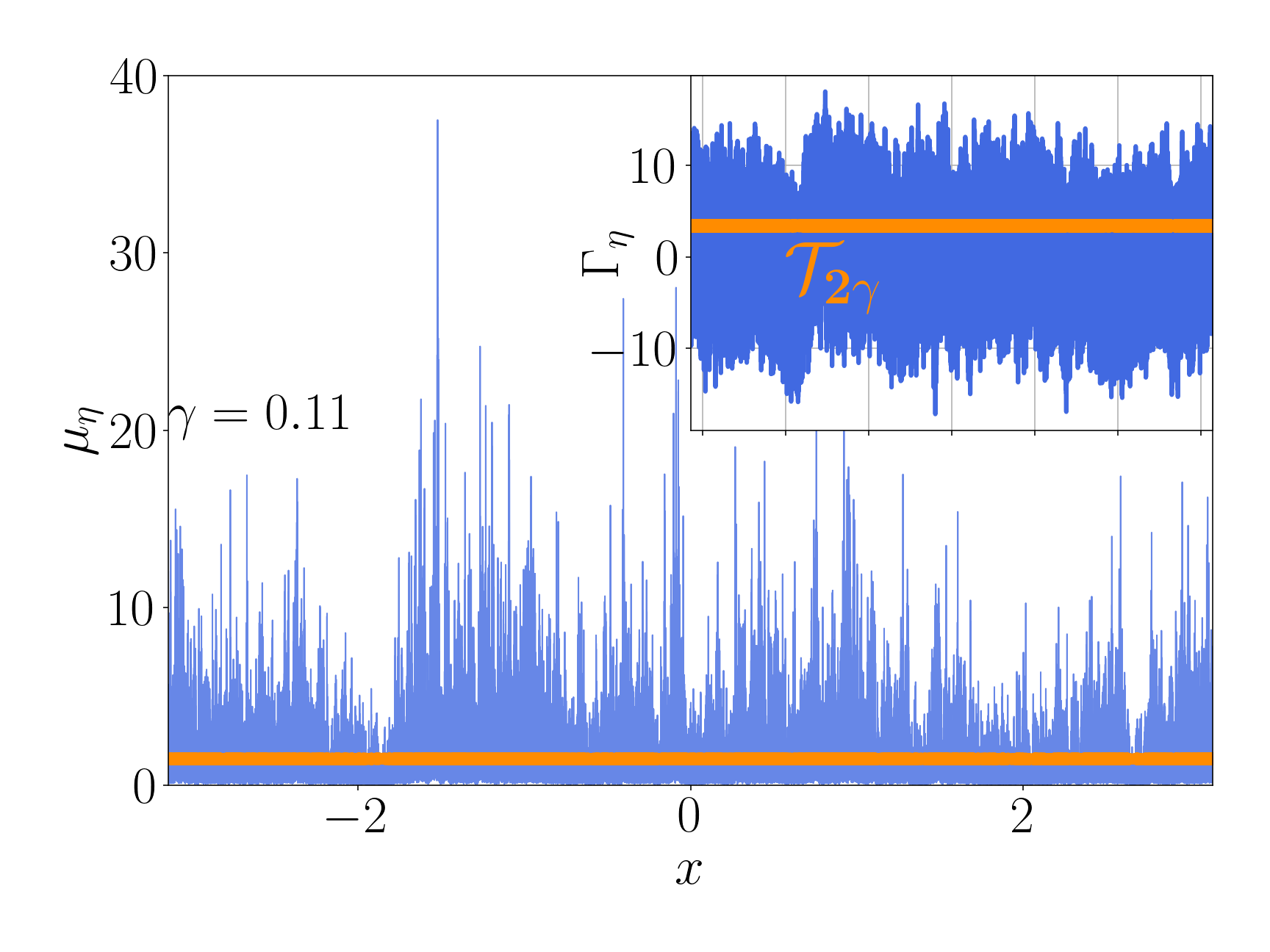}
	\caption{Left: Spatial distribution of one realization of the  $\epsilon$-GMC on $(-\pi,\pi)$. The shaded area represent contribution from  the $\alpha$-thick-point for $\alpha =2\gamma$.  Right: $\mu_\eta,\Gamma_\eta$ together with their $2\gamma$- thick points in orange.}
	\label{fig:2}
\end{figure}
Beyond the  quadratic behavior of the mass scaling exponents \eqref{eq:power_law_spectrum_gmc},  another facet of multifractality in GMC theory is formalized through the notion of $\alpha$-thick points 
of the generalized field $\gf$. More precisely,  
for $\alpha\in \R$,
we define the set $\T_\alpha$ of $\alpha$-thick points of $\gf$ as  
\begin{equation}\label{eq:thickpoints}
    \T_\alpha =\left\{x\in [0,L] \,\,\, : \,\,\, \lim_{r\to 0}\frac{\gf_r(x)}{-\log(r)}=\alpha\right\}.
\end{equation}
As mentioned in \cref{sec:heuristic}, 
$\gf_r$ typically behaves as $\sqrt{\log(1/r)}$ 
as $r$ goes to $0$.
As a result, typical behavior is described by $0$-thick points,
whereas non-typical behavior is characterized by 
$\alpha$-thick points with $\alpha \neq 0$.
In this latter case, the set of $\alpha$-thick points 
corresponds to regions in space where the field $\gf_r$ is 
unusually large. 
The $\alpha$-thick points are responsible for the scaling properties of the GMC:
if $x \in \T_\alpha$, then we have local scaling property $\Gamma_r(x)  \asymp r^{-\alpha}$. 
This entails that the corresponding local GMC averages scale as 
$\left.\mu_r(x)\right|_{x \in \T_\alpha} \asymp r^{2\gamma^2-2\gamma\alpha}$.

The sets $\T_\alpha$ are known to form a disjoint cover of the underlying  
space, serving as carriers for GMC measures  \cite{rhodes2014gaussian,kahane1985chaos}.  
More specifically, $\gmc\left(\T_{\pm 2\gamma}^c\right) = 0$,  that is, the sets $\T_{\pm 2\gamma}$ give full mass to $\gmc$.  
In addition, upon identifying the Lebesgue measure as  a GMC measure with vanishing intermittency parameter,  
one deduces that the set of typical points $\T_0$ gives full  
mass to it, and moreover, $\text{Leb}(\T_\alpha) = 0$ if $\alpha \neq 0$.
The sets $\T_\alpha$, however, have nontrivial Hausdorff dimension  
given by $D_\alpha = \left(1-\frac{\alpha^2}{2} \right)_+$ \cite{hu2010thick,rhodes2016lecture,berestycki2024gaussian}
--- a feature qualitatively  illustrated in Fig.~\ref{fig:2}.

These properties suggest that, in a given realization of the GMC, the  mass moments obtained from  space-averaging of the $r$-mollified GMC can be computed as 
\be
	\label{eq:PFrisch}
	\left\langle \mu^p_r(x)\right \rangle_{x\in (0,L)}  \asymp \int_{-\sqrt 2}^{\sqrt 2} d\alpha\; r^{1-D_\alpha}r^{2p\gamma^2 - 2p\gamma\alpha} \asymp r^{\tilde\tau_p},\quad \quad \tilde \tau_p = \inf_{\alpha\in (-\sqrt 2, \sqrt 2)}\left\lbrace \dfrac{\alpha^2}{2} + 2p\gamma\left(\gamma-\alpha\right) \right\rbrace,
\ee
prescribing the exponents of  $L_p$ norms as Legendre-Fenchel transforms \cite{touchette2005legendre}: Such formula is the classical Parisi-Frisch multifractal framework \cite{frisch1985singularity} known in turbulence studies, where $D_\alpha$ is known as the singularity spectrum. 
The exponents $\tilde \tau_p$ coincinde with $\tau_p$ of Eq.~\eqref{eq:power_law_spectrum_gmc} provided $|p|<\hspace{-0.15cm}\sqrt{2}/\gamma$ and otherwise exhibit linear scaling with $p$. The difference with Eq.~\eqref{eq:power_law_spectrum_gmc} comes from the Parisi-Frisch formula being 
a ``quenched" formula involving non-trivial 
coverings of the space, 
while Eq.~\eqref{eq:power_law_spectrum_gmc} is 
an ``annealed" formula that averages over the 
environment. 
A clean mathematical formulation of the
Parisi-Frisch multifractal formalism 
\eqref{eq:PFrisch} is the scope of the multifractal analysis of measures, 
leading in particular to the so-called Khnizhnik-Polyakov-Zamolodchikov (KPZ) 
relations in 2D LQG \cite{rhodes2014gaussian,berestycki2024gaussian}. 

At the level of the field \eqref{eq:velocity},
the refined self-similarity heursitics 	\eqref{eq:rss}
suggests that  the sets $\T_\alpha$ support the scaling behavior 
$U(x+r)-U(x) \sim  \mu_r(x)^{1/2}|_{x\in \T_\alpha} r^{\xi/2}\sim   r^H$ with the  H\"older exponent 
$H=\xi/2+\gamma^2-\gamma \alpha$. 
As such, the multifractal random field \eqref{eq:velocity} intertwines a wide range of singular structures characterized by exponents $H \in [H_{\min}\,,\,H_{\max}]$.
The most probable exponent is given by $\overline H = \xi/2 + \gamma^2$, corresponding to $\alpha=0$,
and with space-filling fractal dimension $D_0=1$. 
The least and most singular structures correspond to, respectively,  
$H_{\max} = \xi/2+\gamma^2+\gamma\sqrt 2$ and $H_{\min} = \xi/2+\gamma^2-\gamma\sqrt 2$, 
each supported on a set of vanishing Hausdorff dimension $D_{\pm \sqrt 2}=0$.

\subsection{Lagrangian vs Eulerian roughness}
\label{ssec:LagvsEule}
As the intermittency parameter $\gamma$ increases,  the lowest H\"older exponent $H_{\min}$ decreases. 
Such maximal roughness is captured by the Kolmogorov criterion, which estimates the 
rougness parameter  of a random field with scaling exponents $\zeta_U(p)=p\left(\xi+2\gamma^2\right)-\frac{\gamma^2p^2}{2}$ as
twice the lowest H\"older exponent $H_{\min}$ through the formula
\begin{equation}\label{eq:Kolmogorov_supremum}
    \xi_{\min} = 2\sup_p \frac{\zeta_U(p)-1}{p},\quad \text{yielding}\quad     \xi_{\min}=2H_{\min}=\xi +2\gamma^2-2\sqrt{2}\gamma.
\end{equation} 
For $\gamma=0$,  the expression above reduces to 
$\xi_{\min}=\xi$ and we recover the monofractal case. For $0<\gamma<\sqrt 2/2$, as allowed by GMC theory, one has $\xi_{\min}<\xi$. This means that, in the Kolmogorov sense,  the multifractal velocity field \eqref{eq:velocity} is rougher than its monofractal counterpart.
By constrast, Theorem \ref{thm:multifractal phases} identifies the relevant parameter for the Lagrangian flow as 
\begin{equation}
    \overline \xi  =  \xi+2\gamma^2= 2 \frac{\mathrm{d}\zeta_U(p)}{\mathrm{d}p} \Big|_{p=0},
\end{equation}  
where the second identity is obtained from \eqref{eq:zeta_U} by direct calculation.
Consequently, it is not the most singular structures that govern  the Lagrangian flow, but the bulk effect of the singularity  
spectrum, characterized by the most probable H\"older exponent $\overline H = \overline \xi/2>\xi/2$.

\subsection{Multifractal Kraichnan spends most of its time at typical points}\label{sec:mfk multifractal behavior}

The relevance of the most probable exponent $\overline{H}$, rather than $H_{\min}$,  
as the driving parameter in \cref{thm:multifractal phases}  
stems from the behavior of the Lagrangian trajectories.
In this section, we show that the one-dimensional multifractal Kraichnan
process tends to concentrate around points where the 
velocity field undergoes typical fluctuations. 
As a result, for most of their temporal evolution,  particles are carried by the most typical 
structures in the flow, characterized by $\overline{H}$.

For simplicity,
we work with the multifractal Kraichnan diffusion process 
on the bounded interval $[0,L]$. Contrary to the previous 
sections, we are now interested in the behavior in the interior 
$(0,L)$ rather than at the  boundaries.
The fundamental observation is that on $(0,L)$, the speed measure \eqref{eq:speed2} is absolutely continuous with respect  to the Lebesgue measure, allowing us to conclude that  $\frak m^{(\gamma)}(\T_\alpha) = 0$ 
if $\alpha \neq 0$. \cb
Let us comment what it means for a set to have vanishing  
speed measure. For general driftless one-dimensional diffusions,  
the density of the speed measure is inversely proportional to the  
diffusion coefficient, meaning that regions where the speed measure  
is small correspond to regions where the underlying diffusion process  
diffuses rapidly.  
Such regions are naturally identified as those in which  
the diffusion process spends little time,  
as a consequence of large fluctuations pushing the process away from them.  
Thus, the speed measure of a set can be viewed as a measure of the  
total amount of time spent by the process in the given set.
The extreme case of vanishing speed measure then corresponds  
to the situation in which the diffusion process  
spends no time in the given region.  
Applied to the present setting,  
$\frak m^{(\gamma)}(\T_\alpha) = 0$ for $\alpha \neq 0$
indicates that the multifractal Kraichnan diffusion spends no time  
at non-typical points of the field $U$ and, in fact,  
spends most of its time at typical points.
This heuristic is captured by the following 
\begin{proposition}\label{thm: mfk spends time}
The multifractal Kraichnan diffusion process 
spends Lebesgue-almost all its time in typical points of $\,\,\gf$,
that is,

$$
\P^\gf \text{-} a.s., \;\; \text{for all}\;\; r\in (0,L), \;\;
\P^B_r\text{-}a.s.,\;\; 
\text{Leb}\left(\left\{0\leq t\leq T \,\,:\,\, R(t)\in \T_0^c \right\}\right)=0,
$$
where $T$ is the first exit time of $R$ from $(0,L)$.
\end{proposition}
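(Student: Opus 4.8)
The plan is to transport the question onto the driving Brownian motion through the time-change representation of \cref{def:time_representation}, and then to read off the Lebesgue time spent by $R$ inside $\T_0^c$ directly from the speed measure $\frak{m}^{(\gamma)}$. First I would restrict to the full-probability event on which $\mu$ has full support; on this event $\mA^{(\gamma)}(r)=\tfrac12\int_\R[\phi(r-z)-\phi(-z)]^2\mu(dz)$ is strictly positive and finite for every $r\in(0,L)$, so the density $(2\mA^{(\gamma)})^{-1}$ of $\frak{m}^{(\gamma)}$ is finite throughout $(0,L)$ and the clock $\C$ of \eqref{eq:unregularized clock} is continuous and strictly increasing. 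Writing $R(t)=B(\tau(t))$ with $\tau$ the right-continuous inverse of $\C$, the substitution $t=\C(s)$ gives, for every nonnegative Borel $g$,
\[
\int_0^{T}g(R(t))\,dt=\int_0^{\tau(T)}g(B_s)\,\frac{ds}{2\mA^{(\gamma)}(B_s)},
\]
and, $T$ being the first exit time of $R$ from $(0,L)$, the upper limit $S:=\tau(T)$ is precisely the first exit time of $B$ from $(0,L)$, which is $\P^B_r$-a.s.\ finite.

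Next I would take $g=\mathds{1}_{\T_0^c}$ and invoke the Brownian occupation times formula \cite[Cor.~VI.1.6]{revuz2013continuous} together with $\frak{m}^{(\gamma)}(dr)=(2\mA^{(\gamma)}(r))^{-1}dr$ on $(0,L)$ to rewrite the quantity of interest as $\int_{\T_0^c}\ell(y,S)\,\frak{m}^{(\gamma)}(dy)$, where $\ell(\cdot,S)$ is the Brownian local time at the a.s.\ finite time $S$, hence finite at every point by joint continuity of the local time field. It then remains to show $\frak{m}^{(\gamma)}(\T_0^c)=0$. This rests on two facts already at hand: on $(0,L)$ the speed measure is absolutely continuous with respect to Lebesgue measure, and the typical set $\T_0$ has full Lebesgue measure in $[0,L]$ (\cref{ssec:thickpoints}), so that $\text{Leb}(\T_0^c\cap(0,L))=0$ forces $\frak{m}^{(\gamma)}(\T_0^c\cap(0,L))=0$. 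The only points left out are the two endpoints $0$ and $L$, which I would dispose of by noting that the covariance of $\gf$ is translation invariant, hence every fixed point --- in particular $0$ and $L$ --- lies in $\T_0$ $\P^\gf$-a.s.\ (as $\gf_r(x)/(-\log r)\to 0$ $\P^\gf$-a.s.\ for each fixed $x$); thus $\T_0^c\cap[0,L]$ is, up to a $\P^\gf$-null set, contained in $(0,L)$, and $\frak{m}^{(\gamma)}(\T_0^c)=0$. Since $\ell(\cdot,S)\geq0$ is a.s.\ finite, integrating it over the $\frak{m}^{(\gamma)}$-null set $\T_0^c$ yields $0$, which is the claim.

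The step I expect to demand the most care is the time-change identity of the first paragraph, and in particular its validity up to $T$ in the exit and regular phases of \cref{thm:multifractal phases}, where the clock $\C(s)$ may diverge as $B$ approaches $0$ and $\tau$ develops a flat part. The stabilising observation, which I would make explicit, is that one need not insist on $\C$ being a genuine homeomorphism: the occupation identity of the second paragraph holds as an equality of nonnegative (a priori possibly infinite) quantities, and since $\T_0^c$ misses a set of full Lebesgue measure its $\frak{m}^{(\gamma)}$-measure is $0$ no matter how large the density $(2\mA^{(\gamma)})^{-1}$ becomes near the origin; hence the right-hand side vanishes regardless. A shorter but less self-contained route would skip the Brownian detour altogether and use the occupation time formula for the one-dimensional diffusion $R$ itself, expressed through its own local time and its speed measure $\frak{m}^{(\gamma)}$ (standard for diffusions on natural scale); the conclusion is then immediate once $\frak{m}^{(\gamma)}(\T_0^c)=0$ is established.
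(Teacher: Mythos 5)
Your proof is correct and rests on exactly the same crucial fact as the paper's: on $(0,L)$ the speed measure $\frak m^{(\gamma)}$ is absolutely continuous with respect to Lebesgue measure while $\T_0^c$ is Lebesgue-null, hence $\frak m^{(\gamma)}(\T_0^c)=0$ holds $\P^\gf$-a.s. The mechanics differ, though. The paper works in expectation: it invokes the expected-occupation (Green function) identity
$\mathbb{E}^B_r\bigl[\int_0^T f(R(s))\,ds\bigr]=\int_0^L G(r,r')f(r')\,\frak m^{(\gamma)}(dr')$
(Lemma \ref{thm: occupation lemma}, from Revuz--Yor Cor.~VII.3.8), evaluates it at $f=\ind_{\T_0^c}$ to get zero, and then uses non-negativity of the integrand to upgrade the vanishing expectation to an a.s.\ statement. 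You instead argue pathwise: the substitution $t=\C(s)$ together with the Brownian occupation-times formula turns the occupation time of $R$ in $\T_0^c$ into $\int_{\T_0^c}\ell(y,S)\,\frak m^{(\gamma)}(dy)$, which vanishes realization by realization. Your route reaches the a.s.\ conclusion directly, without the detour through expectations and the Green function, at the cost of having to justify the change of variables up to the exit time $T$ --- a point you correctly flag, and which is indeed harmless here because the identity is one between non-negative (possibly infinite) quantities whose right-hand side is zero no matter how the clock behaves near the boundary. The ``shorter route'' you mention at the end is essentially the paper's lemma restated pathwise for the diffusion $R$. Your treatment of the endpoints $0$ and $L$ is more care than is needed: the occupation identity up to the exit time from $(0,L)$ only ever sees the restriction of $\frak m^{(\gamma)}$ to the open interval, so the endpoints never enter.
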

We refer the reader to \cref{sec:proofs} for the proof. 
The proposition above allows us to conclude that  
particles almost surely never remain in  
regions exhibiting non-typical velocity fluctuations   for an uninterrupted, strictly positive time interval.  
However, they can still occupy those regions at isolated instants of time  
before being pushed away from them by strong local fluctuations.  
Still, in this latter case, the cumulated time spent at those regions 
is almost surely zero. In this sense, particles are instantaneously 
 repelled from 
the $\alpha$-thick points of $\gf$ for  $\alpha\neq0$. 

The emergence of $\overline{H}$ 
as the critical parameter characterizing the phase transitions 
in the multifractal Kraichnan model becomes natural:  Since particles are effectively concentrated in regions  
exhibiting typical velocity fluctuations, one naturally   expects $\overline{H}$ to be the most relevant Hölder exponent  
in the Parisi-Frisch singularity spectrum, as typical fluctuations   are described by $\overline{H}$.  
\cb


\section{Analogies with Liouville quantum gravity}\label{sec:MLBM}

Similar to the Bessel shortcut discussed in \S \ref{ssec:Besselshortcut} for the monofractal Kraichan flows,
one could wonder whether the phase diagrams in Fig. \ref{fig:phases}  can be retrieved from an effective diffusive process, only this time coupled to a random geometry induced by the GMC. 
The construction of diffusion processes via time changes  
based on GMC measures has been explored previously  
in \cite{garban2016liouville, berestycki2015diffusion},  
within the 2D setting of Liouville quantum gravity (LQG).  
The canonical diffusion in a random planar geometry induced by the GMC is called Liouville Brownian motion (LBM).
In this section, we introduce a unidimimensional multiplicative version of the LBM,
which we refer to as \emph{multiplicative Liouville Brownian motion} (MLBM).
The MLBM formally solves the SDE
\begin{equation}\label{eq:MLBM}
    d\B=|\B|^{\xi/2} e^{-\gamma\gf(\B)
    + 
    \gamma^2\E(\gf(\B)^2)} d\beta,
\end{equation}
and in line with the previous sections, it is effectively defined as a time-changed Brownian motion. 
We then argue that the phases of the MLBM can be mapped onto the multifractal phase diagram  
of Fig.~\ref{fig:phases}, in both the quenched and annealed settings.  
However, its multifractal behavior differs significantly from that of the multifractal separation process.  
For readers not familiar with LBM theory, we refer to  
\cite{garban2016liouville, berestycki2015diffusion} for background material.
\subsection{Multiplicative Liouville Brownian motion (MLBM)}
Upon regularizing the GMC by changing  $\Gamma \mapsto \Gamma_\eta$ in Eq.~\eqref{eq:MLBM}, 
the  MLBM can be thought of as a limiting  solution for   following 
SDE 
\begin{equation}\label{eq:MLBM2}
    d\B_\eta= \eta^{-\gamma^2} |\B_\eta|^{\xi/2} e^{-\gamma\gf_{\eta}(\B)}
    d\beta,
\end{equation}
where $\beta$ is a Brownian motion, and $\gf_\eta$ is a smooth regularization 
of the log-correlated field $\gf$, independent of $\beta$ 
(see \cref{sec:gmc2}). 
If $\xi=0$, we obtain a  regularized LBM with parameter $2\gamma$ as defined in  \cite{garban2016liouville, berestycki2015diffusion}.
Although the equation above makes sense  
in the presence of the small-scale cutoff $\eta$,  it is not well defined directly at $\eta=0$ 
due to the logarithmic singularity   of the unregularized correlation function of $\gf$ at short distances.

Similar to the (regularized) multifractal   Kraichnan process (see \cref{ssec:def-unregMF}), the $\eta$-regularized  MLBM can be written  
as the time-changed Brownian motion  
\begin{equation}
    \B_{\eta}(t) \stackrel{\text{law}}{=} B(\tau_{\eta}(t))  
    \quad\text{for}\quad t \geq 0,
\end{equation}  
where  
\begin{equation}\label{eq:regularized_clock_MLBM}
    \tau_{\eta}(t) = \inf\{s\geq0 : \C_{\eta}(s) > t\},  
    \quad 
    \C_{\eta}(t) = \eta^{2\gamma^2} \int_{0}^{t}
    \frac{e^{{2\gamma} \gf_\eta(B(s))}}{|B(s)|^{\xi}}  
    \, ds.
\end{equation}  
Recognizing 
the regularized GMC measure  $\gmc_\eta$ of Eq.~\eqref{eq:gmc_regularized}, the clock process can be rewritten
as
\begin{equation}\label{eq:clock_occupation}
    \C_{\eta}(t)  
    = \int_{\R} \ell(t,\rho) |\rho|^{-\xi} \gmc_\eta(d\rho),
\end{equation}  
where $\ell$ is the Brownian local time and $\mu_\eta$ is the regularized GMC measure \eqref{eq:gmc_regularized}.
 In the absence of small-scale regularization, the above expression  
is naturally interpreted as a chaos integral.  
This motivates the following definitions for the relevant objects,  
directly at $\eta=0$.

\begin{definition}[Clock process]\label{thm:def_clock_time}
    Let $\xi\in[0,2]$ and $\gamma\in\big[0,\frac{\sqrt{2}}{2}\big)$.
    We define the MLBM clock process
    as the random functional  
    $$
    \C(t)=\int_0^\infty \ell(t,\rho) \, |\rho|^{-\xi}\mu(d\rho),
    \quad t\geq 0
    $$
    where $\ell$ is the local time of Brownian motion reflected at zero and $\mu$ is the GMC measure of   \cref{sec:gmc2}. 
\end{definition}
\begin{definition}[MLBM]
    \label{def:time_representation}
    We define the multiplicative Liouville Brownian motion, 
    starting at $r>0$,
    as
    $$
        \B(t)=B^{r}(\tau(t)), \quad \quad t\geq0,
    $$
    where $B^{r}$ is a Brownian motion starting at $r$ and reflected at zero,  
    and $\tau(t) = \inf\{s \geq 0 : \C(s) > t\}$.
\end{definition}
Note that this defines a process restricted to the  
positive half of the real line. If $\xi = 0$, the expression  
for the clock process in \cref{thm:def_clock_time}  
reduces to that of the ``quantum clock"  
introduced in \citet{garban2016liouville},  
albeit in the one-dimensional setting.  
In this case, we recover the  
one-dimensional LBM, also restricted to the positive half of the real line.
We also stress that the existence of Brownian local times  
is a particular feature of the one-dimensional setting,  
which allows us to work directly with the unregularized process.  
This greatly simplifies the LBM theory developed in  
\citet{garban2016liouville}, avoiding the need to discuss  
convergence of approximations.

Using the notation introduced in \cref{sec:direct},  
and in the same vein as \cref{thm:mfk is markov}, a direct application of Theorem 16.56 
in \cite{breiman1992probability} leads to 
\begin{theorem}\label{thm:MLBM is markov}
    $\P^\gf$-a.s.,
    under $\P^B_{r}$,
    the MLBM
    is a stationary strong Markov process
    with continuous sample paths and speed measure given by

    \begin{equation}\label{eq:speed5}
        \frak M(dr)
    = \mathds{1}_{(0,\infty)}(r)\,\frak m(dr)
    + \frak M_0 \delta(dr),\quad \text{with}\quad \frak{m}(dr)=r^{-\xi}\mu(dr).
    \end{equation}
\end{theorem}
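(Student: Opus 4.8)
The plan is to invoke Theorem 16.56 of Breiman, exactly as the statement anticipates, and to verify that its hypotheses are met $\P^\gf$-almost surely. First I would recall the general principle: given a (possibly random, but here frozen) Radon measure $\frak M$ on $[0,\infty)$ that gives positive mass to every nonempty open subinterval and is finite on compacts away from the boundary, the time-changed reflected Brownian motion $B(\tau(t))$ with $\tau$ the right-continuous inverse of $\C(t)=\int \ell(t,\rho)\,\frak M(d\rho)$ is a strong Markov process with continuous paths whose speed measure is precisely $\frak M$. So the task reduces to checking that $\frak m(d\rho)=|\rho|^{-\xi}\mu(d\rho)$ has these regularity properties for $\P^\gf$-a.e. realization of the GMC, and that the clock $\C$ built from it in \cref{thm:def_clock_time} is a.s. finite and strictly increasing on the relevant time range, so that the construction of \cref{def:time_representation} genuinely produces a process of the type covered by Breiman's theorem.

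The key steps, in order, would be: (i) Observe that $\mu$ is $\P^\gf$-a.s. a nonatomic Radon measure on $\R$ that assigns positive finite mass to every bounded interval not containing $0$ — this is standard GMC theory, already recalled in \cref{sec:gmc2}. (ii) Multiply by the deterministic weight $|\rho|^{-\xi}$: on any interval $[a,b]$ with $0<a$ this weight is bounded above and below by positive constants, so $\frak m$ inherits from $\mu$ the property of giving positive finite mass to every such subinterval; in particular $\frak m$ restricted to $(0,\infty)$ is a legitimate speed measure in the interior. (iii) Near the boundary $\rho=0$, the behavior of $\int_0^\delta \rho^{-\xi}\mu(d\rho)$ may be finite or infinite depending on $(\xi,\gamma)$; but Breiman's construction accommodates either case — one simply records whether $0$ is accessible and, if so, assigns the atomic weight $\frak M_0\delta(d\rho)$. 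The Markov/continuity conclusions of the theorem hold regardless of the boundary type, so I would emphasize that this dichotomy is deferred to the subsequent phase-transition analysis and does not obstruct the present statement. (iv) Verify that $\C(t)=\int_0^\infty \ell(t,\rho)|\rho|^{-\xi}\mu(d\rho)$ is $\P^\gf$-a.s. finite and continuous on $[0,T_0)$, where $T_0$ is the first hitting time of $0$ by the reflected Brownian motion $B$: this follows because, on $[0,T_0)$, the support of $\rho\mapsto\ell(t,\rho)$ is a random compact subinterval of $(0,\infty)$ by continuity and boundedness of Brownian local time and paths, and on such a set both $|\rho|^{-\xi}$ and $\mu$ are bounded; strict monotonicity of $\C$ follows from the occupation-times formula together with positivity of $\frak m$ on open sets. (v) Conclude that $\tau$ is a.s. continuous and strictly increasing on the image of $\C$, so $\B(t)=B^r(\tau(t))$ is well-defined with continuous paths, and apply Theorem 16.56 to read off the strong Markov property and the identification of the speed measure as $\frak M$ in \eqref{eq:speed5}. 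Stationarity here means time-homogeneity of the transition semigroup, which is automatic for a time change of Brownian motion by a time-independent (frozen-in-time) speed measure.

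The main obstacle, and the only point requiring genuine care rather than bookkeeping, is step (iv): controlling $\C(t)$ near $\rho=0$ and confirming the claim that the statement holds without any restriction on $(\xi,\gamma)$ beyond $\gamma<\sqrt2/2$. One must be careful that the weight $|\rho|^{-\xi}$ together with the GMC's behavior near a fixed point (the thick-point structure of \cref{ssec:thickpoints}) does not produce a non-integrable singularity that $B$ actually reaches in finite time in a way that breaks the time change; the clean resolution is to restrict the assertion to times before $T_0$ (as the statement does implicitly via the reflected-at-zero construction — but in fact Breiman's framework handles the boundary through $\frak M_0$, so the process is defined for all $t$ and the boundary analysis is simply postponed). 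A second, more technical subtlety is that \cref{thm:def_clock_time} uses the local time of reflected Brownian motion on $[0,\infty)$ whereas the speed-measure formalism is cleanest for diffusions on $\R$; I would note that the reflected process is itself the time change relevant to studying $|R|$, so no new difficulty arises, and the identification \eqref{eq:speed5} with its atomic term at $0$ is exactly the $[0,\infty)$-analogue of \eqref{eq: full speed2}. Everything else is a direct transcription of the monofractal argument of \cref{ssec:unregularizedKraichnan} with $\frac{1}{2\mA(r)}\,dr$ replaced by $|\rho|^{-\xi}\mu(d\rho)$.
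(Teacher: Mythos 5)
Your proposal is correct and follows essentially the same route as the paper: the paper's proof consists precisely of the observation that the theorem is ``a direct application of Theorem 16.56 in \cite{breiman1992probability}'' to the time-changed reflected Brownian motion of \cref{def:time_representation}, exactly as you propose. Your additional verification of the hypotheses (local finiteness and positivity of $\frak m$ away from the origin, finiteness of the clock before the hitting time of $0$ via compact support of the local time, and deferral of the boundary dichotomy to the atomic weight $\frak M_0$) is sound and merely makes explicit what the paper leaves implicit.
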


For $\xi= 0$, corresponding to the LBM, the absolutely continuous component of the speed measure reduces to the GMC. 
In this sense, the LBM is the canonical diffusion associated to  the random geometry induced by the GMC.
\subsection{Phases of the 1D MLBM}
\subsubsection{Quenched setting}
\cref{thm:MLBM is markov} characterizes the speed measure  
of the MLBM, allowing us to identify its phases in a similar manner  
to the multifractal Kraichnan diffusion, both in the quenched and the annealed setting. 
To proceed, we need one  computational lemma  concerning the integrability of the GMC, 
which can be found in the lecture notes \cite{rhodes2016lecture}.
\begin{lemma}[Seiberg bound]\label{thm:integrability}
    Let $\xi \in \mathbb{R}$ and  $\delta>0$.
    Then, 
    \be
    \int_{0}^{\delta}r^{-\xi}\gmc(dr) <  +\infty,
    \quad \mathrm{a.s.}
    \ee
    if and only if 
    $\xi < 1+2\gamma^2$.
\end{lemma}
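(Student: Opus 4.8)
The plan is to prove the two directions separately, both resting on the GMC mass-moment estimates $\E^\gf\!\left(\gmc([x,x+r])^{p}\right)\widesim r^{p+\tau(p)}$ with $\tau(p)=2\gamma^2p(1-p)$ recalled in \cref{sec:gmc2} (valid for all $p<1/(2\gamma^2)$, hence for every $p\in(0,1)$ since $\gamma^2<1/2$), together with the exact scale invariance of the log-correlated field $\gf$.

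\emph{Finiteness when $\xi<1+2\gamma^2$.} First I would split the integral dyadically, writing $\int_0^\delta r^{-\xi}\gmc(dr)\le\sum_{n\ge n_0}2^{(n+1)\xi}\gmc(A_n)$ with $A_n:=[2^{-n-1},2^{-n}]$ and $2^{-n_0}\le\delta$. For a fractional exponent $q\in(0,1)$ the map $x\mapsto x^q$ is subadditive, so $\E^\gf\!\Big[\big(\int_0^\delta r^{-\xi}\gmc(dr)\big)^q\Big]\le\sum_n 2^{(n+1)\xi q}\,\E^\gf[\gmc(A_n)^q]$. Since $\gf$ is stationary, $\gmc(A_n)\law\gmc([0,2^{-n-1}])$, so the moment estimate gives $\E^\gf[\gmc(A_n)^q]\le C\,2^{-n(q+2\gamma^2 q(1-q))}$, and the series becomes $\sum_n 2^{\,nq\,[\xi-1-2\gamma^2(1-q)]}$, which converges as soon as $\xi<1+2\gamma^2(1-q)$. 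Since $\xi<1+2\gamma^2$ one can pick such a $q\in(0,1)$; the $q$-th moment is then finite, and therefore $\int_0^\delta r^{-\xi}\gmc(dr)<\infty$ almost surely.

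\emph{Divergence when $\xi\ge 1+2\gamma^2$.} By monotonicity in $\xi$ it suffices to treat $\xi=\xi_c:=1+2\gamma^2$, and the event $\{\int_0^\delta r^{-\xi_c}\gmc(dr)<\infty\}$ does not depend on $\delta$. Here I would exploit the exact scaling $\gmc(\lambda\,\cdot\,)\law\lambda^{1+2\gamma^2}e^{2\gamma\Omega_\lambda}\widetilde\gmc(\cdot)$ with $\Omega_\lambda\sim\mathcal N(0,\log(1/\lambda))$ independent of the copy $\widetilde\gmc$. Substituting $r=2^{-n}u$ in each block shows $\int_{A_n}r^{-\xi_c}\gmc(dr)\law 2^{n\xi_c}2^{-n(1+2\gamma^2)}e^{2\gamma S_n}W_n=e^{2\gamma S_n}W_n$, the prefactor being exactly $1$ precisely because $\xi=\xi_c$; here $S_n=\sum_{j\le n}\omega_j$ is the random walk built from the independent dyadic increments $\omega_j\sim\mathcal N(0,\log 2)$ of the coarse field and $W_n=\int_{1/2}^1 u^{-\xi_c}\widetilde\gmc(du)$ is an almost surely positive ``unit-block'' contribution independent of $S_n$. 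Iterating gives $\int_0^\delta r^{-\xi_c}\gmc(dr)\law\sum_{k\ge1}e^{2\gamma S_{k-1}}W_k$. Since $S$ is a centred non-degenerate random walk it is recurrent, so $S_{k-1}\ge0$ for infinitely many $k$; along that (random, but $\omega$-measurable) set of indices the conditionally i.i.d.\ factors $W_k$ satisfy $W_k\ge1$ infinitely often, so infinitely many terms of the series are $\ge1$ and the series is $+\infty$ almost surely. For $\xi>\xi_c$ the same computation produces the extra deterministic factor $2^{(\xi-\xi_c)k}\to\infty$, reinforcing divergence. (Equivalently, one may run a second Borel–Cantelli argument on the events $\{2^{k\xi_c}\gmc(A_k)\ge1\}$, whose probabilities are bounded below uniformly in $k$ by $\tfrac12\,\P^\gf(\gmc([1/2,1])\ge1)>0$; the positivity of the $W_k$ uses the standard fact that GMC has finite moments of all negative orders.)

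\emph{Main obstacle.} The finiteness half is routine once the GMC moment bounds are in hand. The delicate point is the \emph{borderline} value $\xi=\xi_c$ in the divergence half: it cannot be reached by a first-moment computation (the expectation is already infinite for every $\xi\ge1$) nor by a crude comparison $\gmc([0,r])\gtrsim r^{\xi_c+\eps}$ (which loses the endpoint). Handling it cleanly forces one through the exact self-similar recursion and the reduction to recurrence of a random walk; and making the independence of the increments $(\omega_j)$ and of the block contributions $(W_k)$ fully rigorous requires working with an explicitly scale-invariant realisation of $\gf$ (for instance a white-noise/cone construction) and transferring the conclusion to the GMC built from the kernel $\log_+(L/|r|)$ via Kahane's convexity inequalities — which is why, in the main text, we are content to invoke this result from the literature rather than reproduce the argument.
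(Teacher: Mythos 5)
The paper does not prove this lemma at all: it is imported verbatim from the lecture notes \cite{rhodes2016lecture}, so there is no in-text argument to compare against. Judged on its own, your two-sided proof is essentially the standard one from that literature and is correct in outline. The finiteness half (dyadic decomposition, subadditivity of $x\mapsto x^q$ for $q\in(0,1)$, the moment bound $\E^\gf[\gmc(A_n)^q]\le C\,2^{-n(q+\tau(q))}$, and the observation that $\xi<1+2\gamma^2=\frac{d}{dq}(q+\tau(q))\big|_{q=0}$ lets you choose $q$ small enough for summability) is complete and routine; note only that the reduction $r^{-\xi}\le 2^{(n+1)\xi}$ on $A_n$ presupposes $\xi>0$, the case $\xi\le 0$ being trivial. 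The divergence half correctly identifies the real difficulty — the borderline $\xi=\xi_c=1+2\gamma^2$ — and resolves it the standard way, via exact scale invariance, the random-walk representation $\sum_k e^{2\gamma S_{k-1}}W_k$, and recurrence of the centred walk. Two small points there: you should replace ``$W_k\ge 1$ infinitely often'' by ``$W_k\ge c$ infinitely often'' for some $c>0$ with $\P^\gf(W\ge c)>0$ (which holds since $W$ is a nonnegative, non-degenerate random variable), and the ``infinitely often'' step requires the conditional second Borel--Cantelli lemma, using that the $W_k$ are independent of each other and of the coarse field — which is exact only for a $\star$-scale-invariant realisation of $\gf$ and must be transferred to the $\log_+(L/|r|)$ kernel by Kahane's convexity inequality, a gap you explicitly flag and which is handled in \cite{rhodes2016lecture}. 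With those caveats the argument is sound and consistent with the normalization of the paper, where the GMC parameter is $2\gamma$ so that the critical exponent $1+(2\gamma)^2/2$ reads $1+2\gamma^2$.
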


The boundary behavior at $r=0$ of the 1D MLBM is then summarized by the 
\begin{theorem}\label{thm:phases MLBM}
    Let $\xi\in [0,2]$ and $\gamma \in \big[0,\frac{\sqrt{2}}{2}\big)$.
    Then, for the MLBM, the boundary point $r=0$ is,
    $\,\,\P^\gf$-a.s.,\\
	\be
	\nonumber
        \text{(I)  regular, if }\,\,\,\xi< 1 + 2\gamma^2,\quad
        \text{(II)  exit, if }\,\,\,1 + 2\gamma^2\leq
        \xi< 2 + 2\gamma^2,\quad
        \text{(III) natural, if }\,\,\,2 + 2\gamma^2\leq\xi.
	\ee
    \noindent The corresponding phase diagram is displayed in the left panel of Fig.~\ref{fig:MLBM-phases}.
\end{theorem}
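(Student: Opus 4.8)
The plan is to run Feller's boundary classification for the origin exactly as in \cref{ssec:phasesmonofractal}, now using the explicit speed measure $\frak{m}(dr)=r^{-\xi}\mu(dr)$ furnished by \cref{thm:MLBM is markov}, and to reduce every step to an application of the Seiberg bound (\cref{thm:integrability}). First I would settle accessibility: by the accessibility criterion used for the monofractal model (\cref{thm:closed_point}), the origin is accessible iff $\int_0^\delta r\,\frak{m}(dr)=\int_0^\delta r^{1-\xi}\mu(dr)<+\infty$ almost surely. Since \cref{thm:integrability} is stated for an arbitrary real exponent, I would apply it with $\xi$ replaced by $\xi-1$, which gives that this integral is a.s.\ finite precisely when $\xi-1<1+2\gamma^2$, i.e.\ $\xi<2+2\gamma^2$. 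Hence the origin is accessible in cases (I) and (II) and inaccessible in case (III); note that accessibility in case (I) is automatic because $1+2\gamma^2<2+2\gamma^2$.

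Next I would separate regular from exit within the accessible range. By the exit/regular criterion, the origin is exit (resp.\ regular) according as $\frak{m}((0,\delta))=+\infty$ (resp.\ $<+\infty$). But $\frak{m}((0,\delta))=\int_0^\delta r^{-\xi}\mu(dr)$ is exactly the quantity governed by \cref{thm:integrability}, which is a.s.\ finite iff $\xi<1+2\gamma^2$. Therefore the origin is regular when $\xi<1+2\gamma^2$ (case I) and exit when $1+2\gamma^2\le\xi<2+2\gamma^2$ (case II). For the remaining inaccessible case $\xi\ge 2+2\gamma^2$, I would invoke the fact that the MLBM is, by construction, a driftless diffusion in natural scale --- its generator is $\tfrac12\,\tfrac{d}{d\frak m}\,\tfrac{d}{dr}$ with identity scale function, just as the Kraichnan separation process --- so the reasoning behind \cref{thm:natural}, namely \cite[Prop.~16.45]{breiman1992probability}, applies unchanged: an inaccessible boundary of a one-dimensional diffusion with no drift is automatically natural. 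This yields case (III). Throughout, the ``$\P^\gf$-a.s.'' in the statement is obtained by intersecting the two full-probability events on which the relevant Seiberg integrals are finite/infinite.

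I expect the only genuinely delicate point to be the use of the Seiberg bound with the shifted exponent $\xi-1$ in the accessibility step: one must make sure $r^{1-\xi}$ is integrable against $\mu$ all the way down to the origin (not merely bounded away from it), which is exactly what \cref{thm:integrability} provides --- and that the resulting dichotomy holds on a single $\gf$-full-measure set rather than pointwise in $r$. Once this is granted, the rest is a direct transcription of the monofractal argument of \cref{ssec:phasesmonofractal} with $r^{-\xi}\,dr$ replaced by $r^{-\xi}\mu(dr)$, and comparing the thresholds $1+2\gamma^2$ and $2+2\gamma^2$ with those in \cref{thm:phase} immediately gives the claimed shift $\xi\mapsto\xi-2\gamma^2$ relative to the monofractal phase diagram.
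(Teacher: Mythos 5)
Your proposal is correct and follows essentially the same route as the paper's proof: accessibility via \cref{thm:closed_point} combined with the Seiberg bound applied to $\int_0^\delta r^{1-\xi}\mu(dr)$, the regular/exit split via the Seiberg bound applied to $\frak m((0,\delta))=\int_0^\delta r^{-\xi}\mu(dr)$, and \cref{thm:natural} for the inaccessible case. The only difference is cosmetic — you spell out the shift of exponent and the a.s.\ bookkeeping more explicitly than the paper does.
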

\begin{proof}
  From \cref{thm:closed_point},  we first observe that 
    the origin is an accessible boundary point
    if and only if 
    $\int_{0}^{\delta}r^{1-\xi}\gmc(dr) < \infty$,
    for any $\delta>0$. 
    This holds true if and only if $\xi<2 + 2\gamma^2$,
    according to Lemma \ref{thm:integrability}. 
   \cref{thm:natural} then yields phase (III).
   When $\xi<2 + 2\gamma^2$, $0$ is accessible.
   The distinction between (I) and (II) is determined by the property \eqref{eq:exit}, stating that
    0 is regular if and only if 
    $\int_{0}^{\delta}r^{-\xi}\gmc(dr) < \infty$,
    for all $\delta>0$.
    From Lemma \ref{thm:integrability},
    this is true if and only if  
    $\xi<1 + 2\gamma^2$.
 \end{proof}
\subsubsection{Annealed setting}
An annealed version of the MLBM can be defined upon replacing
$\mu(dr) \mapsto \mathbb E^\Gamma \mu(dr) =dr$ in \cref{thm:def_clock_time},
leading to $\mathfrak m \mapsto \mathbb E^\Gamma \frak m(dr) = r^{-\xi}dr$ in \cref{thm:MLBM is markov}. 
This (annealed) averaging over the environment recovers the effective  process of Eq.~\eqref{eq:MBM} accounting for the 
 monofractal Kraichan case. In short, the annealed MLBM describes the separation process in (monofractal) Kraichnan, with boundary behavior prescribed by \cref{thm:phase}.

\begin{figure}[h]
	\includegraphics[width=0.49\textwidth]{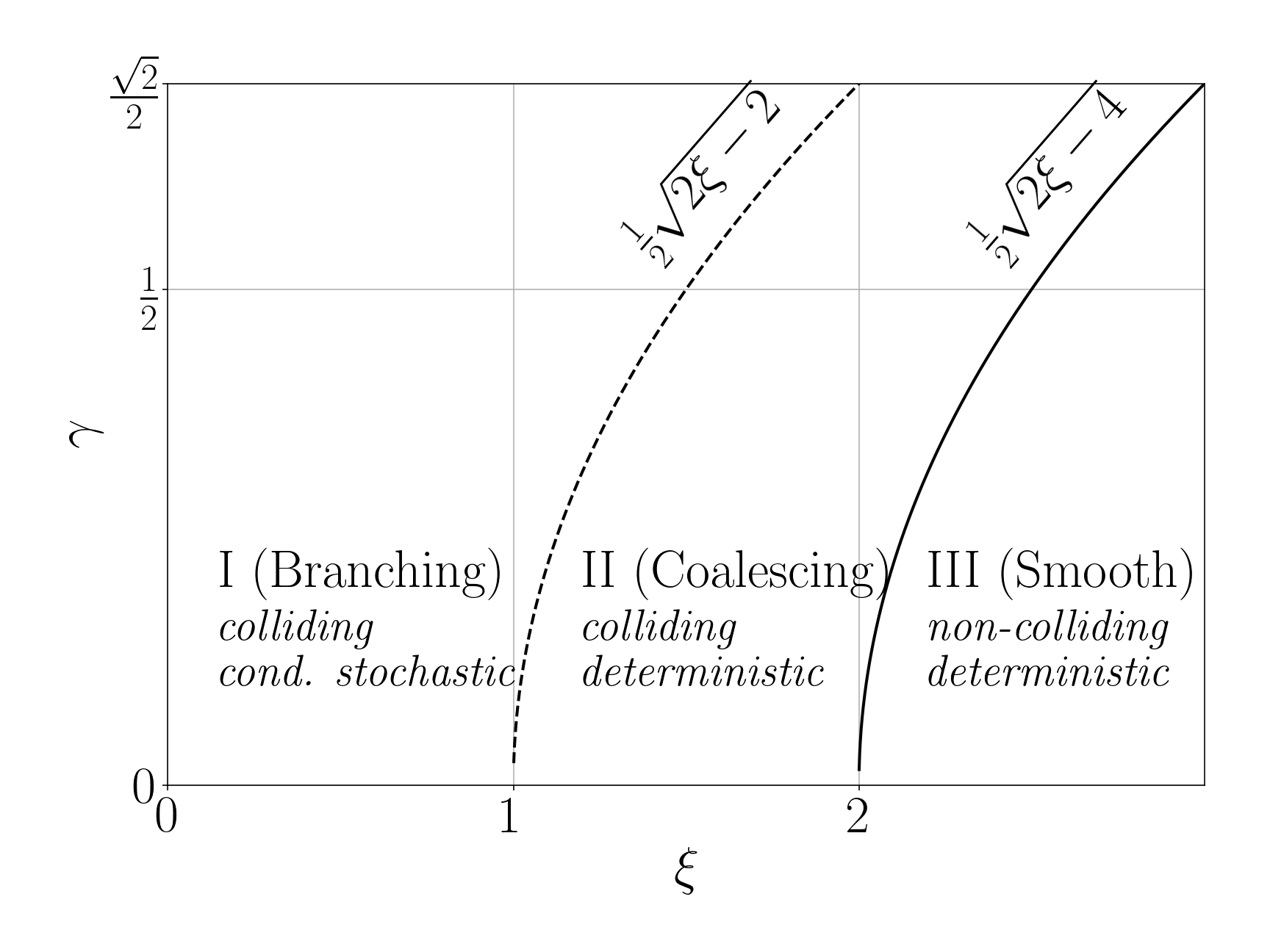}
	\includegraphics[width=0.49\textwidth]{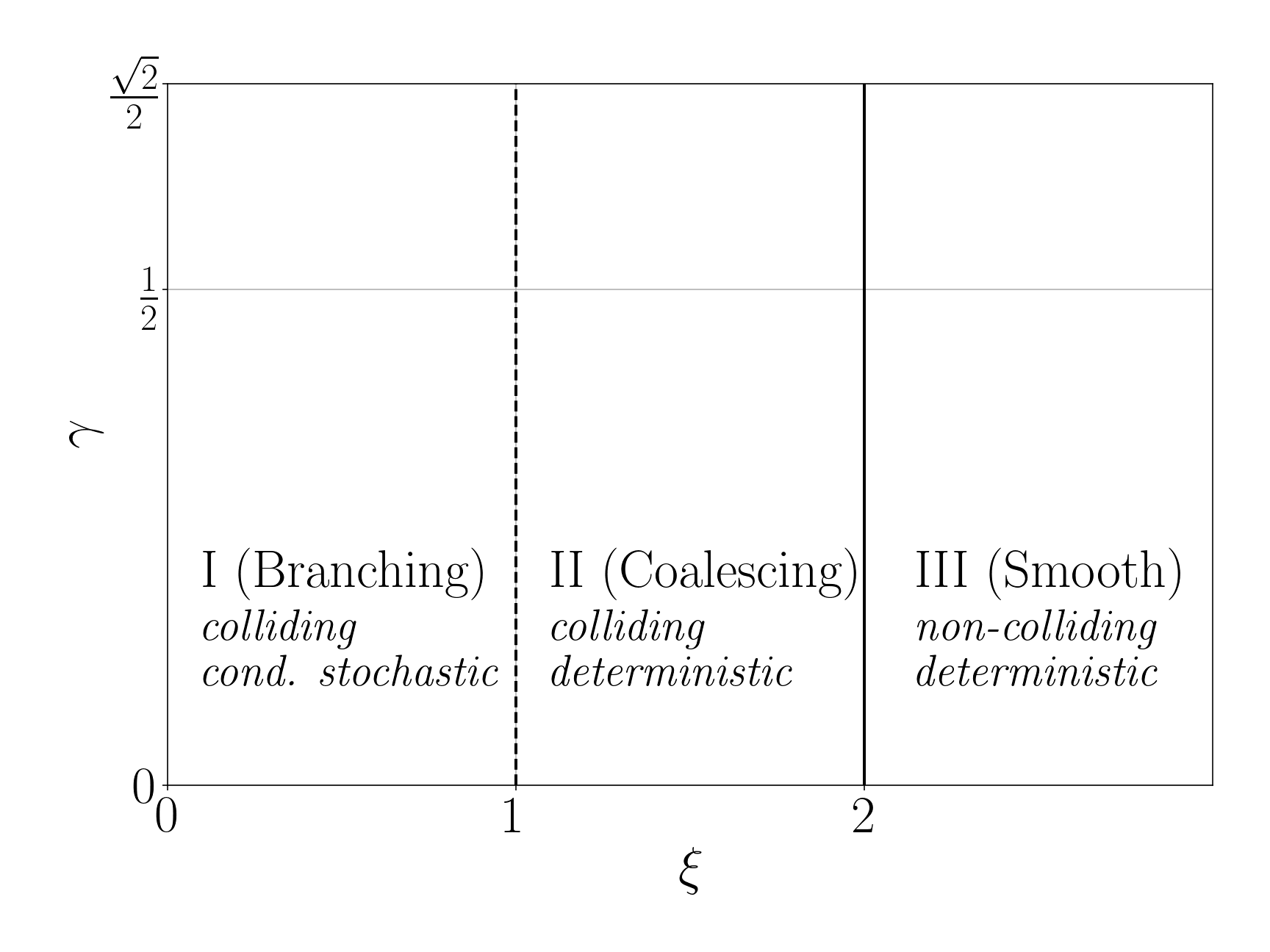}
	\caption{Phases of the Lagrangian flow in the MLBM model. Left: Quenched setting. Right: Annealed setting.}
	\label{fig:MLBM-phases}
	\end{figure}

\subsection{MLBM vs Multifractal Kraichnan (MK)}
Contrasting the separation process in multifractal Kraichnan (MK) to the MLBM, the intermittency induced by the GMC  gives rise to apparently very different effects, both at the level of the phase diagrams and for the multifractal properties of the trajectories.
Remarkably, though, we point out a mapping between  the MK and MLBM phase diagrams (Figs. \ref{fig:phases} and \ref{fig:MLBM-phases}), 
valid for both the quenched and annealed setting.

\subsubsection{Roughening by intermittency }
When intermittency is absent ($\gamma = 0$),  
the multifractal Kraichnan model and the MLBM coincide,  
describing the same multiplicative diffusion process,  
with quadratic variation proportional to $r^\xi$ --- this corresponds to the inter-particle separation  
in the one-dimensional Kraichnan model,  
where the driving parameter is the (deterministic) structure function of the Kraichnan velocity field.
For $\gamma > 0$, the two models give rise to  
fundamentally different processes. 
While intermittency has a smoothing effect   on Lagrangian dispersion in the multifractal Kraichnan model,  
it leads to rougher behavior in the MLBM,   amplifying the explosive nature of trajectories.
This contrast can be quantified by comparing the phase diagrams  
of the two models, shown in Figs.~\ref{fig:phases}  and \ref{fig:MLBM-phases}, respectively.
In the multifractal Kraichnan model, increasing $\gamma$ causes  
the regular region (I) to shrink, while the exit region (II) expands.  
As a result, the system undergoes a transition from phase I to phase II  
as $\gamma$ increases, marking a shift from explosive, spontaneously stochastic behavior,  
to a regime in which trajectories coalesce.
In the MLBM, on the other hand, we observe the opposite effect:   the regular region grows and occupies a larger portion of the phase space. 
Consequently, the phase transition now occurs in the opposite direction,  
from phase II to phase I, indicating that intermittency enhances  
the explosive nature of particle behavior in the MLBM.

\subsubsection{Multifractal behavior}
\label{sssec:MFbehavior}
The difference in the phase diagrams between the MLBM and the MK reflects different couplings with the underlying thick points defined in \cref{ssec:thickpoints}.
This difference comes from the fact that far from boundary, the MK speed measure is absolutely continuous with the Lebesgue mesure, while the MLBM speed measure is not--- and prescribed by the GMC.
As a consequence, the MLBM  spends almost all its time on  the $2\gamma$-thick points of $\gf$, or in other words, the most singular structures give mass to the GMC. This is stated by the following
\begin{proposition}\label{thm:MLBM spends time}
    The MLBM
    spends Lebesgue-almost all its time in $(2\gamma)$-thick points of $\,\,\gf$,
    that is,
    $$
    \P^\gf \text{-} a.s., \;\; \text{for all}\;\; r\in (0,L), \;\;
    \P^B_r\text{-}a.s.,\;\; 
    \text{Leb}\left(\left\{0\leq t\leq T \,\,:\,\, \B(t)\in \T_{2\gamma}^c \right\}\right)=0,
    $$
    where $T$ is the first exit time of $\B$ from $(0,L)$, and $\T_{\alpha}$ is defined in \eqref{eq:thickpoints}.
\end{proposition}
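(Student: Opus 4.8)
The plan is to mimic the argument for \cref{thm: mfk spends time}, but now exploiting the fact that the relevant speed measure is the GMC $\mu$ itself (weighted by $|\rho|^{-\xi}$), which is carried by the $(\pm 2\gamma)$-thick points rather than by typical points. First I would reduce the two-sided statement to the one-sided one: since the MLBM of \cref{def:time_representation} is obtained by reflecting Brownian motion at zero, the process lives on $[0,L]$ up to time $T$, so only the thick points $\T_{+2\gamma}$ (intersected with $(0,L)$) are in play, and it suffices to show $\text{Leb}(\{t\le T: \B(t)\in\T_{2\gamma}^c\cap(0,L)\})=0$. The key structural input is that the amount of (MLBM-)time spent by $\B$ in a Borel set $A\subset(0,L)$ equals $\int_A \ell(\tau(t),\rho)\,\frak m(d\rho)$-type expressions; more precisely, using the time-change $\B(t)=B(\tau(t))$ and the occupation-times formula as in \eqref{eq:clock_occupation}, the MLBM-time elapsed up to exiting $(0,L)$ that is spent in $A$ is controlled by $\int_A \ell(T_B,\rho)|\rho|^{-\xi}\mu(d\rho)$, where $T_B$ is the corresponding exit time of the driving Brownian motion and $\ell$ its local time. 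Taking $A=\T_{2\gamma}^c\cap(0,L)$ and using the carrier property $\mu(\T_{\pm2\gamma}^c)=0$ recalled in \cref{ssec:thickpoints}, this integral vanishes $\P^\gf$-a.s.\ and $\P^B_r$-a.s., which gives the claim.

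The steps, in order, would be: (i) record that under \cref{def:time_representation} the MLBM is supported on $[0,L]$ and reduce to $\T_{2\gamma}$; (ii) write the Lebesgue measure of the random time set $\{0\le t\le T:\B(t)\in A\}$ as an integral against the clock: by the change of variables $t\mapsto\tau(t)$ and the definition of $\C$, this cumulated time equals $\int_0^{T_B}\mathds{1}_{B(s)\in A}\,d\C(s)$, which by \cref{thm:def_clock_time} and the occupation-times formula \cite[Corollary VI.1.6]{revuz2013continuous} is $\int_A \ell(T_B,\rho)\,|\rho|^{-\xi}\mu(d\rho)$; (iii) bound $\ell(T_B,\cdot)$ by its (a.s.\ finite) supremum over $[0,L]$, so the integral is $\le \|\ell(T_B,\cdot)\|_\infty\int_A |\rho|^{-\xi}\mu(d\rho)$; (iv) invoke $\mu(A)=\mu(\T_{2\gamma}^c\cap(0,L))=0$, hence (since $|\rho|^{-\xi}$ is $\mu$-integrable away from $0$ — or simply because the integrand is supported on a $\mu$-null set) $\int_A|\rho|^{-\xi}\mu(d\rho)=0$; (v) conclude the cumulated time is a.s.\ zero. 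The quantifier structure ($\P^\gf$-a.s., then $\P^B_r$-a.s.) is handled by noting the carrier property holds $\P^\gf$-a.s.\ for a fixed GMC realization, after which the Brownian statement is deterministic given that realization.

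The main obstacle I expect is making step (ii) fully rigorous: the time-change identity $\text{Leb}(\{t\le T:\B(t)\in A\})=\int_A\ell(T_B,\rho)|\rho|^{-\xi}\mu(d\rho)$ requires care because $\C$ is a Stieltjes integral against a singular (GMC) measure rather than an absolutely continuous clock, and one must check that the inverse time-change $\tau$ is well-behaved on the exit problem — in particular that $\B$ does not get trapped before exiting $(0,L)$ (which in phase (I), where the statement is substantive, follows from regularity of the origin and the fact that $\frak m$ is locally finite on compacts of $(0,L)$) and that pushing Lebesgue measure through $\tau$ genuinely produces the measure $\ell(T_B,\cdot)|\rho|^{-\xi}\mu(d\rho)$. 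Once this identity is established, the rest is immediate from the GMC carrier property. A secondary, minor point is the a.s.\ finiteness of $\sup_{\rho\in[0,L]}\ell(T_B,\rho)$, which is standard for Brownian local time up to a bounded exit time, and the integrability of $|\rho|^{-\xi}\mu(d\rho)$ near $0$, which is exactly the Seiberg bound \cref{thm:integrability} in the regime $\xi<1+2\gamma^2$ but is in any case irrelevant here since the integrand vanishes $\mu$-a.e.
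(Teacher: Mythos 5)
Your proposal is correct and follows essentially the same route as the paper: the paper's proof (stated to be "similar to that of Proposition \ref{thm: mfk spends time}") likewise reduces the claim to the carrier property $\mu(\T_{2\gamma}^c)=0$ combined with the fact that the MLBM speed measure on $(0,L)$ is $r^{-\xi}\mu(dr)$, using the occupation/Green-function formula of \cref{thm: occupation lemma} to identify the expected time spent in a set with the speed measure of that set. The only cosmetic difference is that the paper works with the expectation identity and then uses nonnegativity to get the a.s.\ statement, whereas you establish the pathwise occupation identity via the time change — both hinge on exactly the same inputs.
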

The proof is similar to that of \cref{thm: mfk spends time}. For the 2D LBM, this result was shown in \citet{rhodes2014heat}. 
For $\gamma>0$, the $2\gamma$-thick points form a fractal set of zero Lebesgue measure  
and are rare and exceptionally singular  from a probabilistic standpoint, reflecting a fundamental difference between the MK and the MLBM.

\subsubsection{MK as a MLBM with parameter $(\xi+4\gamma^2,\gamma)$}
The MK and the MLBM are prescribed in terms of two parameters:  
$\xi$, which characterizes the behavior near the origin,  
and $\gamma$, which controls the level of intermittency.  
Although they exhibit different phase transitions for a given  
pair $(\xi, \gamma)$, the phase transitions of the multifractal Kraichnan model  
can be derived from those of the MLBM  
by performing a suitable change of parameters, valid for both the annealed and the quenched setting.
By applying the mapping $\xi \mapsto \xi + 4\gamma^2$  
in \cref{thm:phases MLBM}, one recovers exactly the phase transitions  
of the multifractal Kraichnan model, 
as stated in \cref{thm:multifractal phases}.
Under this mapping, the underlying Hölder exponent  
increases, and particle motion becomes smoother.  
Consequently, the rougher dynamics of the MLBM  
is transformed into the smoother dynamics  
of the multifractal Kraichnan model.  
In other words, the quenched and annealed phase diagrams of the MLBM with parameters  
$(\xi + 4\gamma^2, \gamma)$  
are the same as that of the multifractal Kraichnan model  
with parameters $(\xi, \gamma)$.
The transformation $\xi \mapsto \xi - 4\gamma^2$  
maps the multifractal Kraichnan model into the MLBM  
in a similar fashion.
In this regard, the correspondence between the multifractal Kraichnan  
model and the MLBM mirrors the relationship between the monofractal  
Kraichnan model and the Bessel process (see \cref{ssec:Besselshortcut}).  
The MLBM emerges as an effective model for inter-particle separation  
in multifractal Kraichnan flows, allowing the phase diagram of the latter  
to be retrieved from the former, while replacing the dynamics  
based on the velocity field \eqref{eq:velocity}  
with an effective diffusion process, only coupled to a random geometry induced by the GMC.
We point out, though, that this correspondance holds at the level of the phase diagram only and does not alter the multifractal coupling with the thick points discussed in \cref{sssec:MFbehavior}.

\section{Concluding remarks}\label{sec:remarks}

In this work, we have investigated the signature of Eulerian intermittency in the statistics of Lagrangian dispersion by analyzing  
a solvable unidimensional multifractal 
extension of the Kraichnan model. 
It is  obtained by imposing a 
white-in-time structure to a 1D version of the multifractal hydrodynamic random field constructed by \cite{robert2008hydrodynamic}, and considering transport in a quasi-Lagrangian setting. This  setting  is a quasi-Lagrangian version of the model previously  investigated with numerics and mean-field heuristics in \cite{considera2023spontaneous}.
It provides a  minimal solvable model  showcasing both the physical effects and the mathematical challenges of incorporating spatial intermittency into the Kraichnan model.
Our main  result is Theorem \ref{thm:multifractal phases} --- 
summarized in Fig.~\ref{fig:phases} --- which formulates a  \emph{smoothing-by-intermittency} effect at the level of the Lagrangian flow.  
Besides, we discussed analogies between the separation process in the
multifractal Kraichnan (MK) model and the theory of Liouville Brownian motion (LBM). 
In particular, we argued that the phase diagram is recovered by a multiplicative version of LBM with parameter $(\xi+4\gamma^2,\xi)$.

The duality between the MK model  and LBM highlights a connection between turbulent transport and diffusion processes in Liouville Quantum Gravity (LQG). 
Let us however observe that  although both turbulence and LQG  
make use of GMC theory to model intermittency,  its role in each context is fundamentally different.
In LQG, intermittency is tied to the deformation of the geometry itself.  
The GMC modifies distances and alters the underlying metric,  
so that diffusion, as described by the MLBM,  takes place within a highly irregular random geometry.
In turbulence, by contrast, intermittency arises from strong fluctuations  in the velocity field and affects how particles are transported.  
Here, the GMC encodes spatial variations in velocity intensity  and governs the dynamics of the MK process.  
This distinction reveals that although GMC is the common  mathematical foundation, its effect on diffusion is mediated  
by the physical nature of the model. In short: in LQG, intermittency   shapes the geometry through which diffusion unfolds,  
whereas in turbulence, it modifies the driving velocity field.   As a result, the multifractal properties of the associated  diffusion processes differ in essential ways.
This perspective naturally invites a reinterpretation  
of intermittency in turbulence as a manifestation  
of a random geometry in its own right.  
Although intermittency in turbulence is not traditionally  
formulated in geometric terms, as it is in the context of LQG,  
the multifractal structure of the velocity field
exhibits features that closely parallel those  
of a random geometric landscape.  This connection may prove of mathematical and physical relevance for modeling turbulent transport:
One could envision a ``turbulent geometry'' in which transport processes  unfold, 
and  speculate that other geometric notions of relevance in LQG --- such as metric tensor, curvature, or geodesics --- have meaningful counterparts in the context of turbulent transport.

Among natural continuations of this work, 
one may wish to increase further the flow 
complexity: Our analysis was conducted in a one-dimensional setting 
and a similar investigation for higher-dimensional models 
remains to be carried out.  The multidimensional case, in particular, 
would allow for an investigation of the  effects of the compressibility degree. 
In this regard, a naive guess suggests that the phase transitions of multidimensional flows 
can be obtained from the monofractal Kraichnan case  
by substituting the roughness exponents $\xi$ 
with their most probable counterpart $\xi_{\text{eff}} = \xi + 2\gamma^2$ and preliminary numerical simulations suggest that this view is essentially correct. 
Another way to increase complexity is to allow for correlations between the GMC and the Brownian sheet entering the RV velocity field. This complication turns out to be  relevant for turbulence modeling as it allows to model the skewness phenomenon \cite{chevillard2019skewed}.

Another research direction is to address multiparticle ($N\ge 3$) dynamics, and implications on scalar transport.
While here we focused on two-particle motion,  it is natural to expect Eulerian intermittency to leave a signature on the geometry of multiparticle Lagrangian clusters.  One difficulty is that \cb the geometric description of a cluster with $N\ge 3$ particles does not  reduce to a unidimensional diffusion process, and requires different stochastic tools other than the speed measure characterization of the underlying process. 
Finally, one may want to bridge the insights of  multifractal Kraichnan flows to turbulent data analysis. Our quenched setting describes a conditioning on typical realizations of Gaussian multiplicative chaos and  extending our analysis to atypical realizations could be instructive. In particular, and from the point of view of turbulence phenomenology, conditioning on Gaussian multiplicative chaos   corresponds to conditioning with respect to the dissipation field. This approach could potentially be applied to the analysis of turbulent data as a mean of extracting scaling laws that might otherwise remain hidden.
\cb

\paragraph{Acknowledgements.}
We thank A. Barlet, J. Bec, A. Cheminet, B. Dubrulle,  A. A. Mailybaev, E. Simonnet \& N. Valade  for continuing discussions. ST acknowledges
support from the French-Brazilian network in Mathematics for  several Southern summer visits at \emph{Instituto de Matem\'atica Pura e Aplicada (IMPA)} and thanks IMPA for support and hospitality during those stays. AC acknowledges support from the \emph{Académie d'Excellence Systèmes Complexes} within  Université Côte d'Azur  for two extended research stays at \emph{Institut de Physique de Nice}. 
\appendix
\section{Collected proofs}
\label{sec:proofs}
In this appendix, we provide the proofs of 
\cref{thm:A bound}, \cref{thm:power-law spectrum A}, 
\cref{prop:Aestimate} and \cref{thm: mfk spends time},
along with the statements of auxiliary lemmas.

\subsection{Proof of \cref{thm:A bound}}
\begin{proof}[Proof of \cref{thm:A bound}]
Without loss of generality, we assume $L=1$ for simplicity. 
The starting point is the expression for the unregularized 
diffusion coefficient $\mA$, obtained by setting $\eta=\kappa=0$ in 
Eq.~\eqref{eq:coefficient kraichnan}.
It is explicitly written as 
\be
	 \mA(r) =  \dfrac{1}{2} \int_\R dz\,\left( \psi(r-z)\dfrac{r-z}{|r-z|^{3/2-\xi/2}}-\psi(-z)\dfrac{-z}{|z|^{3/2-\xi/2}}\right)^2.
\ee
Upon applying the change of variables $z \mapsto r z$, it may be further expressed as 
\be
	\label{eq:cd-bis}
	 \mA(r) = C(r)|r|^\xi, \quad \text{with} \quad C(r) :=  \dfrac{1}{2} \int_\R dz\,\left(  \psi(r(1-z))\dfrac{1-z}{|1-z|^{3/2-\xi/2}}-\psi(-r z)\dfrac{z}{|z|^{3/2-\xi/2}}\right)^2.
\ee
Note that $C(r)$ is strictly positive for $r\in[-1,1]$, 
and continuous by dominated convergence.
The claim then follows by setting 
$C_+ = \sup_{r \in [-1,1]} C(r)$
and 
$C_- = \inf_{r \in [-1,1]} C(r)$, 
and by observing that both $C_+$ and $C_-$ 
are finite and strictly positive by the 
extreme value theorem.
\end{proof}

\subsection{Proof of \cref{thm:power-law spectrum A}}
Let us first state three lemmas  due to \citet{robert2008hydrodynamic}, here  adapted to our notations.
\begin{lemma}
    \cite[Proposition 3.1]{robert2008hydrodynamic}
    Given a positive integer $m$ and $f:\R \to \R$ satisfying
    $$ \int_{\R^{2m}} \left|f(z_1)\right| \cdots \left|f(z_{2m})\right| 
    \left(\prod_{1\le i<j \le 2m} \dfrac{1}{\left(\left|\frac{z_i-z_j}{L} \right| \wedge 1\right)^{4\gamma^2}} \right)
    dz_1\cdots dz_{2m}<+\infty,$$
    we have, for $p\leq 2m$,
    \be
        \E\left[ \left(\int_\R  f(y) \mu(dy) \right)^p\right] = \int_{\R^p} f(z_1)\cdots f(z_p) \left(\prod_{1\le i<j \le p} \dfrac{1}{\left(\left|\frac{z_i-z_j}{L} \right| \wedge 1\right)^{4\gamma^2}} \right) dz_1\cdots dz_{p}.
    \ee
\end{lemma}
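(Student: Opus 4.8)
The plan is to establish the identity by the classical Gaussian regularization argument underlying Kahane's moment computations for multiplicative chaos \cite{kahane1985chaos,rhodes2014gaussian}. Throughout, $p$ is a positive integer with $p\le 2m$, and I abbreviate $k(x,y):=\log_+\!\big(L/|x-y|\big)$, the covariance kernel of $\gf$ in \eqref{eq:log-kernel}; since $e^{4\gamma^2 k(x,y)}=(|x-y|/L\wedge 1)^{-4\gamma^2}$, the claimed formula is equivalently $\E[(\int_\R f\,\mu)^{p}]=\int_{\R^{p}} f(z_1)\cdots f(z_p)\exp\!\big(4\gamma^2\sum_{1\le i<j\le p}k(z_i,z_j)\big)\,dz_1\cdots dz_p$. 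First I would fix a regularizing family $(\gf_\eta)_\eta$ of smooth centered Gaussian fields with covariance $K_\eta$ increasing to $k$ off the diagonal --- for instance Kahane's white-noise decomposition, for which also $K_\eta\le k$ pointwise --- and set $\mu_\eta(dx)=e^{2\gamma\gf_\eta(x)-2\gamma^2 K_\eta(x,x)}\,dx$, which converges weakly to $\mu$ as recalled in \cref{sec:gmc2}. Expanding the $p$-th power, applying Fubini, and using the Gaussian exponential-moment identity $\E\, e^{2\gamma\sum_i \gf_\eta(z_i)}=\exp\!\big(2\gamma^2\sum_{i,j}K_\eta(z_i,z_j)\big)$, the diagonal contribution $\exp\!\big(2\gamma^2\sum_i K_\eta(z_i,z_i)\big)$ cancels exactly against the product of normalizations $e^{-2\gamma^2 K_\eta(z_i,z_i)}$, leaving
$$\E\Big[\Big(\int_\R f\,d\mu_\eta\Big)^{p}\Big]=\int_{\R^{p}}f(z_1)\cdots f(z_p)\,\exp\!\Big(4\gamma^2\!\!\sum_{1\le i<j\le p}\!\!K_\eta(z_i,z_j)\Big)dz_1\cdots dz_p.$$

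It then remains to let $\eta\downarrow 0$ on both sides. On the right, $K_\eta\uparrow k$ off the diagonal, so the integrand converges almost everywhere to $f(z_1)\cdots f(z_p)\prod_{i<j}(|z_i-z_j|/L\wedge 1)^{-4\gamma^2}$; by multilinearity it suffices to treat $f\ge 0$ (the general case follows by writing $f=f^+-f^-$, expanding $\prod_i(f^+-f^-)(z_i)$, and recombining the resulting mixed moments --- each given by the same computation --- via the binomial theorem and the symmetry of the kernel), and then the integrands increase in $\eta$, so monotone convergence applies, the limit being finite by the hypothesis of the lemma. On the left, $\int f\,d\mu_\eta\to\int f\,d\mu$ almost surely by martingale convergence, while running the computation of the previous paragraph with exponent $2m$ and using $K_\eta\le k$ gives $\sup_\eta\E\big[(\int|f|\,d\mu_\eta)^{2m}\big]\le\int_{\R^{2m}}|f(z_1)|\cdots|f(z_{2m})|\prod_{i<j}(|z_i-z_j|/L\wedge 1)^{-4\gamma^2}dz<\infty$, which is precisely the integrability assumption. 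Hence $\{(\int f\,d\mu_\eta)^{p}\}_\eta$ is uniformly integrable for $p\le 2m$, so $\E[(\int f\,d\mu_\eta)^{p}]\to\E[(\int f\,d\mu)^{p}]$; equating the two limits gives the stated identity.

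The only genuinely delicate ingredient is the uniform-in-$\eta$ control of the moments --- equivalently, the $L^{2m}$ bound on $\int|f|\,d\mu_\eta$ --- which is exactly what the $2m$-fold integrability hypothesis buys through the Gaussian moment formula of the first step; everything else is routine monotone/dominated convergence together with the defining martingale (a.s.\ and $L^1$) convergence of the regularized chaos. A minor technical point worth isolating is the choice of regularization: one wants a family for which $K_\eta$ increases to $k$ (for the monotone passage on the right-hand side) and is dominated by $k$, or by $k+C$ (for the uniform $L^{2m}$ bound), together with the standard fact --- used implicitly --- that the weak limit of $\mu_\eta$ does not depend on this choice, so that the identity indeed refers to the GMC measure $\mu$ of \cref{sec:gmc2}.
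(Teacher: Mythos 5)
The paper does not actually prove this lemma: it is imported verbatim (up to notation) from Robert and Vargas \cite[Proposition 3.1]{robert2008hydrodynamic}, so there is no internal proof to compare against. Your reconstruction is the standard Kahane-style argument and is essentially correct: regularize via a $\sigma$-positive (martingale) decomposition so that $K_\eta\uparrow k$ off the diagonal with $K_\eta\le k$, compute the $p$-th moment of $\int f\,d\mu_\eta$ exactly by Fubini/Tonelli and the Gaussian exponential-moment identity (the diagonal terms cancelling the normalization), then pass to the limit --- monotone convergence on the right, martingale convergence plus the uniform $L^{2m}$ bound supplied by the hypothesis on the left. One point to tighten: at the endpoint $p=2m$, boundedness of $\E\big[(\int f\,d\mu_\eta)^{2m}\big]$ alone does not give uniform integrability of the $2m$-th powers; you need the martingale structure, e.g.\ Doob's maximal inequality (which dominates the family by the integrable variable $\sup_\eta(\int f\,d\mu_\eta)^{2m}$) or Doob's $L^p$ martingale convergence theorem, to upgrade the a.s.\ convergence to convergence of the $2m$-th moments. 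For $p<2m$ your uniform-integrability argument is fine as written, and the finiteness of the $p$-fold integral on the right --- which the hypothesis does not directly assert --- then follows either from the established equality or from Lyapunov's inequality once the case $p=2m$ is settled. The reduction of signed $f$ to $f\ge 0$ by multilinearity is likewise standard and correct.
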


\begin{lemma}[Product bound]
	\cite[Lemma 2.4]{robert2008hydrodynamic}
	\label{lemma:productbound}
	Given $\sigma$ a finite positive measure on $\R$,
	$q: \R\times \R \to \R_+$ a symmetric application 	and $m$ a positive integer, the following  inequalities hold
	\be
	\int_{\R^{p}} e^{\sum_{1\le j<k\le p} q(z_j,z_k)} \sigma(dz_1)\cdots\sigma(dz_{p}) \le
	\begin{cases}
	&  \sigma(\R) \left(\sup_{x\in \R}\int_\R e^{mq(x,z)}\sigma(dz) \right)^{2m-1} \hspace{1cm} (p=2m)\\
	&  \sigma(\R) \sup_{x,y\in \R} \left(\int_\R e^{q(y,z)}\sigma(dz) \right) \left(\int_\R e^{mq(x,z)+q(y,z)}\sigma(dz) \right)^{2m-1} \hspace{0.5cm} (p=2m+1)
	\end{cases}
	\ee	
\end{lemma}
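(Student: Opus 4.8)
The plan is to prove both inequalities by iterating Hölder's inequality, the one genuinely useful idea being a \emph{symmetric splitting} of the pair interaction that produces the exponent $m$ (rather than the naive $p-1$) inside the supremum. All the inequalities below are understood in $[0,+\infty]$, so no integrability has to be checked beforehand.

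\textbf{Even case $p=2m$.} First I would symmetrize the exponent, writing $\sum_{1\le j<k\le 2m} q(z_j,z_k)=\tfrac12\sum_{j=1}^{2m}\sum_{k\ne j} q(z_j,z_k)$, so that $e^{\sum_{j<k} q(z_j,z_k)}=\prod_{j=1}^{2m} g_j$ with $g_j:=e^{\frac12\sum_{k\ne j} q(z_j,z_k)}$. Applying the generalized Hölder inequality with $2m$ exponents all equal to $2m$ to $\int_{\R^{2m}}\prod_j g_j\,\prod_i\sigma(dz_i)$ gives
\[
  \int_{\R^{2m}}\prod_{j=1}^{2m} g_j\;\prod_{i}\sigma(dz_i)
  \;\le\;\prod_{j=1}^{2m}\Bigl(\int_{\R^{2m}} g_j^{\,2m}\;\prod_{i}\sigma(dz_i)\Bigr)^{1/2m}.
\]
For fixed $j$ one has $g_j^{\,2m}=\prod_{k\ne j} e^{m q(z_j,z_k)}$, a product over the variables $z_k$, $k\ne j$, of factors each depending only on $z_j$ and that single $z_k$; by Tonelli the integral factorizes and is dominated by a supremum,
\[
  \int_{\R^{2m}} g_j^{\,2m}\,\prod_i\sigma(dz_i)
  =\int_\R\Bigl(\int_\R e^{m q(z_j,z)}\sigma(dz)\Bigr)^{2m-1}\sigma(dz_j)
  \;\le\;\sigma(\R)\,\Bigl(\sup_{x\in\R}\int_\R e^{m q(x,z)}\sigma(dz)\Bigr)^{2m-1}.
\]
Since this is independent of $j$, inserting it into the Hölder estimate yields precisely $\sigma(\R)\bigl(\sup_x\int e^{m q(x,z)}\sigma(dz)\bigr)^{2m-1}$.

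\textbf{Odd case $p=2m+1$.} Here I would single out the last variable $w:=z_{2m+1}$ and, for fixed $w$, absorb the edges incident to it into the tilted measure $\nu_w(dz):=e^{q(z,w)}\sigma(dz)$. By Tonelli,
\[
  \int_{\R^{2m+1}} e^{\sum_{j<k}q(z_j,z_k)}\prod_i\sigma(dz_i)
  =\int_\R\Bigl(\int_{\R^{2m}} e^{\sum_{j<k\le 2m}q(z_j,z_k)}\;\prod_{k\le 2m}\nu_w(dz_k)\Bigr)\sigma(dw),
\]
and the inner integral is bounded by the even case applied with $\nu_w$ in place of $\sigma$:
\[
  \int_{\R^{2m}} e^{\sum_{j<k\le 2m}q(z_j,z_k)}\prod_{k\le 2m}\nu_w(dz_k)
  \;\le\;\Bigl(\int_\R e^{q(z,w)}\sigma(dz)\Bigr)\Bigl(\sup_{x}\int_\R e^{m q(x,z)+q(z,w)}\sigma(dz)\Bigr)^{2m-1}.
\]
Using the symmetry of $q$ and bounding the two $w$-dependent factors uniformly in $w$ by $\sup_y\int e^{q(y,z)}\sigma(dz)$ and $\sup_{x,y}\int e^{m q(x,z)+q(y,z)}\sigma(dz)$, then integrating $\sigma(dw)$ over $\R$ to pick up a factor $\sigma(\R)$, gives the claimed odd-case bound.

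\textbf{Main obstacle.} The only delicate point is obtaining the exponent $m$ in the even case: integrating the variables out one at a time \emph{without} symmetrizing produces $e^{(p-1)q}$ inside the supremum, which is strictly weaker (the claimed bound is an equality when $q$ is constant) and too crude for the use made of it in the proof of \cref{thm:power-law spectrum A}. The symmetric splitting $\sum_{j<k}=\frac12\sum_j\sum_{k\ne j}$ applied \emph{before} invoking Hölder is exactly what restores the sharp exponent; everything afterward — Tonelli to factorize, monotonicity of $t\mapsto e^{t}$ to pass to suprema, and the tilted-measure reduction in the odd case — is routine bookkeeping.
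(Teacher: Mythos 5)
The paper does not prove this lemma itself: it is imported verbatim as \cite[Lemma~2.4]{robert2008hydrodynamic}, and the appendix only restates it as an auxiliary result. So there is no in-paper proof to compare against; judged on its own, your argument is correct and self-contained, and it follows what is essentially the standard (Robert--Vargas) route: the symmetrization $\sum_{j<k}q=\tfrac12\sum_j\sum_{k\ne j}q$ followed by generalized H\"older with all exponents equal to $2m$ is exactly what produces the exponent $m$ rather than $p-1$, and your sharpness check for constant $q$ (using $\binom{2m}{2}=m(2m-1)$) confirms you have the right splitting. The reduction of the odd case to the even case via the tilted measure $\nu_w(dz)=e^{q(z,w)}\sigma(dz)$ is clean, and your remark that everything may be read in $[0,+\infty]$ correctly disposes of the fact that $\nu_w$ need not be finite. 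One cosmetic point: in the last step of the odd case you bound the two $w$-dependent factors by \emph{separate} suprema, which literally yields $\sigma(\R)\,\bigl(\sup_y\int e^{q(y,z)}\sigma(dz)\bigr)\bigl(\sup_{x,y}\int e^{mq(x,z)+q(y,z)}\sigma(dz)\bigr)^{2m-1}$, a quantity that can exceed the joint supremum of the product appearing in the statement. This is harmless: from your pointwise bound one has, for each $w$,
\begin{equation*}
\Bigl(\int_\R e^{q(z,w)}\sigma(dz)\Bigr)\Bigl(\sup_{x}\int_\R e^{mq(x,z)+q(z,w)}\sigma(dz)\Bigr)^{2m-1}
=\sup_{x}\Bigl[\Bigl(\int_\R e^{q(z,w)}\sigma(dz)\Bigr)\Bigl(\int_\R e^{mq(x,z)+q(z,w)}\sigma(dz)\Bigr)^{2m-1}\Bigr],
\end{equation*}
which is dominated by the joint supremum over $(x,y)$ in the statement, so the stated (slightly sharper) form follows with no extra work. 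For the use made of the lemma in the proof of \cref{thm:power-law spectrum A}, where the two factors are bounded separately anyway, the distinction is immaterial.
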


\begin{lemma}[Supremum scaling]
    \cite[Lemma 3.4]{robert2008hydrodynamic}
    \label{lemma:supbound}
    For all $\delta \in [0,\xi)$, there exists $C_\delta>0$ such that, for all $r \in[-L,L]$, the following inequality holds
    \be
    	\sup_{x\in \R}\int_\R \dfrac{\left(\Delta_r\phi(z)\right)^2}{\left(\frac{|x-z|}{L} \wedge 1\right)^\delta} dz \le C_\delta \left(\frac{|r|}{L}\right)^{{\xi-\delta}},
    \ee
    where $\Delta_r\phi(z):=\phi(r-z)-\phi(-z)$
\end{lemma}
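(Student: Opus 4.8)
The plan is to split the integral at the scale $|x-z|\sim|r|$ and, on the inner part, to localise further according to the position of $x$ relative to $\{0,r\}$ — the singular set of $z\mapsto\Delta_r\phi(z)$ — the rescaling $z=|r|v$ then doing most of the work. Throughout I would normalise $L=1$ (the general bound follows by the substitution $(z,r,x)\mapsto L(z,r,x)$) and record that $\phi(w)=\psi(w)\,\operatorname{sgn}(w)\,|w|^{(\xi-1)/2}$ is supported in $[-2,2]$, satisfies $|\phi(w)|\lesssim|w|^{(\xi-1)/2}$, and is $C^1$ away from $0$ with $|\phi'(w)|\lesssim|w|^{(\xi-3)/2}$ near the origin and $|\phi'(w)|\lesssim1$ elsewhere on its support. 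Consequently $\Delta_r\phi(z)=\phi(r-z)-\phi(-z)$ is supported in $\{|z|\le3\}$ and is singular only at $z=0$ and $z=r$.

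First I would dispose of the outer region $\{|x-z|\ge|r|\}$, where no genuine difficulty arises. Since $|r|\le1$, there the weight obeys $(|x-z|\wedge1)^{-\delta}\le|r|^{-\delta}$, whereas on $\{|x-z|<|r|\}$ it equals $|x-z|^{-\delta}$. Using \cref{thm:A bound} in the form $\int_\R(\Delta_r\phi(z))^2\,dz=2\mA(r)\lesssim|r|^{\xi}$, the outer part is immediate:
\begin{equation*}
\int_{|x-z|\ge|r|}\frac{(\Delta_r\phi(z))^2}{(|x-z|\wedge1)^{\delta}}\,dz\;\le\;|r|^{-\delta}\,2\mA(r)\;\lesssim\;|r|^{\xi-\delta}.
\end{equation*}
It would then remain to bound $J(x):=\int_{|x-z|<|r|}(\Delta_r\phi(z))^2\,|x-z|^{-\delta}\,dz$ uniformly in $x$.

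For this I would split according to $\operatorname{dist}(x,\{0,r\})$. If $\operatorname{dist}(x,\{0,r\})\ge3|r|$ — and $J(x)=0$ whenever $x$ lies outside a fixed bounded interval, since then the integrand vanishes — both $-z$ and $r-z$ stay at distance $\gtrsim|r|$ from $0$ on the ball $\{|x-z|<|r|\}$, so $\Delta_r\phi$ is $C^1$ there and the mean value theorem gives $|\Delta_r\phi(z)|\lesssim|r|\big(1+\operatorname{dist}(x,\{0,r\})^{(\xi-3)/2}\big)$; integrating $|x-z|^{-\delta}$ over the ball costs a factor $|r|^{1-\delta}$, and since $3-\xi>0$ and $\operatorname{dist}(x,\{0,r\})\ge3|r|$ one obtains $J(x)\lesssim|r|^{3-\delta}\big(1+\operatorname{dist}(x,\{0,r\})^{\xi-3}\big)\lesssim|r|^{\xi-\delta}$ (this step also uses $\delta<1$, which is automatic when $\xi\le1$). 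If instead $\operatorname{dist}(x,\{0,r\})<3|r|$, then $|z|\lesssim|r|$ and $|r-z|\lesssim|r|$ on $\{|x-z|<|r|\}$, so $|\Delta_r\phi(z)|\lesssim|z|^{(\xi-1)/2}+|r-z|^{(\xi-1)/2}$, and substituting $z=|r|v$, $a=x/|r|$ (with $|a|<4$),
\begin{equation*}
J(x)\;\lesssim\;|r|^{\xi-\delta}\int_{|v-a|<1}\frac{|v|^{\xi-1}+|v-1|^{\xi-1}}{|v-a|^{\delta}}\,dv,
\end{equation*}
and the $v$-integral has only integrable singularities — at $v=0$ and $v=1$ of order $\xi-1>-1$, and at $v=a$ of order $\delta$ — with the delicate case $a\in\{0,1\}$ giving combined order $\xi-1-\delta>-1$ precisely because $\delta<\xi$; hence the $v$-integral is bounded uniformly in $a$.

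Summing the three bounds gives $\sup_{x\in\R}\int_\R(\Delta_r\phi(z))^2\,(|x-z|\wedge1)^{-\delta}\,dz\lesssim|r|^{\xi-\delta}$, which becomes $\le C_\delta(|r|/L)^{\xi-\delta}$ after undoing the normalisation. The step I expect to be the main obstacle is the second sub-case above: near the singular set of $\Delta_r\phi$ the (integrable) singularity of the integrand collides with that of the weight, and one must verify that the rescaling $z=|r|v$ produces an $r$-independent, finite constant — this is exactly where the hypothesis $\delta<\xi$ (together with $\xi>0$) is needed, and it is the substance of \cite[Lemma~3.4]{robert2008hydrodynamic}. The rest is bookkeeping: carrying the cutoff $\psi$ and the edge of $\operatorname{supp}\phi$ through the estimates so that all constants are genuinely uniform in $x\in\R$.
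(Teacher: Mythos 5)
The paper does not actually prove this lemma: it is imported verbatim (``adapted to our notations'') from \cite[Lemma 3.4]{robert2008hydrodynamic}, where it is established in the three-dimensional setting. Your argument is therefore a self-contained one-dimensional proof rather than a variant of one given here, and on the range $\delta\in[0,\min(1,\xi))$ it is correct: the splitting at scale $|x-z|\sim|r|$, the use of \cref{thm:A bound} for the outer region, the mean-value bound when $x$ is at distance $\gtrsim|r|$ from the singular set $\{0,r\}$, and the rescaling $z=|r|v$ near that set are all sound, and the uniformity in $a=x/|r|$ of the final $v$-integral is exactly the point where $\delta<\xi$ enters, as you say.

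The genuine issue is the restriction $\delta<1$, which you flag in Case 1 but whose consequences you understate (it is also needed in Case 2, for the singularity at $v=a$ when $a\notin\{0,1\}$). For $\xi>1$ the statement quantifies over all $\delta\in[0,\xi)$, hence over some $\delta\ge1$, and there your estimate $\int_{|u|<|r|}|u|^{-\delta}\,du\approx|r|^{1-\delta}$ is simply false --- the integral diverges. This is not a repairable step: in one dimension the weight $|x-z|^{-\delta}$ is not locally integrable for $\delta\ge1$, while $(\Delta_r\phi(z))^2$ is continuous and generically bounded below near points $x$ in the interior of its support away from $\{0,r\}$ (e.g.\ $x=L/2$ with $|r|$ small, where $\Delta_r\phi(x)\ne0$ for $\xi\ne1$), so the left-hand side of the lemma is $+\infty$ at such $x$. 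In other words, the lemma as transplanted to $d=1$ fails for $\delta\in[1,\xi)$; the original three-dimensional version is safe because there $|x-z|^{-\delta}$ is locally integrable for all $\delta<3>\xi$. Your proof is thus as good as the statement allows, but a complete treatment should either restrict the lemma to $\delta<\min(1,\xi)$ or explain why the applications (notably the choice $\delta=m\gamma^2$ in the proof of \cref{thm:power-law spectrum A}, which can exceed $1$ when $\xi>1$) stay below this threshold.
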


We can now proceed to the
\begin{proof}[Proof of \cref{thm:power-law spectrum A}]
     Without loss of generality we assume $L=1$. 
    In the case $p=1$, the inequality is a consequence of the property $\E \mu(dz) = dz$ and the estimate 
    \be
        \E^\Gamma  \mA^{(\gamma)}(r) =  \int_\R \left[\phi(r-z) - \phi(-z)\right]^2 dz.
    \ee
    From Eq.~\eqref{eq:cd}, this implies in particular that $\E^\Gamma  \mA^{(\gamma)}(r) \approx r^\xi$ for $ r \in (0,1)$.
    
    Now consider $p$ a positive integer such that $1<p < \frac{2\xi}{\gamma^2}$.
     We recall the notation $\Delta_r \phi(z):=\phi(r-z)-\phi(-z)$  and introduce the shorthand $|\cdot|_\ast = |\cdot|\wedge 1$.
    Combining the definition \eqref{eq:coefficient kraichnan} with $\eta=\kappa=0$ yielding  the unregularized diffusion coefficient
    $\mA^{(\gamma)}$ and  the integrability property \ref{thm:integrability} on moments yields the following expression for the moment of order $p$:
        $$
        \E^\gf\left(\left(\mA^{(\gamma)}(r)\right)^{p}\right)
        = \int_{\R^p}
        \left(\Delta_r\phi(z_1) \dots\Delta_r\phi(z_p)\right)^2
        \prod_{1\leq i<j\leq p} \frac{1}{\left|z_i-z_j\right|_\ast^{\gamma^2}}
        dz_1\dots dz_p. \nonumber
        $$
        Next, we apply the product bound lemma \ref{lemma:productbound}  with the measure $\sigma_r(dz)=(\Delta_r\phi(z))^2 dz$ and $q(x,z)=-\gamma^2 \log_+|x-z|$. From Eq.~\eqref{eq:cd}, it is immediate to    verify that $\sigma_r$ is a finite positive measure on $\R$ with total mass $\int_\R \sigma_r(dz)\approx |r|^\xi$. This yields the bounds
    \be
    \label{eq:Asup}
    \E^\gf\left(\left(\mA^{(\gamma)}(r)\right)^{p}\right)\leq 
        \begin{cases}    
            & |r|^\xi
            \left(\sup_{x\in\R}\int_{\R}
                \frac{\left(\Delta_r\phi(z)\right)^2 \, dz}{\left|x-z\right|_\ast^{m\gamma^2}}
            \right)^{2m-1} \hspace{0.5cm} (p=2m)\\
          & |r|^\xi
    \sup_{x,x'\in\R}
            \int_{\R}  \frac{\left(\Delta_r\phi(z)\right)^2 \,dz}{\left|x'-z\right|_\ast^{\gamma^2}}
                \left(
            \int_{\R}
                    \frac{ \left(\Delta_r\phi(z)\right)^2 \, dz}{\left|x-z\right|_\ast^{m\gamma^2}\left|x'-z\right|_\ast^{\gamma^2}}
            \right)^{2m-1} \hspace{0.5cm} (p=2m+1)
        \end{cases}
    \ee

    Let us now discuss the scaling along even and odd integers separately.
    For even integer $p=2m$
    we can directly apply the supremum scaling lemma \ref{lemma:supbound}
    with $\delta= m\gamma^2 <\xi$, to find a constant $C_p>0$ such that for all $r\in (0,1)$
        \be
        \label{eq:oddb1}
           \E^\gf\left(\left(\mA^{(\gamma)}(r)\right)^{p}\right) \leq C_p |r|^{\xi +(\xi -m\gamma^2)(2m-1)}= C_p|r|^{\zeta_\mA(2m)}
        \ee
    and this provides the desired scaling.
    
    For odd  integer $1<p=2m+1$, we estimate, using sucessively the Cauchy-Schwartz inequality and the supremum scaling lemma  \ref{lemma:supbound}
    \be
        \label{eq:oddb2}
             \int_{\R}
                    \frac{ \left(\Delta_r\phi(z)\right)^2 \, dz}{\left|x-z\right|_\ast^{m\gamma^2}\left|x'-z\right|_\ast^{\gamma^2}} \le
            \left(\int_{\R}
                    \frac{ \left(\Delta_r\phi(z)\right)^2 \, dz}{\left|x-z\right|_\ast^{2m\gamma^2}}
            \int_{\R}
                    \frac{ \left(\Delta_r\phi(z)\right)^2 \, dz}{\left|x'-z\right|_\ast^{2\gamma^2}}\right)^{1/2}\le (C_{2m}C_2)^{1/2}r^{\xi-(m+1)\gamma^2}
    \ee
    We then use the supremum scaling lemma one more time to bound the remaining term in the rhs of Eq.~\eqref{eq:Asup}. This yields
        \be
           \E^\gf\left(\left(\mA^{(\gamma)}(r)\right)^{p}\right) \leq \tilde C_p |r|^{2\xi-\gamma^2+(\xi-(m+1)\gamma^2)(2m-1)}= \tilde C_p|r|^{\zeta_\mA(2m+1)}
        \ee
    Note that the use of  supremum scaling lemma in Eq.~\eqref{eq:oddb1}-\eqref{eq:oddb2} requires to check the inequality $\gamma^2 < \max(\xi,\xi/2,\xi/2m)$. It is indeed verified from the bounds on $p \in  [3 , 2\xi/\gamma^2)$.
\end{proof}
    \subsection{Proof of \cref{prop:Aestimate}}
\begin{proof}[Proof of \cref{prop:Aestimate}]
    Let us first consider the dyadic sequence $r_n=2^{-n}$, for $n\geq 1$.
    Markov's inequality and  the \cref{thm:assumption} yield
    \begin{align}
        \P^\gf\left(\mA^{(\gamma)}(r_n)>r_n^{(\xi+2\gamma^2-\eps)}\right)
        &\leq\frac{1}{2^{-np(\xi+2\gamma^2-\eps)}} \E^\gf\left((\mA^{(\gamma)}(2^{-n}))^p\right)\\
        &=\frac{C}{2^{n(-p(\xi+2\gamma^2-\eps)+\zeta_\mA(p))}},
    \end{align}
    for arbitrarily small $p$.
    In order for the sequence above to be summable,
    we must have $\zeta_\mA(p)/p>(\xi+2\gamma^2-\eps)$.
    Observe that $\zeta_\mA^{\prime}(p)|_{p=0}=\xi+2\gamma^2$,
    therefore we can find $p$ small such that 
    $\zeta_\mA(p)/p>(\xi+2\gamma^2-\eps)$. This ensures that the sequence above is 
    summable. Thus, we can use Borel-Cantelli lemma to conclude that\;
    $\P^\gf$-a.s., \,
    $\displaystyle
    \mA^{(\gamma)}(r_n)\leq  r_n^{\xi+2\gamma^2-\eps},
    $
    for $n$ large enough. 
    The general case follows by considering a general sequence $r_n$ going to $0$
    and approximating it by the dyadic sequence above.

    To deduce the lower bound we follow a similar approach. 
    Markov's inequality and \cref{thm:assumption} lead to
    \begin{align}
        \P^\gf\left(\mA^{(\gamma)}(r_n)<r_n^{(\xi+2\gamma^2+\eps)}\right)&=
        \P^\gf\left(\mA^{(\gamma)}(r_n)^{-p}>r_n^{-p(\xi+2\gamma^2+\eps)}\right)\\
        &\leq\frac{1}{2^{np(\xi+2\gamma^2+\eps)}} \E^\gf\left((\mA^{(\gamma)}(2^{-n}))^{-p}\right)\\
        &=\frac{C}{2^{n(p(\xi+2\gamma^2+\eps)+\zeta_\mA(-p))}}
    \end{align}
    for arbitrarily small $p$. Along the same lines as before, 
    we can find $p$ small such that the sequence above is summable.
    Applying Borel-Cantelli lemma, we deduce that $\P^\gf$-a.s.\;
    $\displaystyle
    \mA^{(\gamma)}(r_n)\geq  r_n^{\xi+2\gamma^2+\eps},
    $
    for $n$ large enough. The general case also follows by approximation.
\end{proof}

\subsection{Proof of \cref{thm: mfk spends time}}
We first state the fundamental lemma characterizing the speed measure
of the multifractal Kraichnan process, which is an immediate consequence 
of \cite{revuz2013continuous}, Corollary VII.3.8. 
Recall that $T$ denotes the first exit time of $R$ out of $(0,L)$.

\begin{lemma} \label{thm: occupation lemma}
    Let $r \in (0,L)$ and let $f$ be a positive  
    Borel function on $[0,L]$. Then, $\P^\gf$-a.s., the following formula holds:
    \begin{equation}\label{eq:green function}
        \mathbb{E}^B_r \left[\int_0^{T} f(R(s)) \, ds \right] = \int_{0}^L G(r,r') f(r') \frak m^{(\gamma)}(dr'),
    \end{equation}
    where $G$ is the Green function for the one-dimensional Laplacian  
    on $[0,L]$ with (vanishing) Dirichlet boundary conditions.  
\end{lemma}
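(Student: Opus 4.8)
The plan is to exploit the time-changed representation of the multifractal Kraichnan process from \cref{thm:multifractal kraichnan}, $R(t)=B(\tau(t))$, and reduce the occupation integral to a purely Brownian computation via a change of variables in the clock. First I would observe that, since $R$ starts at $r\in(0,L)$ and $T$ is the first exit from $(0,L)$, the trajectory has not touched the reflecting boundary at the origin before time $T$; consequently, up to $T$, the reflected Brownian motion entering the time change coincides with a standard Brownian motion $B$ started at $r$, and $T^B:=\tau(T)$ is precisely the first exit of $B$ from $(0,L)$, i.e. the first hitting of $\{0,L\}$. Working $\P^\gf$-a.s. on the event of full measure (guaranteed by \cref{prop:Aestimate}) where $\mA^{(\gamma)}$ is positive and finite on $(0,L)$, the clock $\C(s)=\int_0^s ds'/(2\mA^{(\gamma)}(B_{s'}))$ is absolutely continuous and strictly increasing up to $T^B$, so its inverse $\tau$ is a genuine change of time with $\C(\tau(t))=t$.

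Next I would perform the substitution $t=\C(s)$, $s=\tau(t)$, with $dt=ds/(2\mA^{(\gamma)}(B_s))$ and $f(R(t))=f(B(s))$, which turns the occupation integral into
\begin{equation*}
\int_0^{T} f(R(t))\,dt = \int_0^{T^B}\frac{f(B_s)}{2\mA^{(\gamma)}(B_s)}\,ds.
\end{equation*}
I would then apply the Brownian occupation times formula already invoked in \eqref{eq:tauc-m} to rewrite the right-hand side through the Brownian local time $\ell(y,\cdot)$, obtaining $\int_\R f(y)\,\ell(y,T^B)\,\frak m^{(\gamma)}(dy)$ with $\frak m^{(\gamma)}(dy)=dy/(2\mA^{(\gamma)}(y))$; since the path is confined to $(0,L)$ until $T^B$, the local time is supported on $[0,L]$ and the domain of integration reduces accordingly.

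The final step is to take the expectation $\mathbb{E}^B_r$ and exchange it with the integral over $y$, which is legitimate by the positivity of $f$ via Tonelli, giving $\int_0^L f(y)\,\mathbb{E}^B_r[\ell(y,T^B)]\,\frak m^{(\gamma)}(dy)$. I would then invoke the classical identification of the expected Brownian local time accumulated before exiting $(0,L)$ with the Dirichlet Green function, $\mathbb{E}^B_r[\ell(y,T^B)]=G(r,y)$, which is consistent with the normalization fixed earlier through $-\tfrac12\partial_{rr}G=\delta(r-\cdot)$ and reproduces \eqref{eq:escape} upon setting $f\equiv 1$. This yields the claimed formula \eqref{eq:green function}.

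The main obstacle I anticipate is not any isolated estimate but the careful bookkeeping of the time change together with the reflecting boundary: one must verify that, before the first exit from $(0,L)$, the reflected and non-reflected Brownian motions driving the clock agree, so that $T^B$ is genuinely a two-sided Dirichlet exit time and the Green function carries vanishing boundary conditions at both $0$ and $L$. Once this identification is secured, the remaining steps are routine; indeed the whole statement is an instance of the general occupation formula for regular one-dimensional diffusions in \cite[Corollary VII.3.8]{revuz2013continuous}, to which one may alternatively appeal directly once the speed measure of $R$ has been identified in \cref{thm:mfk is markov}.
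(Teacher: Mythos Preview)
Your argument is correct. The paper, however, does not actually prove this lemma: it simply states that the result ``is an immediate consequence of \cite[Corollary~VII.3.8]{revuz2013continuous}'' and leaves it at that, relying on the identification of the speed measure in \cref{thm:mfk is markov}. You acknowledge this shortcut in your final sentence, so you are aware of both routes.

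What your explicit argument buys is a self-contained derivation that makes transparent \emph{why} the speed measure $\frak m^{(\gamma)}$ appears: it enters through the change of variables $dt = ds/(2\mA^{(\gamma)}(B_s))$ in the clock, followed by the Brownian occupation times formula and the identification $\mathbb{E}^B_r[\ell(y,T^B)] = G(r,y)$. This is precisely the mechanism underlying the Revuz--Yor corollary in the driftless case, so your proof is essentially an unpacking of that reference specialized to the present setting. The paper's approach is more economical but opaque; yours is pedagogically clearer and also handles correctly the one point requiring care---namely that, before exiting $(0,L)$, the reflected and unreflected Brownian motions agree, so the exit time $T^B$ is a genuine two-sided Dirichlet hitting time and the Green function carries the correct boundary conditions.
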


Observe that by taking the indicator function $\ind_E$ of  
some Borel set $E \subset [0,L]$, the left-hand side of  
\eqref{eq:green function} corresponds to the expectation of the  
total time spent by $R$ in the set $E$, up to the exit time $T$, recovering Eq.~\eqref{eq:escape}.
Thus, the speed measure $\frak m^{(\gamma)}$ can indeed be viewed as a measure  
of the time spent by the multifractal Kraichnan process.  
We can now proceed with  the
\begin{proof}[Proof of  \cref{thm: mfk spends time}]
    Since $\frak m^{(\gamma)}$ has a density with respect to the Lebesgue measure 
    in $(0,L)$ , the set of typical points also gives full mass to $\frak m^{(\gamma)}$,
    meaning that, $\P^{\gf}$-a.s., $\frak m^{(\gamma)}(\T_0^c)=0$.
    Thus, by applying \cref{thm: occupation lemma} with $f=\ind_{\T_0^c}$, 
    we deduce that $\P^\gf$-a.s. for all $r\in (0,L)$, 
    $$
    \mathbb{E}^B_r \left[\int_0^{T} \ind_{\T_0^c}(R(s)) \, ds \right] = 0.
    $$
    Since the random variable inside the expectation is non-negative, 
    it follows that $\P^B_r$-a.s., 
    $$
    \int_0^{T} \ind_{\T_0^c}(R(s)) \, ds=0.
    $$
    Similarly, we deduce that
    $\ind_{\T_0^c}(R(s))=0$
    for Lebesgue-almost every $s\in [0,T]$.
    We conclude the proof by observing that this is equivalent 
    to the set in the statement of the proposition having vanishing 
    Lebesgue measure.
\end{proof}

\bibliography{biblio}
\end{document}